\documentclass{article}
\usepackage{bm}

\usepackage{color}
\usepackage{amsmath}
\usepackage{amssymb}\usepackage{amsthm}
\usepackage{amsfonts}
\usepackage{amscd}
\setlength{\textwidth}{15cm}
\setlength{\textheight}{22cm}
\setlength{\oddsidemargin}{-0.5cm}
\setlength{\evensidemargin}{-0.5cm}
\begin{document}
\theoremstyle{plain}
\newtheorem*{ithm}{Theorem}
\newtheorem*{idefn}{Definition}
\newtheorem{thm}{Theorem}[section]
\newtheorem{lem}[thm]{Lemma}
\newtheorem{dlem}[thm]{Lemma/Definition}
\newtheorem{prop}[thm]{Proposition}
\newtheorem{set}[thm]{Setting}
\newtheorem{cor}[thm]{Corollary}
\newtheorem*{icor}{Corollary}
\theoremstyle{definition}
\newtheorem{assum}[thm]{Assumption}
\newtheorem{notation}[thm]{Notation}
\newtheorem{defn}[thm]{Definition}
\newtheorem{clm}[thm]{Claim}
\newtheorem{ex}[thm]{Example}
\theoremstyle{remark}
\newtheorem{rem}[thm]{Remark}
\newcommand{\unit}{\mathbb I}
\newcommand{\ali}[1]{{\mathfrak A}_{[ #1 ,\infty)}}
\newcommand{\alm}[1]{{\mathfrak A}_{(-\infty, #1 ]}}
\newcommand{\nn}[1]{\lV #1 \rV}
\newcommand{\br}{{\mathbb R}}
\newcommand{\dm}{{\rm dom}\mu}
\newcommand{\lb}{l_{\bb}(n,n_0,k_R,k_L,\lal,\bbD,\bbG,Y)}
\newcommand{\Ad}{\mathop{\mathrm{Ad}}\nolimits}
\newcommand{\Proj}{\mathop{\mathrm{Proj}}\nolimits}
\newcommand{\RRe}{\mathop{\mathrm{Re}}\nolimits}
\newcommand{\RIm}{\mathop{\mathrm{Im}}\nolimits}
\newcommand{\Wo}{\mathop{\mathrm{Wo}}\nolimits}
\newcommand{\Prim}{\mathop{\mathrm{Prim}_1}\nolimits}
\newcommand{\Primz}{\mathop{\mathrm{Prim}}\nolimits}
\newcommand{\ClassA}{\mathop{\mathrm{ClassA}}\nolimits}
\newcommand{\Class}{\mathop{\mathrm{Class}}\nolimits}
\newcommand{\diam}{\mathop{\mathrm{diam}}\nolimits}
\def\qed{{\unskip\nobreak\hfil\penalty50
\hskip2em\hbox{}\nobreak\hfil$\square$
\parfillskip=0pt \finalhyphendemerits=0\par}\medskip}
\def\proof{\trivlist \item[\hskip \labelsep{\bf Proof.\ }]}
\def\endproof{\null\hfill\qed\endtrivlist\noindent}
\def\proofof[#1]{\trivlist \item[\hskip \labelsep{\bf Proof of #1.\ }]}
\def\endproofof{\null\hfill\qed\endtrivlist\noindent}

\newcommand{\varphii}{\varphi}
\newcommand{\pgs}{\caP_{\sigma}}
\newcommand{\oo}{{\boldsymbol\varphii}}
\newcommand{\caA}{{\mathcal A}}
\newcommand{\caB}{{\mathcal B}}
\newcommand{\caC}{{\mathcal C}}
\newcommand{\caD}{{\mathcal D}}
\newcommand{\caE}{{\mathcal E}}
\newcommand{\caF}{{\mathcal F}}
\newcommand{\caG}{{\mathcal G}}
\newcommand{\caH}{{\mathcal H}}
\newcommand{\caI}{{\mathcal I}}
\newcommand{\caJ}{{\mathcal J}}
\newcommand{\caK}{{\mathcal K}}
\newcommand{\caL}{{\mathcal L}}
\newcommand{\caM}{{\mathcal M}}
\newcommand{\caN}{{\mathcal N}}
\newcommand{\caO}{{\mathcal O}}
\newcommand{\caP}{{\mathcal P}}
\newcommand{\caQ}{{\mathcal Q}}
\newcommand{\caR}{{\mathcal R}}
\newcommand{\caS}{{\mathcal S}}
\newcommand{\caT}{{\mathcal T}}
\newcommand{\caU}{{\mathcal U}}
\newcommand{\caV}{{\mathcal V}}
\newcommand{\caW}{{\mathcal W}}
\newcommand{\caX}{{\mathcal X}}
\newcommand{\caY}{{\mathcal Y}}
\newcommand{\caZ}{{\mathcal Z}}
\newcommand{\bbA}{{\mathbb A}}
\newcommand{\bbB}{{\mathbb B}}
\newcommand{\bbC}{{\mathbb C}}
\newcommand{\bbD}{{\mathbb D}}
\newcommand{\bbE}{{\mathbb E}}
\newcommand{\bbF}{{\mathbb F}}
\newcommand{\bbG}{{\mathbb G}}
\newcommand{\bbH}{{\mathbb H}}
\newcommand{\bbI}{{\mathbb I}}
\newcommand{\bbJ}{{\mathbb J}}
\newcommand{\bbK}{{\mathbb K}}
\newcommand{\bbL}{{\mathbb L}}
\newcommand{\bbM}{{\mathbb M}}
\newcommand{\bbN}{{\mathbb N}}
\newcommand{\bbO}{{\mathbb O}}
\newcommand{\bbP}{{\mathbb P}}
\newcommand{\bbQ}{{\mathbb Q}}
\newcommand{\bbR}{{\mathbb R}}
\newcommand{\bbS}{{\mathbb S}}
\newcommand{\bbT}{{\mathbb T}}
\newcommand{\bbU}{{\mathbb U}}
\newcommand{\bbV}{{\mathbb V}}
\newcommand{\bbW}{{\mathbb W}}
\newcommand{\bbX}{{\mathbb X}}
\newcommand{\bbY}{{\mathbb Y}}
\newcommand{\bbZ}{{\mathbb Z}}
\newcommand{\mfp}{{\mathfrak p}}
\newcommand{\mfq}{{\mathfrak q}}
\newcommand{\str}{^*}
\newcommand{\lv}{\left \vert}
\newcommand{\rv}{\right \vert}
\newcommand{\lV}{\left \Vert}
\newcommand{\rV}{\right \Vert}
\newcommand{\la}{\left \langle}
\newcommand{\ra}{\right \rangle}
\newcommand{\ltm}{\left \{}
\newcommand{\rtm}{\right \}}
\newcommand{\lcm}{\left [}
\newcommand{\rcm}{\right ]}
\newcommand{\ket}[1]{\lv #1 \ra}
\newcommand{\bra}[1]{\la #1 \rv}
\newcommand{\kl}[1]{\bm {#1}}
\newcommand{\hln}[1]{\hat\Lambda_{#1}}
\newcommand{\lmk}{\left (}
\newcommand{\rmk}{\right )}
\newcommand{\al}{{\mathcal A}}
\newcommand{\md}{M_d({\mathbb C})}
\newcommand{\ainn}{\mathop{\mathrm{AInn}}\nolimits}
\newcommand{\id}{\mathop{\mathrm{id}}\nolimits}
\newcommand{\Tr}{\mathop{\mathrm{Tr}}\nolimits}
\newcommand{\Ran}{\mathop{\mathrm{Ran}}\nolimits}
\newcommand{\Ker}{\mathop{\mathrm{Ker}}\nolimits}
\newcommand{\Aut}{\mathop{\mathrm{Aut}}\nolimits}
\newcommand{\spn}{\mathop{\mathrm{span}}\nolimits}
\newcommand{\Mat}{\mathop{\mathrm{M}}\nolimits}
\newcommand{\UT}{\mathop{\mathrm{UT}}\nolimits}
\newcommand{\DT}{\mathop{\mathrm{DT}}\nolimits}
\newcommand{\GL}{\mathop{\mathrm{GL}}\nolimits}
\newcommand{\spa}{\mathop{\mathrm{span}}\nolimits}
\newcommand{\supp}{\mathop{\mathrm{supp}}\nolimits}
\newcommand{\rank}{\mathop{\mathrm{rank}}\nolimits}
\newcommand{\idd}{\mathop{\mathrm{id}}\nolimits}
\newcommand{\ran}{\mathop{\mathrm{Ran}}\nolimits}
\newcommand{\dr}{ \mathop{\mathrm{d}_{{\mathbb R}^k}}\nolimits} 
\newcommand{\dc}{ \mathop{\mathrm{d}_{\cc}}\nolimits} \newcommand{\drr}{ \mathop{\mathrm{d}_{\rr}}\nolimits} 
\newcommand{\zin}{\mathbb{Z}}
\newcommand{\rr}{\mathbb{R}}
\newcommand{\cc}{\mathbb{C}}
\newcommand{\ww}{\mathbb{W}}
\newcommand{\nan}{\mathbb{N}}\newcommand{\bb}{\mathbb{B}}
\newcommand{\aaa}{\mathbb{A}}\newcommand{\ee}{\mathbb{E}}
\newcommand{\pp}{\mathbb{P}}
\newcommand{\wks}{\mathop{\mathrm{wk^*-}}\nolimits}
\newcommand{\mk}{{\Mat_k}}
\newcommand{\mnz}{\Mat_{n_0}}
\newcommand{\mn}{\Mat_{n}}
\newcommand{\dist}{\dc}
\newcommand{\braket}[2]{\left\langle#1,#2\right\rangle}
\newcommand{\ketbra}[2]{\left\vert #1\right \rangle \left\langle #2\right\vert}
\newcommand{\abs}[1]{\left\vert#1\right\vert}
\newtheorem{nota}{Notation}[section]
\def\qed{{\unskip\nobreak\hfil\penalty50
\hskip2em\hbox{}\nobreak\hfil$\square$
\parfillskip=0pt \finalhyphendemerits=0\par}\medskip}
\def\proof{\trivlist \item[\hskip \labelsep{\bf Proof.\ }]}
\def\endproof{\null\hfill\qed\endtrivlist\noindent}
\def\proofof[#1]{\trivlist \item[\hskip \labelsep{\bf Proof of #1.\ }]}
\def\endproofof{\null\hfill\qed\endtrivlist\noindent}
%%%%%%%%%%%%%%%%%%%%%%%%%%%%%%
\newcommand{\spp}{ \mathfrak S_{\mfp_1,\mfp_2,+1}^{(N)} }
\newcommand{\sppe}{\bbE\lmk  \mathfrak S_{\mfp_1,\mfp_2,+1}^{(N)} \rmk}

\newcommand{\Ln}[1]{\Lambda_{#1}}
\newcommand{\ZZ}{\bbZ_2\times\bbZ_2}
\newcommand{\SSS}{\mathcal{S}}
\newcommand{\cs}{S}
\newcommand{\ct}{t}
\newcommand{\hS}{S}
\newcommand{\vv}{{\boldsymbol v}}
\newcommand{\ala}{a}
\newcommand{\bet}{b}
\newcommand{\gam}{c}
\newcommand{\alphas}{\alpha}
\newcommand{\alphai}{\alpha^{(\sigma_{1})}}
\newcommand{\alphan}{\alpha^{(\sigma_{2})}}
\newcommand{\betas}{\beta}
\newcommand{\betai}{\beta^{(\sigma_{1})}}
\newcommand{\betan}{\beta^{(\sigma_{2})}}
\newcommand{\alphass}{\alpha^{{(\sigma)}}}
\newcommand{\uu}{V}
\newcommand{\vp}{\varsigma}
\newcommand{\vpr}{R}
\newcommand{\tg}{\tau_{\Gamma}}
\newcommand{\sgg}{\Sigma_{\Gamma}^{(\sigma)}}
\newcommand{\nh}{t28}
\newcommand{\rk}{6}
\newcommand{\nii}{2}
\newcommand{\nhh}{28}
\newcommand{\sjt}{30}
\newcommand{\sjtg}{30}
\newcommand{\bcg}{\caB(\caH_{\alpha})\otimes  C^{*}(\Sigma)}
\title{Type of local von Neumann algebras in abelian quantum double model}
\author{Yoshiko Ogata}
\maketitle
\begin{abstract}
We show that the local von Neumann algebras on convex areas of the frustration-free ground state
of abelian quantum double models are of type $II_\infty$.
\end{abstract}
\section{Introduction}\label{intro}
The quantum double models were introduced by Kitaev in \cite{kitaev2003fault}.
They have substantial importance as the examples of topological phases.
An astonishing property of these models is the existence of quasi-particles, called anyons. 
Such anyons can be mathematically formulated as superselection sectors
\cite{N2}\cite{FN}\cite{CNN2}\cite{NaOg}\cite{ogata2022derivation}.
In this sense, the mathematical structures of
topological phases looks quite similar to that of algebraic quantum field theory (AQFT)
\cite{BDMRS}\cite{DHRI}\cite{BF}.
In spite of such similarity, there is some significant difference.
In this paper, we consider the local von Neumann algebra on convex cone areas of the frustration-free ground state of abelian quantum double models.
We show they are of type $II_\infty$.
This is in contrast to the situation of AQFT, where all
the local algebras are of type $III$.

Let us introduce the quantum double model. See \cite{FN} for a more detailed description.
Let $G$ be a finite group.
We denote by $\hat \bbZ^2$ the set of all edges of $\bbZ^2$.
All the horizontal edges have an orientation from left to right and 
all the vertical edges have an orientation from down to up. 
We denote by $v_{e,-1}$, $v_{e,+1}$, the vertices corresponding to the 
origin and target of the edge $e$, with respect to this orientation.

By a path of edges, we mean a sequence
of edges associated with direction
\[
{\mathfrak p}=(e_1,{\sigma_1})(e_2,{\sigma_2})\cdots(e_m,\sigma_m)
\]
such that 
\[
v_{e_{i}, \sigma_i}=v_{e_{i+1}, -\sigma_{i+1}},\quad i=1,\ldots, m-1.
\]
Here $e_i\in \hat \bbZ^2$ and  $\sigma_i=+1$ (resp. $\sigma_i=-1$) if, along the path,
we proceed in the direction (resp. opposite direction) of the given orientation of $e_i$.
We call $v_{e_1,-\sigma_1}$ the origin and $v_{e_m,\sigma_m}$ the target of $\mfp$.
The path $\mfp$ is self-avoiding if all of $v_{e_i,\pm 1}$ are different.
It is a loop if all the $v_{e_i,\pm 1}$ are different except for $v_{e_1,-\sigma_1}=v_{e_m,\sigma_m}$.

For a path $\mfp$, the symbol $\mfp^{-1}$ represents
the inverse path of $\mfp$.
If the target of $\mfp_1$ is the same as the origin of $\mfp_2$,
then $\mfp_1\mfp_2$ indicates the path obtained by connecting them.

%For $\Lambda\subset \bbZ^2$, 
%$\hat \Lambda\subset \hat \bbZ^2$ indicates the set of all edges $e\in \hat \bbZ^2$ such that
%both of the vertices in $e$ are in $\Lambda$
%i.e., $v_{e, +1}, v_{e, -1}\in \Lambda$.
For $\Lambda\subset \bbZ^2$, let $\caP_\Lambda$ be the set of squares included in $\Lambda$
.We set
$\tilde \Lambda:=\cup_{p\in \caP_\Lambda}\{ \text{edges in } p\}$, the set of all edges
forming a part of the squares in $\Lambda$.
The boundary of $\Lambda\subset \bbZ^2$ is 
\begin{align}
\partial\Lambda:=
\left\{
v\in \Lambda\mid 
\text{there exists } \;\;e\in \hat \bbZ^2 \;\;\text{such that}\;\;
\lmk v_{e,-1}=v,\text{and}\;v_{e,+1}\in \Lambda^c \rmk
\;\;\text{or}\;\;
\lmk v_{e,+1}=v,\text{and}\;v_{e,-1}\in \Lambda^c\rmk
\right\}.
\end{align}
For all $S\subset\bbZ^2$, $T\subset\hat\bbZ^2$,
$G^{S}$, $G^{T}$ denote the set of all the configulations
on $S$, $T$ respectively.
If $S_1\subset S_2$ and $\bm h\in G^{S_2}$, we set
$\bm h_{S_1}:=\bm h\vert_{S_1}$.

For each path $\mfp=(e_1,{\sigma_1})(e_2,{\sigma_2})\cdots(e_m,\sigma_m)$
where $e_i\in \bbZ^2$ with its direction $\sigma_i=\pm 1$
and $\bm h_{\mfp}=(h_e)_{e\in\mfp }\in G^\mfp$
we set
\begin{align}
\Psi_\mfp(\bm h_\mfp)=h_{e_1}^{\sigma_1}h_{e_2}^{\sigma_2}\cdots h_{e_m}^{\sigma_m}.
\end{align}

For each $(a,b)\in\bbZ^2$, we denote by $S_{(a,b)}$, the square
$(a,b)-(a+1,b)-(a+1,b+1)-(a,b+1)$.
We denote by $\mfp_{S_{(a,b)}}$, the path of edges
obtained by going around $S_{(a,b)}$ in the order
$(a,b)-(a+1,b)-(a+1,b+1)-(a,b+1)$.
We say a configuration $\bm g_{\hat \Lambda}\in G^{\hat\Lambda}$ on $\hat\Lambda\subset \hat\bbZ^2$
is admissible if for any square $S_{(a,b)}=(a,b)-(a+1,b)-(a+1,b+1)-(a,b+1)$ 
formed by $4$-edges in $\hat\Lambda$, we have
\[
g_{(a,b)-(a+1,b)}g_{(a+1,b)-(a+1,b+1)}=g_{(a,b)-(a,b+1)}g_{(a,b+1)-(a+1,b+1)}.
\]
The set of all admissible configurations  on $\hat \Lambda\subset \hat\bbZ^2$ is denoted by
$C_{\hat\Lambda}$.

For $\hat S\subset\hat\bbZ^2$, set
$\caB_{\hat S}:=\bigotimes_{\hat S} B(l^2(G))$.
Here $B(l^2(G))$ denotes the set of all bounded linear operators on $l^2(G)$
with CONS $\{\ket{g}\mid g\in G\}$.
For finite $S\subset\hat\bbZ^2$ and $\bm k\in G^S$,
we set $\ket {\bm k}:=\otimes_{e\in S}\ket{k_e}$.
The algebra, $\caB:=\caB_{\hat \bbZ^2}$
is our $2$-dimensional quantum spin system.
For $S_1\subset S_2$,
$\caB_{S_1}$ can be regarded as a sub $C^*$-algebra of $\caB_{S_2}$
naturally. The left regular representation of $G$ on $\bbC^{|G|}$ is denoted by $L_g$ and 
the right regular representation of $G$ on $\bbC^{|G|}$ is denoted by $R_g$:
\begin{align}
\begin{split}
L_g\ket{h}=\ket{gh},\quad
R_g\ket{h}=\ket{h g^{-1}},\quad g,h \in G.
\end{split}
\end{align}

For each vertex $v\in\bbZ^2$, we set
\begin{align}
A_v^{(g)}:=
\bigotimes_{e\in\hat \bbZ^2: v_{e,-1}=v}L_{g}^{e}
\bigotimes_{e\in \hat\bbZ^2: v_{e,+1}=v}R_{g^{-1}}^{e}.
\end{align}
We then set
\begin{align}
A_v:=\frac{1}{|G|}\sum_{g\in G}A_v^{(g)}.
\end{align}
For each square (plaquette) $p$, we set
\begin{align}
B_{p}:=\sum_{\bm k\in C_{\hat p}}\ket{\bm k}\bra{\bm k}.
\end{align}
From \cite{FN}, we know the following:
\begin{prop}[Proposition 2.1 \cite{FN} ]\label{fnu}
There exists a unique state $\omega_0$ on $\caB$ which satisfies
$\omega_0(A_v)=\omega_0(B_p)=1$
for all vertex $v$ and squares $p$ in $\bbZ^2$.
\end{prop}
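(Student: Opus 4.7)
The plan is to split the claim into existence and uniqueness, with uniqueness being the substantive part.

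For existence, first verify that the $A_v$ and $B_p$ are mutually commuting orthogonal projections. Since $G$ is abelian, the operators $\{A_v^{(g)}\}_{g\in G}$ form a unitary representation of $G$, so $A_v$ is the projection onto the $G$-invariant subspace at $v$; $B_p$ is manifestly a sum of orthogonal rank-one diagonal projections; and $[A_v,B_p]=0$ because simultaneous left/right multiplication at $v$ preserves admissibility on every plaquette through $v$. For any finite $\Lambda$ the joint $+1$-eigenspace of $\{A_v,B_p\}_{v,p\in\Lambda}$ is nonempty, so one takes finite-volume states supported there and passes to a weak-$*$ accumulation point as $\Lambda\uparrow\bbZ^2$, obtaining a state $\omega_0$ with $\omega_0(A_v)=\omega_0(B_p)=1$ for all $v,p$.

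For uniqueness, let $\omega$ be any such state. Since each $A_v^{(g)}$ is unitary, $|\omega(A_v^{(g)})|\le 1$, and the identity $\omega(A_v)=|G|^{-1}\sum_g \omega(A_v^{(g)})=1$ forces $\omega(A_v^{(g)})=1$ for every $g$. Cauchy--Schwarz then gives $\omega(A_v^{(g)}X)=\omega(XA_v^{(g)})=\omega(X)$ and hence $\omega\circ\Ad A_v^{(g)}=\omega$ for all $v,g$. Likewise $\omega(B_p)=1$ combined with $B_p$ being a projection yields $\omega(B_pXB_p)=\omega(X)$ for all $X$.

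It remains to show these invariances pin $\omega$ down on a generating family of local matrix units. For finite $\hat\Lambda\subset\hat\bbZ^2$ and $\bm g,\bm h\in G^{\hat\Lambda}$, embed $\hat\Lambda$ in a larger $\hat\Lambda'$ and decompose $\ketbra{\bm g}{\bm h}$ as a sum of matrix units on $\hat\Lambda'$. The $B_p$-invariance applied at every plaquette inside $\hat\Lambda'$ kills all terms in which the bra or ket is not admissibly extended. Averaging $\omega$ over all interior vertex gauge transformations (using $\omega\circ\Ad A_v^{(g)}=\omega$), together with character orthogonality for $G$, then reduces the value to one indexed only by the gauge orbit of the extended admissible pair $(\bm g,\bm h)$. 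Matching against the explicit values coming from the existence construction then yields $\omega=\omega_0$.

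The main obstacle is this last orbit-reduction: one has to verify that gauge equivalence on a sufficiently large enclosing region is the correct equivalence, i.e., that the quotient of admissible configurations by interior gauge is small enough that positivity and normalization of $\omega$ pin down the residual values uniquely. Planarity of $\bbZ^2$ enters essentially here, since every loop is contractible; on surfaces of higher genus one would recover additional topological sectors and uniqueness would fail, which is precisely the usual topological degeneracy of the quantum double model.
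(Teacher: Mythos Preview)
The paper does not give its own proof of this proposition; it is quoted from \cite{FN}, and uniqueness in particular is taken as a black box. What the paper does instead (Section~\ref{coffe}) is prove \emph{existence} constructively by writing down an explicit formula: the state $\varphi_0$ is defined on each $\caB_{\hat\Lambda_N}$ by~(\ref{vpdef}), consistency under $N\mapsto N+1$ is checked by counting admissible extensions, and $\varphi_0(A_v)=\varphi_0(B_p)=1$ is verified directly (Lemmas~\ref{tad} and the one following it). The uniqueness clause, borrowed from \cite{FN}, then identifies $\varphi_0=\omega_0$.

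Your existence argument via weak-$*$ accumulation is standard and correct, but different from the paper's route; the paper needs the explicit formula later, which is why it proceeds that way. Your uniqueness outline is along the right lines and is essentially the argument of \cite{FN}. Two remarks. First, you invoke $G$ abelian where it is not needed: $g\mapsto A_v^{(g)}$ is a unitary representation and $A_v$ a projection for any finite $G$, and Proposition~\ref{fnu} is stated and used in the paper without the abelian hypothesis; your ``character orthogonality'' should be replaced by plain averaging over the interior gauge group. Second, the step you yourself flag as the main obstacle really is the entire content of uniqueness: after reducing to gauge orbits you still need that two admissible configurations on a large box agreeing on the boundary are interior-gauge equivalent. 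This is true (flatness on a simply connected region; it is exactly what underlies formula~(\ref{vpdef})), but your sketch stops just short of carrying it out.
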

Now we are ready to state our main theorem.
\begin{thm}\label{main}Suppose $G$ is abelian.
Let $(\caH_0,\pi_0,\Omega_0)$ be the GNS triple of $\omega_0$ in Proposition \ref{fnu}.
Let $\Gamma\subset \bbR^2$ be a convex cone.
Then the local von Neumann algebra
$\pi_0\lmk \caB_{\widetilde{\lmk \Gamma\cap \bbZ^2\rmk}}\rmk''$ is a type $II_\infty$ factor.
\end{thm}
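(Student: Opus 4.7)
The plan is to exploit the commuting-projector stabilizer structure of the abelian quantum double state together with the convex geometry of $\Gamma$ to establish the three defining properties of a type $II_\infty$ factor: a faithful normal semifinite trace of infinite total mass, trivial center, and the presence of nontrivial finite projections.

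First, using frustration-freeness and mutual commutation of all $A_v$ and $B_p$, I would write down the reduced density matrix $\rho_{\Lambda_n}$ of $\omega_0$ on $\caB_{\tilde\Lambda_n}$ for an exhausting sequence of finite convex truncations $\Lambda_n \nearrow \Gamma\cap\bbZ^2$. For abelian $G$, $\rho_{\Lambda_n}$ is (up to boundary corrections) a multiple of the interior code-space projector $\prod_{v\in\Lambda_n^\circ}A_v\cdot\prod_{p\in\Lambda_n^\circ}B_p$, i.e.\ a maximally mixed state on a subspace whose dimension scales like $|G|^{|\partial\Lambda_n|}$ by the standard area-law count for stabilizer states. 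This description shows that $\omega_0\vert_{\caB_{\widetilde{\Gamma\cap\bbZ^2}}}$, restricted to each $\caB_{\tilde\Lambda_n}$, is tracial after rescaling by the growing normalization, and the inductive limit in the GNS representation carries a faithful normal semifinite tracial weight with infinite total mass --- the defining feature of type $II_\infty$ as opposed to type $II_1$, for which the code dimension would have to be uniformly bounded. Finite projections are then readily produced as sufficiently localized spectral projections of products of the $A_v$ and $B_p$.

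Second, for factoriality I would analyze the center via a Haag-duality-type argument for cones: any central element of $\pi_0(\caB_{\widetilde{\Gamma\cap\bbZ^2}})''$ must commute with the whole cone algebra and so, modulo finite-range tails, lie in the complementary cone algebra $\pi_0(\caB_{(\widetilde{\Gamma\cap\bbZ^2})^c})''$. For abelian $G$ the only candidates for a nontrivial intersection are ``infinite string operators'' along the two bounding rays of $\Gamma$, parameterized by $\hat G\times\hat G$; convexity of $\Gamma$ lets one deform such strings by a finite sequence of local gauge moves (absorbing them into products of $A_v$'s and $B_p$'s stabilizing $\Omega_0$) into operators that act as scalars on $\Omega_0$, forcing the center to consist of scalars. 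Absence of minimal projections, ruling out type $I$, then follows immediately from the infinite UHF substructure of $\caB_{\widetilde{\Gamma\cap\bbZ^2}}$ together with the trace just constructed.

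The main obstacle is the factoriality step: the convexity-based deformation of the potential boundary-ray string central elements is the geometric heart of the proof. It requires a precise Haag-duality statement for cones in this stabilizer model, together with a careful argument that strings travelling along either of the two infinite bounding rays of $\Gamma$ can always be truncated and transported into the interior without leaving the commuting-projector kernel at $\Omega_0$; controlling these ``string tails at infinity'' uniformly in the direction of the ray is the delicate technical point.
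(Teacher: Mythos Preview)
Your trace construction is close in spirit to what the paper does, but the overall architecture of your argument diverges sharply from the paper's, and your hardest step is precisely the one the paper avoids.

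The paper does \emph{not} prove factoriality (or the exclusion of type $I$) by hand. Instead it invokes two external inputs for the abelian model from \cite{FN}---Haag duality for cones and the existence of a nontrivial superselection sector---and then applies a general structural lemma (Lemma~5.5 of \cite{ogata2022derivation}) which, from those two inputs, forces $\pi_0(\caB_{\widetilde{(\Gamma\cap\bbZ^2)}})''$ to be either a type $II_\infty$ factor or a type $III$ factor. The paper's own contribution is solely to rule out type $III$: it computes the restriction $\omega_0\vert_{\caB_{\hat\Gamma_N}}$ explicitly, shows it is of the form $c_N\Tr(Q_N\,\cdot\,)$ for a decreasing sequence of support projections $Q_N$, passes to the strong limit $q=\lim\pi_0(Q_N)$, and checks that $\omega$ is a faithful tracial state on the corner $q\pi_0(\caB_{\hat\Gamma})''q$. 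This makes $q$ a nonzero finite projection, which is incompatible with type $III$.

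Your proposed factoriality argument---classifying possible central elements as infinite string operators along the two boundary rays and deforming them away---is therefore not just a different route but a genuinely harder one, and your sketch does not close the gap. The ``Haag-duality-type argument for cones'' you plan to use is itself a substantial theorem (it is exactly the cited result from \cite{FN}), and controlling the string tails at infinity is not something one can do with the finite stabilizer moves you describe without already having that duality in hand. Likewise, your exclusion of type $I$ via ``infinite UHF substructure together with the trace'' is not sufficient as stated: a UHF subalgebra alone does not preclude type $I_\infty$. If you want a self-contained proof along your lines, you would effectively have to reprove the \cite{FN} Haag duality and the superselection-sector machinery; the paper's economy comes precisely from outsourcing those and reducing the problem to exhibiting a single finite projection.
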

From the general theory in Lemma 5.5 \cite{ogata2022derivation},
with the existence of nontrivial superselection sector and the Haag duality \cite{FN},
we know that $\pi_0\lmk \caB_{\widetilde{\lmk \Gamma\cap \bbZ^2\rmk}}\rmk''$  is
either type $II_\infty$ or type $III$ factor.
The question here is which of these occurs.

Theorem is restricted to abelian group $G$.
This is because
we use the result from \cite{FN},
the Haag duality and the existence of nontrivial superselection sector.
All other parts of the proof work for non-abelian cases.
Therefore, if the Haag duality and existence of nontrivial superselection sector
are proven for non-abelian quantum double models, our proof immediately shows 
that the local algebras for non-abelian quantum double models are type $II_\infty$ factors.

\section{The  frustration-free ground state $\omega_0$ of the quantum double model}\label{coffe}
In this section, we use the notation from section \ref{intro}, but we do not assume
$G$ to be abelian.
For each $N\in\bbN$ and $n_0,m_0\in \bbN$
let $\Lambda_N^{(n_0,m_0)}:=\lmk [-Nn_0,Nn_0]\times [-Nm_0,Nm_0]\rmk\cap \bbZ^2$
and $\hat\Lambda_N^{(n_0,m_0)}$ the set of all edges in $\Lambda_N^{(n_0,m_0)}$.
Let $\partial \hat \Lambda_N^{(n_0,m_0)}$ be the set of edges forming the boundary lines of 
$\Lambda_N^{(n_0,m_0)}=\lmk [-Nn_0,Nn_0]\times [-Nm_0,Nm_0]\rmk\cap \bbZ^2$.
In particular, we set $\Lambda_N:=\Lambda_N^{(1,1)}$
and $\hln N:=\hat\Lambda_N^{(1,1)}$, 
$\partial \hat \Lambda_N:=\partial \hat \Lambda_N^{(1,1)}$.
In this section, we derive a concrete expression of the restriction
of $\omega_0$ onto $\caB_{\hat\Lambda_N}$.

%For $\bm h\in G^{\hat \Lambda_N}$, 
%we set 
%\begin{align}
%\partial_{\hln N} \bm h:=(h_e)_{e\in \partial \hat \Lambda_N}.
%\end{align}
First we construct a state $\varphi_0$ by some concrete
formula and show that this $\varphi_0$ 
satisfies the condition $\varphi_0(A_v)=\varphi_0(B_p)=1$
for all vertex $v$ and squares $p$ in $\bbZ^2$.
From the uniqueness in Proposition \ref{fnu}, this means $\varphi_0=\omega_0$.
 For an event $\caS$, we set $\chi(\caS)$ to be $1$ if $\caS$ holds
 and  $0$ if $\caS$ does not hold.
 For a finite set $T$, $\lv T\rv$ denotes the cardinality of $T$.
\begin{lem}\label{pz}
There is a state $\varphi_0$ on $\caB_{\hat\bbZ^2}$
such that
\begin{align}\label{vpdef}
\begin{split}
\varphi_0\lmk
\ket{\bm h}_{\hat\Lambda_N}\bra{\bm k}
\rmk
=\frac{1}{|C_{\hat\Lambda_N}|}
\chi\lmk {\bm k, \bm h\in C_{\hat\Lambda_N}}\rmk
\chi\lmk \bm k_{\partial{\hln N}}=\bm h_{\partial{\hln N}}\rmk,
\end{split}
\end{align}
for all $N\in\bbN$, $\bm k, \bm h\in G^{\hat\Lambda_N}$.
\end{lem}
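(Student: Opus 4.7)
My plan is to construct $\varphi_0$ as the projective limit of a compatible family of finite-volume states $\varphi_0^{(N)}$ on $\caB_{\hln N}$, each defined via \eqref{vpdef}. For each $N$, I group admissible configurations by their restriction to the boundary: introduce $\ket{\psi_{\bm b}} := \sum_{\bm h \in C_{\hln N},\,\bm h_{\partial \hln N} = \bm b} \ket{\bm h}$ for $\bm b \in G^{\partial \hln N}$ and set
\[
\rho_N := \frac{1}{|C_{\hln N}|} \sum_{\bm b \in G^{\partial \hln N}} \ket{\psi_{\bm b}}\bra{\psi_{\bm b}},\qquad \varphi_0^{(N)}(x) := \Tr(\rho_N\, x).
\]
Matrix elements against the product basis give \eqref{vpdef} directly. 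Positivity is built into the sum-of-rank-one form, and the sets $\{\bm h \in C_{\hln N} : \bm h_{\partial \hln N} = \bm b\}$ partition $C_{\hln N}$ as $\bm b$ varies, so $\Tr\rho_N = \tfrac{1}{|C_{\hln N}|}\sum_{\bm b}\lV\psi_{\bm b}\rV^2 = 1$.

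The main step is compatibility: $\varphi_0^{(N+1)}|_{\caB_{\hln N}} = \varphi_0^{(N)}$. Evaluated on a matrix unit $\ket{\bm h}\bra{\bm k}$ with $\bm h,\bm k\in G^{\hln N}$, this becomes
\[
\sum_{\bm c \in G^{\hln{N+1}\setminus \hln N}} \varphi_0^{(N+1)}\bigl(\ket{(\bm h,\bm c)}\bra{(\bm k,\bm c)}\bigr) = \varphi_0^{(N)}(\ket{\bm h}\bra{\bm k}).
\]
Since $\partial \hln{N+1}\subset \hln{N+1}\setminus \hln N$, the level-$(N+1)$ boundary-coincidence indicator is automatic. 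If $\bm h_{\partial \hln N} \neq \bm k_{\partial \hln N}$ I claim the sum vanishes: every edge $e_0\in \partial \hln N$ lies in a unique plaquette of $\caP_{\Lambda_{N+1}}\setminus \caP_{\Lambda_N}$ whose other three edges lie in $\bm c$, and flatness uniquely determines the value on $e_0$ from those of $\bm c$, so requiring both $(\bm h,\bm c)$ and $(\bm k,\bm c)$ admissible forces $h_{e_0} = k_{e_0}$. When instead $\bm h, \bm k\in C_{\hln N}$ agree on $\partial \hln N$, the flatness constraints imposed on $\bm c$ coincide, and the sum reduces to a single count $M(\bm b):=\#\{\bm c:(\bm h,\bm c)\in C_{\hln{N+1}}\}$ with $\bm b=\bm h_{\partial\hln N}$.

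The key technical point is that $M(\bm b)$ is independent of $\bm b$. I would exploit the simple connectivity of the filled square $[-N-1,N+1]^2$: admissible configurations on $\hln{N+1}$ are in bijection with $G^{V_{\Lambda_{N+1}}}/G$ via $g_e = u_{v_{e,-1}}^{-1} u_{v_{e,+1}}$, with $G$ acting by global left multiplication. Given any gauge $u^N$ realizing $\bm h$ on $V_{\Lambda_N}$ (possible because $\Lambda_N$ is also simply connected), extensions of $\bm h$ correspond to gauges on $V_{\Lambda_{N+1}}$ whose restriction to $V_{\Lambda_N}$ is of the form $g\cdot u^N$ for some $g\in G$, modulo the overall $G$-action. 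Counting gives $M(\bm b) = |G|^{|\Lambda_{N+1}\setminus \Lambda_N|}$; summing over admissible $\bm h$ recovers $M\cdot|C_{\hln N}| = |C_{\hln{N+1}}|$, which provides exactly the prefactor required. Verifying this uniformity is the principal obstacle, and the argument uses only two-dimensional simple connectivity, not commutativity of $G$, matching the stated generality of this section.

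Finally, every finite $\hat S \subset \hat\bbZ^2$ lies in some $\hln N$, so the compatible family $\{\varphi_0^{(N)}\}$ defines a positive unital functional on the dense $*$-subalgebra $\bigcup_N \caB_{\hln N}$, which extends uniquely by continuity to a state $\varphi_0$ on $\caB_{\hat\bbZ^2}$ satisfying \eqref{vpdef} as required.
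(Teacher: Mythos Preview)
Your proof is correct and follows the same overall architecture as the paper: define finite-volume states $\varphi_0^{(N)}$ via the sum-of-rank-one formula indexed by boundary data, verify positivity and normalization, establish the consistency $\varphi_0^{(N+1)}|_{\caB_{\hln N}}=\varphi_0^{(N)}$, and extend. The reduction of consistency to the statement that the number $M$ of admissible extensions of a fixed $\bm h\in C_{\hln N}$ to $C_{\hln{N+1}}$ is independent of $\bm h$, together with the observation that joint admissibility of $(\bm h,\bm c)$ and $(\bm k,\bm c)$ forces $\bm h_{\partial\hln N}=\bm k_{\partial\hln N}$, matches the paper exactly.

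The one substantive difference is how you compute $M$. The paper does this by an explicit hands-on construction: it exhibits four edge sets $L_1,\dots,L_4$ in the annulus $\hln{N+1}\setminus\hln N$ on which the configuration may be chosen freely, and argues that the remaining edges are then uniquely determined by flatness, yielding $M=|G|^{|L_1|+\cdots+|L_4|}=|G|^{4(2N+2)}$. Your argument is gauge-theoretic: you invoke the bijection $C_{\hln M}\cong G^{V_{\Lambda_M}}/G$ (flat connections on a simply connected region are pure gauge) to count extensions as $|G|^{|\Lambda_{N+1}\setminus\Lambda_N|}$, which is the same number. Your route is cleaner and makes the uniformity in $\bm h$ transparent without any edge-by-edge bookkeeping; the paper's route is more elementary in that it avoids appealing to path-independence of $\Psi_\mfp$ (which the paper only proves later, as Lemma~\ref{basic}). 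Both work for non-abelian $G$, as you note.
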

\begin{proof}
For each $N\in \bbN$ and $\bm a\in G^{\partial\hat \Lambda_N}$, set
\begin{align}
\caI_{\bm a}^{(N)}:=\left\{ \bm k \in  C_{\hln{N}}\mid \bm k_{\partial \hat \Lambda_N}=\bm a\right\}.
\end{align}
We denote by ${\mathbb A}_{\hat\Lambda_N}$
the set of all $\bm a\in G^{\partial\hat \Lambda_N}$
such that $\caI_{\bm a}^{(N)}\neq\emptyset$.
For $\bm a\in {\mathbb A}_{\hat\Lambda_N}$,
we set
\begin{align}
\psi_{\bm a}:= \sum_{\kl{k}\in \caI_{\bm a}^{(N)}}
\ket{\kl{k}}_{\hln N},\quad\quad
\hat\psi_{\bm a}:=\frac{1}{|\caI_{\bm a}^{(N)}|^{\frac 12}}\psi_{\bm a}.
\end{align}

Then
\begin{align}
\begin{split}
\varphi_{\hln N}(B):=\frac{1}{|C_{\hln{N}}|}
\sum_{\bm a\in {\mathbb A}_{\hat\Lambda_N}}
\braket{\psi_{\bm a}}{ B\psi_{\bm a}},\quad B\in\caB_{\hln N}
\end{split}
\end{align}
defines a state on $\caB_{\hln N}$.
For $\bm k, \bm h\in G^{\hat\Lambda_N}$,
we have
\begin{align}\label{neb}
\begin{split}
\varphi_{\hln {N}}\lmk
\ket{\bm h}_{\hat\Lambda_N}\bra{\bm k}
\rmk
=\frac{1}{|C_{\hat\Lambda_N}|}
\chi\lmk {\bm k, \bm h\in C_{\hat\Lambda_N}}\rmk
\chi\lmk \bm k_{\partial{\hln N}}=\bm h_{\partial{\hln N} }\rmk.
\end{split}
\end{align}

We have
\begin{align}\label{rest}
\varphi_{\hln {N+1}}\vert_{\caB_{\hln N}}=\varphi_{\hln N}.
\end{align}
In order to show this, note that for each
$\kl k \in C_{\hln N}$, 
there are $|G|^{4(2N+2)}$-number of $\bm m\in G^{\hln {N+1}\setminus \hln N}$
such that 
\begin{align}\label{kmkm}
\begin{split}
\lmk  \bm k,\bm m\rmk\in C_{\hat\Lambda_{N+1}}.
\end{split}
\end{align}
In fact, fix $\kl k \in C_{\hln N}$.
We freely fix $\bm m_i\in G^{L_i}$, $i=1,2,3,4$
with
\begin{align}
\begin{split}
&L_1:=\text{edges in}\;\; \lmk \lmk [-N,N+1]\times\{N+1\}\rmk \cup\lmk \{-N\}\times[N,N+1]\rmk\rmk,\\
&L_2:=\text{edges in}\;\; \lmk \lmk \{N+1\}\times [-N-1,N]\rmk\cup  \lmk [N,N+1]\times \{N\}\rmk\rmk\\
&L_3:=\text{edges in}\; \;\lmk \lmk [-N-1,N]\times\{-N-1\}\rmk\cup\lmk \{N\}\times[-N-1,-N]\rmk,\rmk\\
&L_4:=\text{edges in}\;\; \lmk \lmk \{-N-1\}\times [-N,N+1]\rmk \cup[-N-1, -N]\times \{-N\}\rmk
\end{split}
\end{align}
Then the configuration of all the other edges is determined uniquely
so that all
 the admissibility conditions are satisfied.
Hence the possible number of $\bm m\in G^{\hln {N+1}\setminus \hln N}$
satisfying (\ref{kmkm}) for a given $\bm k\in C_{\hln N}$
is
\begin{align}
\begin{split}
|G|^{|L_1|+|L_2|+|L_3|+|L_4|}
=|G|^{4(2N+2)}.
\end{split}
\end{align}
In particular we have
\begin{align}
|G|^{4(2N+2)}|C_{\hat\Lambda_{N}}|=|C_{\hat\Lambda_{N+1}}|.
\end{align}
Hence
 from (\ref{neb}) for $N+1$, we have
\begin{align}
\begin{split}
&\varphi_{\hln {N+1}}\lmk
\ket{\bm h}_{\hat\Lambda_N}\bra{\bm k}
\rmk\\
&=\sum_{\bm m\in G^{\hln {N+1}\setminus \hln N}}\varphi_{\hln {N+1}}\lmk
\ket{\bm h,\bm m}_{\hat\Lambda_{N+1}}\bra{\bm k,\bm m}
\rmk\\
&=\sum_{\bm m\in G^{\hln {N+1}\setminus \hln N}}\frac{1}{|C_{\hat\Lambda_{N+1}}|}
\chi\lmk {\lmk \bm k,\bm m\rmk, 
\lmk  \bm h,\bm m\rmk\in C_{\hat\Lambda_{N+1}}}\rmk
\chi\lmk\lmk \bm k,\bm m\rmk_{\partial{\hln {N+1}} }
= \lmk \bm h,\bm m\rmk_{\partial{\hln {N+1}}}\rmk\\
&=\sum_{\bm m\in G^{\hln {N+1}\setminus \hln N}}\frac{1}{|C_{\hat\Lambda_{N+1}}|}
\chi\lmk {\lmk \bm k,\bm m\rmk, 
\lmk  \bm h,\bm m\rmk\in C_{\hat\Lambda_{N+1}}}\rmk\\
&=
\chi\lmk { \bm k, 
  \bm h\in C_{\hat\Lambda_{N}}}\rmk
\chi\lmk\bm k_{\partial{\hln N}} =\bm h_{\partial{\hln N}} \rmk\sum_{\bm m\in G^{\hln {N+1}\setminus \hln N}}\frac{1}{|C_{\hat\Lambda_{N+1}}|}\chi\lmk {
\lmk  \bm h,\bm m\rmk\in C_{\hat\Lambda_{N+1}}}\rmk\\
&=\frac{1}{|C_{\hat\Lambda_{N+1}}|}\chi\lmk { \bm k, 
  \bm h\in C_{\hat\Lambda_{N}}}\rmk
\chi\lmk\bm k_{\partial{\hln N}} = \bm h_{\partial{\hln N}} \rmk
\lv
\left\{
\bm m\mid \lmk  \bm h,\bm m\rmk\in C_{\hat\Lambda_{N+1}}
\right
\}\rv\\
&=\frac{1}{|C_{\hat\Lambda_{N}}|}\chi\lmk { \bm k, 
  \bm h\in C_{\hat\Lambda_{N}}}\rmk
\chi\lmk\bm k_{\partial{\hln N}} = \bm h_{\partial{\hln N}} \rmk\\
&=\varphi_{\hln {N}}\lmk
\ket{\bm h}_{\hat\Lambda_N}\bra{\bm k}
\rmk.
\end{split}
\end{align}
In the third equality, we noted the fact that $\lmk \bm k,\bm m\rmk, 
\lmk  \bm h,\bm m\rmk\in C_{\hat\Lambda_{N+1}}$
trivially implies $\lmk \bm k,\bm m\rmk_{\partial{\hln {N+1}} }
= \lmk \bm h,\bm m\rmk_{\partial{\hln {N+1}}}$.
In the fourth equality we noted the fact that  $\lmk \bm k,\bm m\rmk, 
\lmk  \bm h,\bm m\rmk\in C_{\hat\Lambda_{N+1}}$
implies $\bm k_{\partial{\hln N}} =\bm h_{\partial{\hln N}}$,
because the labels of all the edges of $\bm k$ and $\bm h$ on ${\partial{\hln N}}$
are determined by $\bm m$ uniquely by the admissibility condition.
Furthermore, if $\lmk \bm h,\bm m\rmk\in C_{\hat\Lambda_{N+1}}$,
$\bm k_{\partial{\hln N}} =\bm h_{\partial{\hln N}}$,
$ \bm k, 
  \bm h\in C_{\hat\Lambda_{N+1}}$, then
 $\lmk \bm k,\bm m\rmk\in C_{\hat\Lambda_{N}}$
 holds because all the admissibility conditions of 
 $\lmk \bm k,\bm m\rmk\in C_{\hat\Lambda_{N+1}}$ which are not
 included in $\bm k\in C_{\hat\Lambda_{N}}$
 are the ones including only $\bm k_{\partial{\hln N}} =\bm h_{\partial{\hln N}}$
 and $\bm m$.
In the fifth equality we used the fact 
$\lv
\left\{
\bm m\mid \lmk  \bm h,\bm m\rmk\in C_{\hat\Lambda_{N+1}}
\right
\}\rv
=|G|^{4(2N+2)}=
\frac{|C_{\hat\Lambda_{N+1}}|}{|C_{\hat\Lambda_{N}}|}
$,
observed above.

This proves the claim (\ref{rest}).
Hence the consistency condition holds and we can extend the states $\varphi_{\hln N}$
to a state on $\caB$, obtaining 
 the desired state $\varphi_0$.
\end{proof}

\begin{lem}
For all $N\in\bbN$, $\bm k, \bm h\in G^{\hat\Lambda_N^{(n_0,m_0)}}$,
\begin{align}
\begin{split}
\varphi_0\lmk
\ket{\bm h}_{\hat\Lambda_N^{(n_0,m_0)}}\bra{\bm k}
\rmk
=\frac{1}{|C_{\hat\Lambda_N^{(n_0,m_0)}}|}
\chi\lmk {\bm k, \bm h\in C_{\hat\Lambda_N^{(n_0,m_0)}}}\rmk
\chi\lmk \bm k_{\partial{\hln N^{(n_0,m_0)}}}
=\bm h_{\partial{\hln N^{(n_0,m_0)}}}\rmk.
\end{split}
\end{align}
\end{lem}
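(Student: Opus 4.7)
The plan is to reduce the statement to Lemma \ref{pz} by embedding $\hat\Lambda_N^{(n_0,m_0)}$ into a square $\hln M$ with $M$ chosen large enough, say $M \geq N\max(n_0,m_0)+1$, so that $\partial\hln M$ is disjoint from $\hat\Lambda_N^{(n_0,m_0)}$. For fixed $\bm h, \bm k \in G^{\hat\Lambda_N^{(n_0,m_0)}}$ I would write
\begin{align*}
\varphi_0\lmk\ket{\bm h}_{\hat\Lambda_N^{(n_0,m_0)}}\bra{\bm k}\rmk
= \sum_{\bm m \in G^{\hln M\setminus\hat\Lambda_N^{(n_0,m_0)}}}
\varphi_0\lmk\ket{\bm h,\bm m}_{\hln M}\bra{\bm k,\bm m}\rmk
\end{align*}
and apply Lemma \ref{pz} to each summand. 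Since $\partial\hln M$ lies entirely in the $\bm m$-region, the boundary-matching factor on $\partial\hln M$ is automatically $1$, and the sum collapses to
\begin{align*}
\varphi_0\lmk\ket{\bm h}_{\hat\Lambda_N^{(n_0,m_0)}}\bra{\bm k}\rmk
= \frac{1}{|C_{\hln M}|}\left|\{\bm m : (\bm h,\bm m),(\bm k,\bm m) \in C_{\hln M}\}\right|.
\end{align*}

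The next step is to analyze when this count is nonzero. I would classify the squares in $\caP_{\Lambda_M}$ by how many of their corners lie in $\Lambda_N^{(n_0,m_0)}$. A short check shows that for a rectangular region only four cases occur: all four corners inside, all four outside, two adjacent corners inside along a side of the rectangle (side-straddling), or exactly one corner inside at a corner of the rectangle (corner-straddling); the three-corners-in case is excluded geometrically. Squares of the first type yield exactly the condition $\bm h \in C_{\hat\Lambda_N^{(n_0,m_0)}}$, while those of the second and fourth types impose constraints on $\bm m$ alone. Each side-straddling square contains a unique edge of $\hat\Lambda_N^{(n_0,m_0)}$, namely the edge joining the two inside corners, and that edge belongs to $\partial\hat\Lambda_N^{(n_0,m_0)}$; comparing the admissibility equations of $(\bm h,\bm m)$ and $(\bm k,\bm m)$ on such a square forces $h_e = k_e$. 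Since every edge of $\partial\hat\Lambda_N^{(n_0,m_0)}$ is the inside edge of some side-straddling square, the count vanishes unless $\bm h, \bm k \in C_{\hat\Lambda_N^{(n_0,m_0)}}$ and $\bm h_{\partial\hln N^{(n_0,m_0)}} = \bm k_{\partial\hln N^{(n_0,m_0)}}$.

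When these two conditions hold, the admissibility constraints on $\bm m$ coming from $\bm h$ and from $\bm k$ coincide, so
\begin{align*}
\left|\{\bm m : (\bm h,\bm m),(\bm k,\bm m) \in C_{\hln M}\}\right| = \left|\{\bm m : (\bm h,\bm m) \in C_{\hln M}\}\right|.
\end{align*}
The remaining step is to show this fiber size is independent of the admissible $\bm h$, hence equals $|C_{\hln M}|/|C_{\hat\Lambda_N^{(n_0,m_0)}}|$ in view of the identity $|C_{\hln M}| = \sum_{\bm h \in C_{\hat\Lambda_N^{(n_0,m_0)}}} \left|\{\bm m : (\bm h,\bm m) \in C_{\hln M}\}\right|$. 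I would establish this uniformity by the same layer-by-layer extension argument used in the proof of Lemma \ref{pz} to derive $|C_{\hln{N+1}}| = |G|^{4(2N+2)}|C_{\hln N}|$: growing from $\hat\Lambda_N^{(n_0,m_0)}$ to $\hln M$ one ring of edges at a time, each ring provides a fixed number of free configurations of the added edges, the rest being forced by admissibility. Establishing this uniformity of fiber sizes in the rectangle-in-square geometry is the one step of real technical substance; the other pieces are bookkeeping.
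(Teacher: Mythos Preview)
Your approach is correct and follows the same overall strategy as the paper: embed the rectangle in a larger square, expand over the complementary configuration $\bm m$, apply Lemma \ref{pz}, and then extract the two indicator conditions by analyzing how the admissibility constraints of the bigger square interact with $\bm h$, $\bm k$, and $\bm m$.

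The one genuine difference is the choice of embedding. The paper takes $M=Nm_0$ (assuming $n_0<m_0$), so that the top and bottom sides of $\hat\Lambda_N^{(n_0,m_0)}$ lie on $\partial\hln{M}$. This makes the complement $\hln{M}\setminus\hat\Lambda_N^{(n_0,m_0)}$ consist of two vertical strips, and the fiber count becomes a single explicit power of $|G|$ (their equation \eqref{monb}), so fiber uniformity is immediate. The price is that the boundary-matching condition on $\partial\hln{M}$ is no longer automatic; the paper uses it directly to force $\bm h=\bm k$ on the top and bottom of the rectangle, while the left/right boundary equality comes from the straddling-square argument you describe. Your choice $M\ge N\max(n_0,m_0)+1$ reverses the trade-off: boundary matching is trivial, but the complement is a full annulus and the fiber-uniformity step needs more bookkeeping. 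Your suggestion to run the $L_1,\dots,L_4$ argument ring by ring is fine, though note that ``rings'' around a strict rectangle stay rectangular; you will want to first add columns on the shorter sides (exactly the paper's strip argument) to reach $\hln{N\max(n_0,m_0)}$, and then apply the Lemma~\ref{pz} ring count to reach $\hln M$. With that clarification both routes yield the same formula.
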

\begin{proof}
We consider the case that $n_0<m_0$. The proof for $n_0>m_0$ is the same.
As in the proof of Theorem \ref{pz}, for any  $3\le N\in\bbN$ and 
$\bm h\in C_{\hat\Lambda_N^{(n_0,m_0)}}$, we have
\begin{align}\label{monb}
\begin{split}
 \lv
 \left\{
 \bm m\in G^{\hln {Nm_0}\setminus \hln N^{(n_0,m_0)}}\mid
 \lmk \bm h,\bm m\rmk \in C_{\hat\Lambda_{Nm_0}}
 \right\}
 \rv
 =|G|^{2\lmk 2m_0N+1\rmk\lmk m_0-n_0\rmk N}.
\end{split}
\end{align}
In particular, we have
\begin{align}\label{short}
\begin{split}
\lv C_{\hln N^{(n_0,m_0)}}\rv\cdot 
|G|^{2\lmk 2m_0N+1\rmk\lmk m_0-n_0\rmk N }=\lv C_{\hln {Nm_0}}\rv.
\end{split}
\end{align}

For any  $N\in\bbN$ and $\bm k, \bm h\in G^{\hat\Lambda_N^{(n_0,m_0)}}$,
\begin{align}
\begin{split}
&\varphi_{0}\lmk
\ket{\bm h}_{\hat\Lambda_N^{(n_0,m_0)}}\bra{\bm k}
\rmk\\
&=\sum_{\bm m\in G^{\hln {Nm_0}\setminus \hln N^{(n_0,m_0)}}}\varphi_{0}\lmk
\ket{\bm h,\bm m}_{\hln {Nm_0}}\bra{\bm k,\bm m}
\rmk\\
&=
\sum_{\bm m\in G^{\hln {Nm_0}\setminus \hln N^{(n_0,m_0)}}}
\frac{1}{|C_{\hat\Lambda_{Nm_0}}|}
\chi\lmk {\lmk \bm k,\bm m\rmk, \lmk \bm h,\bm m\rmk \in C_{\hat\Lambda_{Nm_0}}}\rmk
\chi\lmk \lmk \bm k,\bm m\rmk_{\partial{\hln {N m_0}}}=
\lmk \bm h,\bm m\rmk_{\partial{\hln {Nm_0}}}\rmk\\
&=
\sum_{\bm m\in G^{\hln {Nm_0}\setminus \hln N^{(n_0,m_0)}}}
\frac{1}{|C_{\hat\Lambda_{Nm_0}}|}
\chi\lmk {\bm k,  \bm h\in C_{\hat\Lambda_{N}^{(n_0,m_0)}}}\rmk
\chi\lmk \bm k_{\partial{\hln {N}^{(n_0,m_0)}}}=
 \bm h_{\partial{\hln {N}^{(n_0,m_0)}}}\rmk\\
&\chi\lmk {\lmk \bm h,\bm m\rmk \in C_{\hat\Lambda_{Nm_0}}}\rmk
\chi\lmk {\lmk \bm k,\bm m\rmk, \lmk \bm h,\bm m\rmk \in C_{\hat\Lambda_{Nm_0}}}\rmk
\chi\lmk \lmk \bm k,\bm m\rmk_{\partial{\hln {N m_0}}}=
\lmk \bm h,\bm m\rmk_{\partial{\hln {Nm_0}}}\rmk\\
&=
\sum_{\bm m\in G^{\hln {Nm_0}\setminus \hln N^{(n_0,m_0)}}}
\frac{1}{|C_{\hat\Lambda_{Nm_0}}|}
\chi\lmk {\bm k,  \bm h\in C_{\hat\Lambda_{N}^{(n_0,m_0)}}}\rmk
\chi\lmk \bm k_{\partial{\hln {N}^{(n_0,m_0)}}}=
 \bm h_{\partial{\hln {N}^{(n_0,m_0)}}}\rmk
 \chi\lmk {\lmk \bm h,\bm m\rmk \in C_{\hat\Lambda_{Nm_0}}}\rmk\\
 &=\frac{1}{|C_{\hat\Lambda_{Nm_0}}|}
\chi\lmk {\bm k,  \bm h\in C_{\hat\Lambda_{N}^{(n_0,m_0)}}}\rmk
\chi\lmk \bm k_{\partial{\hln {N}^{(n_0,m_0)}}}=
 \bm h_{\partial{\hln {N}^{(n_0,m_0)}}}\rmk
 \lv
 \left\{
 \bm m\in G^{\hln {Nm_0}\setminus \hln N^{(n_0,m_0)}}\mid
 \lmk \bm h,\bm m\rmk \in C_{\hat\Lambda_{Nm_0}}
 \right\}
 \rv\\
 &=\frac{1}{|C_{\hat\Lambda_{N}^{(n_0,m_0)}}|}
\chi\lmk {\bm k,  \bm h\in C_{\hat\Lambda_{N}^{(n_0,m_0)}}}\rmk
\chi\lmk \bm k_{\partial{\hln {N}^{(n_0,m_0)}}}=
 \bm h_{\partial{\hln {N}^{(n_0,m_0)}}}\rmk.
 \end{split}
\end{align}
Here we used (\ref{monb}) and (\ref{short}) for the last equality.

\end{proof}

\begin{lem}\label{tad}
Let $v\in \Lambda_{N-1}$ with $N\ge 3$.
Then for each $g\in G$, there exists 
a bijection $T_g^{(Nv)}: C_{\hln N}\to C_{\hln N}$ such that
\begin{align}
A_v^{(g)}\ket{\bm k}_{\hln N}
=\ket{T_g^{(Nv)}\bm k}_{\hln N},\quad \bm k\in C_{\hln N}
\end{align}
and
\begin{align}
\bm k_{\partial{\hln N} }=\lmk T_g^{(Nv)}\bm k\rmk_{\partial{\hln N}} .
\end{align}

\end{lem}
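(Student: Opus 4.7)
The plan is to define $T_g^{(Nv)}$ directly from how $A_v^{(g)}$ acts on computational basis vectors, and then verify that it (and its inverse candidate $T_{g^{-1}}^{(Nv)}$) preserves $C_{\hln{N}}$ and leaves the boundary configuration unchanged.

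Every tensor factor $L_g$ or $R_{g^{-1}}$ appearing in $A_v^{(g)}$ sends a basis vector $\ket{h}$ of $\ell^2(G)$ to another basis vector, so $A_v^{(g)}\ket{\bm k}_{\hln{N}}$ is itself a basis vector $\ket{\bm k'}_{\hln{N}}$, where $\bm k'$ agrees with $\bm k$ off the four edges incident to $v$; on each such edge $e$ the label $k_e$ is replaced by $g k_e$ if $v_{e,-1}=v$ and by $k_e g$ if $v_{e,+1}=v$. Define $T_g^{(Nv)}:G^{\hln{N}}\to G^{\hln{N}}$ by $T_g^{(Nv)}\bm k:=\bm k'$; by construction $A_v^{(g)}\ket{\bm k}_{\hln{N}}=\ket{T_g^{(Nv)}\bm k}_{\hln{N}}$. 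The identities $L_g L_{g^{-1}}=\id$ and $R_{g^{-1}}R_g=\id$ imply $T_g^{(Nv)}\circ T_{g^{-1}}^{(Nv)}=\id$, so once we show both maps preserve $C_{\hln{N}}$, the restriction is automatically a bijection of $C_{\hln{N}}$.

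Boundary invariance is immediate from the hypothesis $v\in\Lambda_{N-1}$: each of the four edges incident to $v$ is either horizontal with $y$-coordinate $v_2$ or vertical with $x$-coordinate $v_1$, and both $v_1,v_2$ lie in $[-(N-1),N-1]$; such an edge therefore never lies on a line $x=\pm N$ or $y=\pm N$ that constitutes $\partial\hln{N}$. Consequently $T_g^{(Nv)}$ modifies only entries off $\partial\hln{N}$, so $(T_g^{(Nv)}\bm k)_{\partial\hln{N}}=\bm k_{\partial\hln{N}}$.

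The substantive step is showing $T_g^{(Nv)}(C_{\hln{N}})\subset C_{\hln{N}}$. Admissibility is a per-plaquette condition, and every plaquette that does not have $v$ as a corner is left entirely untouched; the assumption $v\in\Lambda_{N-1}$ ensures the four plaquettes having $v$ as a corner all lie inside $\hln{N}$. For each of these, a case analysis on which corner of the plaquette equals $v$ reduces the check to confirming that $k_{e_1}k_{e_2}=k_{e_3}k_{e_4}$ persists after the two plaquette-edges at $v$ are multiplied by $g$ on the appropriate side. This bookkeeping---tracking the orientation convention for $L_g$ versus $R_{g^{-1}}$ and the order in which the $g$'s enter---is the main but purely mechanical obstacle; once executed it closes the proof.
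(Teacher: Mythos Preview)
Your proposal is correct and follows essentially the same approach as the paper: define $T_g^{(Nv)}$ from the action of $A_v^{(g)}$ on basis vectors, use $A_v^{(g^{-1})}=\lmk A_v^{(g)}\rmk^{-1}$ for bijectivity, observe that edges incident to $v\in\Lambda_{N-1}$ miss $\partial\hat\Lambda_N$, and verify admissibility only on the four plaquettes touching $v$. The paper differs only in that it writes out the four plaquette checks explicitly rather than declaring them mechanical; your outline of that verification is accurate.
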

\begin{proof}
Let $v=(a, b)\in\Lambda_{N-1}$.
Then we have
\begin{align}
A_v^{(g)}\ket{\bm k}_{\hln N}
=\ket{T_g^{(Nv)}\bm k}_{\hln N},\quad \bm k\in C_{\hln N}
\end{align}
with
\begin{align}
\begin{split}
\lmk T_g^{(Nv)}\bm k\rmk_{e}
=
\left\{
\begin{gathered}
k_{(a-1,b), (a,b)}g^{-1},\quad e=((a-1,b), (a,b)),\\
k_{(a,b-1), (a,b)}g^{-1},\quad e=((a,b-1), (a,b)),\\
gk_{(a,b), (a,b+1)},\quad e=((a,b), (a,b+1)),\\
gk_{(a,b), (a+1,b)},\quad e=((a,b), (a+1,b)),\\
k_e,\quad \text{otherwise}
\end{gathered}
\right..
\end{split}
\end{align}
We then have
\begin{align}
\begin{split}
&\lmk T_g^{(Nv)}\bm k\rmk_{(a,b)-(a+1,b)}\lmk T_g^{(Nv)}\bm k\rmk_{(a+1,b)-(a+1,b+1)}=
gk_{(a,b), (a+1,b)}k_{(a+1,b)-(a+1,b+1)}\\
&=gk_{(a,b), (a,b+1)}k_{(a,b+1)-(a+1,b+1)}
=
\lmk T_g^{(Nv)}\bm k\rmk_{(a,b)-(a,b+1)}\lmk T_g^{(Nv)}\bm k\rmk_{(a,b+1)-(a+1,b+1)},\\
&\lmk T_g^{(Nv)}\bm k\rmk_{(a,b-1)-(a+1,b-1)}\lmk T_g^{(Nv)}\bm k\rmk_{(a+1,b-1)-(a+1,b)}=
k_{(a,b-1)-(a+1,b-1)}k_{(a+1,b-1)-(a+1,b)}\\
&=k_{(a,b-1), (a,b)}g^{-1}gk_{(a,b), (a+1,b)}
=
\lmk T_g^{(Nv)}\bm k\rmk_{(a,b-1)-(a,b)}\lmk T_g^{(Nv)}\bm k\rmk_{(a,b)-(a+1,b)},\\
&
\lmk T_g^{(Nv)}\bm k\rmk_{(a-1,b)-(a,b)}\lmk T_g^{(Nv)}\bm k\rmk_{(a,b)-(a,b+1)}=
k_{(a-1,b), (a,b)}g^{-1}gk_{(a,b), (a,b+1)}\\
&=k_{(a-1,b)-(a-1,b+1)}k_{(a-1,b+1)-(a,b+1)}
=
\lmk T_g^{(Nv)}\bm k\rmk_{(a-1,b)-(a-1,b+1)}\lmk T_g^{(Nv)}\bm k\rmk_{(a-1,b+1)-(a,b+1)},\\
&
\lmk T_g^{(Nv)}\bm k\rmk_{(a-1,b-1)-(a,b-1)}\lmk T_g^{(Nv)}\bm k\rmk_{(a,b-1)-(a,b)}
=k_{(a-1,b-1)-(a,b-1)}k_{(a,b-1), (a,b)}g^{-1},\\
&=k_{(a-1,b-1)-(a-1,b)}k_{(a-1,b), (a,b)}g^{-1}=
\lmk T_g^{(Nv)}\bm k\rmk_{(a-1,b-1)-(a-1,b)}\lmk T_g^{(Nv)}\bm k\rmk _{(a-1,b)-(a,b)},
\end{split}
\end{align}
and we have
$T_g^{(Nv)}\bm k\in C_{\hln N}$.
Because $\lmk A_v^{(g)}\rmk^{-1}=A_v^{(g^{-1})}$,
$T_g^{(Nv)}$ is a bijection.
Because $A_v^{(g)}$, $v\in\Lambda_{N-1}$ does not change the boundary edge, we have
\begin{align}
 \bm k_{\partial{\hln N}}=\lmk T_g^{(Nv)}\bm k\rmk_{\partial{\hln N} }.
\end{align}
\end{proof}
The state $\varphi_0$ is the  frustration-free ground state of the quantum double model.
\begin{lem}
We have
\begin{align}
\begin{split}
\varphi_0\lmk
A_v^{(g)}
\rmk=1
\end{split}
\end{align}
for all vertices $v\in\bbZ^2$ and $g\in G$.
In particular, 
for $A_v=\frac{1}{|G|}\sum_{g\in G}A_v^{(g)}$,
we have
\begin{align}
\varphi_0(A_v)=1
\end{align}
for all vertices $v\in\bbZ^2$ and
\begin{align}
\varphi_0(B_p)=1
\end{align}
for all squares $p$.
\end{lem}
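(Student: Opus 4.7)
The plan is to evaluate $\varphi_0$ on each of $A_v^{(g)}$ and $B_p$ directly, using the explicit matrix-element formula of Lemma \ref{pz} inside a large enough box $\hln N$ containing the support of the operator in question.

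For the identity $\varphi_0(A_v^{(g)}) = 1$, I would fix $N \ge 3$ with $v \in \Lambda_{N-1}$, so that $A_v^{(g)} \in \caB_{\hln N}$ and Lemma \ref{tad} applies. Inserting the resolution of the identity and taking $\varphi_0$ gives
\begin{align*}
\varphi_0(A_v^{(g)}) = \sum_{\bm k \in G^{\hln N}} \varphi_0\lmk A_v^{(g)} \ket{\bm k}_{\hln N}\bra{\bm k}_{\hln N}\rmk.
\end{align*}
The indicator $\chi(\bm k \in C_{\hln N})$ appearing in the formula of Lemma \ref{pz} restricts the nonzero contributions to $\bm k \in C_{\hln N}$. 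For such $\bm k$, Lemma \ref{tad} yields $A_v^{(g)}\ket{\bm k}_{\hln N} = \ket{T_g^{(Nv)}\bm k}_{\hln N}$ with $T_g^{(Nv)}\bm k \in C_{\hln N}$ and $(T_g^{(Nv)}\bm k)_{\partial \hln N} = \bm k_{\partial \hln N}$. Hence both the admissibility and the boundary-matching indicators in Lemma \ref{pz} are automatically satisfied, so each term contributes $1/|C_{\hln N}|$ and the total is $1$. The equality $\varphi_0(A_v) = 1$ then follows by averaging over $g \in G$.

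For $\varphi_0(B_p) = 1$, I would pick $N$ with $\hat p \subset \hln N$ and write
\begin{align*}
B_p = \sum_{\bm k \in C_{\hat p}} \ket{\bm k}_{\hat p}\bra{\bm k}_{\hat p}.
\end{align*}
Expanding the remaining edges of $\hln N$ in the basis and applying Lemma \ref{pz} gives
\begin{align*}
\varphi_0(B_p) = \frac{1}{|C_{\hln N}|} \sum_{\bm k \in C_{\hat p}} \sum_{\bm h \in G^{\hln N \setminus \hat p}} \chi\lmk (\bm k, \bm h) \in C_{\hln N}\rmk.
\end{align*}
The key observation is that any globally admissible configuration on $\hln N$ is automatically admissible on each subsquare $\hat p$, so the constraint $\bm k \in C_{\hat p}$ is redundant and the sum collapses to $|C_{\hln N}|/|C_{\hln N}| = 1$.

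There is no essential obstacle: the entire content of the $A_v^{(g)}$ calculation is packaged in Lemma \ref{tad}, while the $B_p$ identity is a direct unpacking of Lemma \ref{pz} together with the fact that global admissibility implies local admissibility. The only care needed is in bookkeeping the two indicator functions in the formula for $\varphi_0$ and noting that nonadmissible $\bm k$ simply drop out of every sum.
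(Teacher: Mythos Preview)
Your proposal is correct and follows essentially the same approach as the paper: for $A_v^{(g)}$ you expand in the basis of $\hln N$, discard nonadmissible $\bm k$ via Lemma~\ref{pz}, and invoke Lemma~\ref{tad} exactly as the paper does; for $B_p$ the paper phrases the argument as $\varphi_0(\ket{\bm k}\bra{\bm k})=0$ for $\bm k\notin C_{\hln N}$ (hence $\varphi_0(1-B_p)=0$), which is the complementary version of your observation that global admissibility on $\hln N$ already forces $\bm k\in C_{\hat p}$.
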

\begin{proof}
By (\ref{vpdef}), we have
\begin{align}\label{bpp}
\begin{split}
\varphi_0\lmk\ket{\bm k}_{\hln N}\bra{\bm k}\rmk
=0,\quad
\bm k\in G^{\hln N}\setminus C_{\hln N},
\end{split}
\end{align}
hence $\varphi_0(B_p)=1$.

For any $v\in \bbZ^2$, choose
$3\le N\in\bbN$
so that $v\in \Lambda_{N-1}$. For any $g\in G$, we have
\begin{align}
\begin{split}
&\varphi_0\lmk
A_v^{(g)}
\rmk
=\sum_{\bm k\in G^{\hln N}}\varphi_0\lmk A_v^{(g)}\ket{\bm k}_{\hat\Lambda_N}\bra{\bm k}\rmk
=\sum_{\bm k\in C_{{\hln N}}}\varphi_0\lmk A_v^{(g)}\ket{\bm k}_{\hat\Lambda_N}\bra{\bm k}\rmk\\
&=\sum_{\bm k\in C_{{\hln N}}}\varphi_0\lmk 
\ket{T_g^{(Nv)}\bm k}_{\hat\Lambda_N}\bra{\bm k}\rmk
=\frac{1}{|C_{\hat\Lambda_N}|}\sum_{\bm k\in C_{{\hln N}}}
\chi\lmk {\bm k, T_g^{(Nv)}\bm k\in C_{\hat\Lambda_N}}\rmk
\chi\lmk \bm k_{\partial{\hln N}}= \lmk T_g^{(Nv)}\bm k\rmk_{\partial{\hln N}}\rmk=1.
\end{split}
\end{align}
We used (\ref{bpp}) for the second equality, and Lemma \ref{tad} for the third and fifth equality.
\end{proof}
Hence our $\varphi_0$ is the  frustration-free ground state of the quantum double model.
\begin{lem}
We have $\omega_0=\varphi_0$. Namely,
the restriction of $\omega_0$ onto 
$\caB_{\hln N}$ is given by the formula (\ref{vpdef}).
\end{lem}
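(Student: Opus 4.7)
The plan is simply to invoke the uniqueness clause of Proposition \ref{fnu}. We have already constructed the state $\varphi_0$ in Lemma \ref{pz} via the explicit formula (\ref{vpdef}) and verified, in the preceding two lemmas, that $\varphi_0(A_v^{(g)})=1$ for every vertex $v\in\bbZ^2$ and every $g\in G$, and that $\varphi_0(B_p)=1$ for every plaquette $p$. Averaging the first identity over $g\in G$ gives $\varphi_0(A_v)=1$, so $\varphi_0$ satisfies exactly the defining conditions characterizing $\omega_0$.

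By Proposition \ref{fnu}, there is a unique state on $\caB$ with these properties, and therefore $\omega_0=\varphi_0$. In particular the restriction of $\omega_0$ to $\caB_{\hln N}$ agrees with $\varphi_{\hln N}$, which is precisely the content of formula (\ref{vpdef}).

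There is essentially no obstacle: all the substantive work has already been carried out in the construction of $\varphi_0$ and in the verification of the $A_v^{(g)}$ and $B_p$ conditions. The only thing to be said in the proof is to cite Proposition \ref{fnu} for uniqueness and read off formula (\ref{vpdef}) from Lemma \ref{pz}. One could, if desired, remark that the identity $\varphi_0(A_v^{(g)})=1$ for all $g$ is strictly stronger than $\varphi_0(A_v)=1$ but is not needed here, since only the averaged condition appears in Proposition \ref{fnu}.
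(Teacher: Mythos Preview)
Your proposal is correct and matches the paper's approach exactly: the paper does not even give a separate proof block for this lemma, having already announced at the start of Section \ref{coffe} that once $\varphi_0(A_v)=\varphi_0(B_p)=1$ is verified, the uniqueness in Proposition \ref{fnu} yields $\varphi_0=\omega_0$. Your write-up makes this explicit and is entirely in line with the intended argument.
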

%\section{Half-plane case}
%%We set
%%\begin{align}
%%\begin{split}
%%Q_N:=\sum_{\bm a\in {\mathbb A}_{\hat\Lambda_N}}\ket{\hat\psi_{\bm a}}\bra{\hat\psi_{\bm a}}.
%%\end{split}
%%\end{align}
%
%
%By (\ref{acon}), we have
%\begin{align}
%\begin{split}
%\left| {\mathbb A}_{\hat\Lambda_N}\right|
%=|G|^{8N-1}.
%\end{split}
%\end{align}
%For $\bm a\in {\mathbb A}_{\hat\Lambda_N}$,
%from the above Lemma,
%\begin{align}
%\begin{split}
%\left|\caI_{\bm a}^{(N)}\right|
%=|G|^{(2N-1)^2}.
%\end{split}
%\end{align}
%Hence we have
%\begin{align}
%\begin{split}
%\lv C_{\hat\Lambda_N}\rv=|G|^{(2N-1)^2}\cdot|G|^{8N-1}
%=|G|^{4N^2+4N}
%\end{split}
%\end{align}
%

\section{Admissible configurations on layers of squares}\label{acls}
In this section, we consider admissible configuration on layers of squares.
More precisely we consider the following shape.
We use the notation from section \ref{intro}, but we do not assume
$G$ to be abelian.
\begin{defn}\label{ds}
We consider $l$-layers $\mathfrak S$ of a sequence of squares $m=1,\ldots, l$ in $\bbZ^2$,
with $m$-th layer
\[
S_{(x_1^{(m)}, y+m-1)},  S_{(x_2^{(m)}, y+m-1)}, \ldots, S_{(x_{n_m}^{(m)}, y+m-1)},
\] 
with $x_k^{(m)}=x_1^{(m)}+k-1\in\bbZ$, $y\in\bbZ$.
We say this layer satisfies the condition S if
\begin{align}
\begin{split}
%x_1^{(2)}\le x_1^{(1)}, \quad x_{n_1}^{(1)}\le x_{n_{2}}^{(2)},\quad
z^{(m)}:=\max\{x_{1}^{(m)}, x_1^{(m+1)}\}<\min\{x_{n_m}^{(m)}, x_{n_{m+1}}^{(m+1)} \}+1=:w^{(m)},
\end{split}
\end{align}
for all $m=1,\ldots,l-1$.
\end{defn}
Let us consider $l$-layers $\mathfrak S$ of a sequence of squares 
with notations in Definition \ref{ds}, satisfying the condition S.
We set 
\begin{align}
\begin{split}
&M^{(m)}:=w^{(m)}-z^{(m)}\in\bbN, \\
&\tilde z^{(m)}:=\min \{x_{1}^{(m)}, x_1^{(m+1)}\},\\
&\tilde w^{(m)}:=\max\{x_{n_m}^{(m)}, x_{n_{m+1}}^{(m+1)} \}+1,
\end{split}
\end{align}
for $\quad m=1,\ldots,l-1$.
We name the interior horizontal edges between $m$-th and $m+1$-th layer as
\begin{align}
\begin{split}
&\bm e_{k}^{({m})}:=(z^{(m)}+k-1, y+{m})-(z^{({m})}+k, y+{m}),\quad k=1,\ldots, M^{(m)}-1\\
%&\bm e_{2}^{({m})}:=(z^{({m})}+1, y+{m})-(z^{({m})}+2, y+{m}),\\
%&\ldots,\\
%&\bm e_{M^{(m)}-1}^{({m})}:=(z^{({m})}+M^{(m)}-2, y+{m})-(z^{({m})}+M^{(m)}-1, y+{m})
\end{split}
\end{align}
for $1\le m\le l-1$ and
\begin{align}
\begin{split}
\bm f^{(m)}:=(w^{(m)}-1, y+{m})-(w^{(m)}, y+{m}).
\end{split}
\end{align}
for $1\le m\le l-1$.
We denote 
interior vertical edges as
\begin{align}
\begin{split}
&\bm {\tilde f}^{(m)}_{k}:=(x_{k}^{(m)}+1, y+m-1)-(x_{k}^{(m)}+1, y+m),\quad
k=1,\ldots, n_m-1
%&\bm{\tilde f}^{(m)}_{2}:=(x_{1}^{(m)}+2, y+m-1)-(x_{1}^{(m)}+2, y+m),\\
%&\ldots,\\
%&\bm {\tilde f}^{(m)}_{n_m}:=(x_{1}^{(m)}+n_m, y+m-1)-(x_{1}^{(m)}+n_m, y+m)\\
\end{split}
\end{align}
for $1\le m\le l$.
The edges on the boundary are
\begin{align}
\begin{split}
&\bm b_k^{(1)}:=(x_k^{(1)},y)-(x_{k}^{(1)}+1,y),\quad k=1,\ldots, n_1,\\
%\bm b_2^{(1)}:=(x_2^{(1)},y)-(x_3^{(1)},y),\quad\cdots,\quad
%\bm b_{n_1}^{(1)}:=(x_{n_1}^{(1)},y)-(x_{n_1+1}^{(1)},y)\\
&\bm b_k^{(l+1)}:=(x_k^{(l)},y+l)-(x_{k}^{(l)}+1,y+l),\quad k=1,\ldots, n_l,
%\bm b_2^{(l+1)}:=(x_2^{(l)},y+l)-(x_3^{(l)},y+l),\quad\\
%&\cdots,\quad
%\bm b_{n_1}^{(l+1)}:=(x_{n_1}^{(l)},y+l)-(x_{n_1+1}^{(l)},y+l)
\end{split}
\end{align}
corresponding to the top and bottom lines
and
\begin{align}
\begin{split}
%&\bm b_1^{(m L)}:=(x_1^{(m)}, y+{m})-(x_2^{(m)}, y+{m}),\quad
%\bm b_2^{(m L)}:=(x_2^{(m)}, y+{m})-(x_3^{(m)}, y+{m}),\quad\cdots,\quad
%\bm b_{z^{(m)}-x_1^{(m)}^{(m L)}:=(z^{(m)}-1, y+{m})-(z^{({m})}, y+{m}),\\
&\bm b_k^{(m L)}:=(\tilde{z}^{(m)}+k-1, y+{m})-(\tilde{z}^{(m)}+k, y+{m}),\quad
k=1,\ldots, z^{(m)}-\tilde z^{(m)}\\
&\bm b_k^{(m R)}:=(w^{(m)}+k-1, y+{m})-(w^{(m)}+k, y+{m}),\quad
k=1,\ldots, \tilde w^{(m)}-w^{(m)}
\end{split}
\end{align}
corresponding to the left and right segment of the horizontal line between $m$-th and $m+1$-th layer
$m=1,\ldots,l-1$,
and
\begin{align}
\begin{split}
&\tilde{\bm b}^{(m L)}:=(x_1^{(m)},y+m-1)-(x_1^{(m)},y+m),\\
&\tilde{\bm b}^{(m R)}:=(x_{n_m}^{(m)}+1,y+m-1)-(x_{n_m}^{(m)}+1,y+m),
\end{split}
\end{align}
$m=1,\ldots,l$
corresponding to the vertical line.

We set
\begin{align}\label{somen}
\begin{split}
&\caE^{(1)}\lmk {\mathfrak S}\rmk:=
\left\{\bm e_{k}^{({m})}
%\mid \;\;k=1,\ldots, M^{(m)}-1
\right\},\quad
\caE^{(2)}\lmk {\mathfrak S}\rmk:=
\left\{\bm f^{(m)}
%\mid 1\le m\le l-1
\right\},\quad
\caE^{(3)}\lmk {\mathfrak S}\rmk:=
\left\{
\tilde {\bm f}^{(m)}_{k}
%\mid k=1,\ldots, n_m-1,\quad1\le m\le l
\right\},\\
&\caE^{(4)}\lmk {\mathfrak S}\rmk:=
\left\{
\bm b_k^{(1)}, \;\bm b_k^{(l+1)},\;\bm b_k^{(m L)},\;
\bm b_k^{(m R)},\; \tilde{\bm b}^{(m L)},\;\tilde {\bm b}^{(m R)}
\right\},
\end{split}
\end{align}
Then we have
\begin{align}
\begin{split}
&\bbE\lmk\mathfrak S\rmk:=\cup_{p\in\mathfrak S}\{\text{edge of}\;p\}
=
 \caE^{(1)}\lmk {\mathfrak S}\rmk \dot{\cup} \caE^{(2)}\lmk {\mathfrak S}\rmk \dot{\cup} 
\caE^{(3)}\lmk {\mathfrak S}\rmk \dot{\cup} \caE^{(4)}\lmk {\mathfrak S}\rmk.
\end{split}
\end{align}
We also set
\begin{align}
&\mathbb V\lmk \mathfrak S \rmk
:=\cup_{e\in \mathbb E\lmk{\mathfrak S}\rmk}\{ v_{e,+1}, v_{e,-1}\}.
\end{align}

%We further set
%\begin{align}
%\begin{split}
%&e_{1}^{({l_\Gamma})}:=(x_{k_0+1}^{({l_\Gamma})}, y+{l_\Gamma}-1)-(x_{k_0+2}^{({l_\Gamma})}, y+{l_\Gamma}-1),\\
%&e_{2}^{({l_\Gamma})}:=(x_{k_0+2}^{({l_\Gamma})}, y+{l_\Gamma}-1)-(x_{k_0+3}^{({l_\Gamma})}, y+{l_\Gamma}-1),\\
%&\ldots,\\
%&e_{n_{{l_\Gamma}}-1}^{({l_\Gamma})}:=(x_{n_{{l_\Gamma}}-1}^{({l_\Gamma})}, y+{l_\Gamma}-1)-(x_{n_m}^{({l_\Gamma})}, y+{l_\Gamma}-1)
%\end{split}
%\end{align}

\begin{defn}\label{ringo}
Let  $\mathfrak S$ be $l$-layers of a sequence of squares given in Definition \ref{ds},
satisfying  the condition S.
We denote by $\mfp^{(r)}$ the self-avoiding path 
in $\caE^{(4)}\lmk {\mathfrak S}\rmk$ with origin
$(x_1^{(1)},y)$ target $(x_{n_{l}}^{({l})}+1, y+l)$
starting as 
$(x_1^{(1)}, y)-(x_1^{(1)}+1, y)-\cdots$ and following the boundary.
We denote by $\mfp^{(l)}$ the self-avoiding path in $\caE^{(4)}\lmk {\mathfrak S}\rmk$ with origin
$(x_1^{(1)},y)$ target $(x_{n_{l}}^{({l})}+1, y+l)$
 starting as 
$(x_1^{(1)}, y)-(x_1^{(1)}, y+1)-\cdots$ and following the boundary.
Because of condition S,
$\mfp^{(r)}$ and $\mfp^{(l)}$ intersects only at the origin 
$(x_1^{(1)},y)$ and the target $(x_{n_{l}}^{({l})}+1, y+l)$.
\end{defn}

\begin{lem}\label{basic}
Let  $\mathfrak S$ be $l$-layers of a sequence of squares given in Definition \ref{ds}.
Suppose that the $\mathfrak S$ satisfies the condition S.
Then
 for any $v_0,v_1\in \mathbb V\lmk \mathfrak S \rmk$ and paths
$\mfp,\tilde{\mfp}$ in $\bbE\lmk\mathfrak S\rmk$ with origin  $v_0$ and target $v_1$,
there exists a finite sequence of paths $\mfp_i $ in $\bbE(\mathfrak S)$ with origin  $v_0$ and target $v_1$
$i=1,\ldots, n$,
such that $\mfp_1=\mfp$, $\mfp_n=\tilde\mfp$ and
$\lmk \mfp_i\rmk^{-1}\mfp_{i+1}=\mfp_{S_i}$
or $\lmk \mfp_i\rmk^{-1}\mfp_{i+1}=\mfp_{S_i^{-1}}$
for some square $S_i$ in $\mathfrak S$
or $\lmk \mfp_i\rmk^{-1}\mfp_{i+1}=\mfq_i\mfq_i^{-1}$ for a path
$\mfq_i$  in $\bbE\lmk \mathfrak S\rmk$ , $i=1,\ldots,n-1$.
\end{lem}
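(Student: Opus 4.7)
The plan is to interpret the lemma as a combinatorial homotopy statement in the planar $2$-complex $\mathfrak{S}$, which by condition S is simply connected: indeed Definition~\ref{ringo} already exhibits its boundary as two self-avoiding paths $\mfp^{(r)}$, $\mfp^{(l)}$ meeting only at the two extreme corners. The three move types permitted in the statement are precisely the local relations --- face relations around each square, and trivial backtracks --- that generate combinatorial homotopy in a simply connected square complex, so the conclusion is expected to follow from the standard argument. The real work is to realize this with the specific move set as stated.

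First I would reduce to a statement about loops. Let $\sim$ denote the equivalence relation on paths with fixed endpoints generated by the three moves. Using Move~3 with $\mfq$ a single edge, elementary backtrack insertion/deletion is available, and from this I would argue that $\mfp \sim \tilde{\mfp}$ holds if and only if the loop $\tilde{\mfp}^{-1}\mfp$ based at $v_1$ can be $\sim$-reduced to the constant loop at $v_1$. Thus it suffices to prove the following loop claim: every loop $\gamma$ in $\bbE(\mathfrak{S})$ is $\sim$-equivalent to the constant loop.

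For this loop claim I would induct on (area enclosed by $\gamma$ in the plane, length of $\gamma$) in lexicographic order. Using Move~3 we may assume $\gamma$ is cyclically reduced. If $\gamma$ is nonconstant, simple connectivity of $\mathfrak{S}$ forces $\gamma$ to enclose a nonempty finite union of squares of $\mathfrak{S}$. Choose an extremal enclosed square $S$ --- for instance the topmost square among those in the leftmost occupied column --- and observe that because $\gamma$ is reduced and $S$ touches the outer boundary of the enclosed region, two consecutive edges of $\gamma$ must coincide with two adjacent sides of $\partial S$. A Move~1 or Move~2 then swaps these for the opposite two sides of $\partial S$, strictly reducing the enclosed area by one square, and the induction concludes.

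The main obstacle will be the planar-combinatorial step: verifying rigorously that at an extremal enclosed square one can always find two consecutive edges of $\gamma$ realizing two adjacent sides of $\partial S$, and that the resulting substitution genuinely fits the form $(\mfp_i)^{-1}\mfp_{i+1} = \mfp_{S_i}^{\pm 1}$ --- up to the freedom in the starting corner of the square loop, which, if needed, can be absorbed by prepending or appending a $\mfq\mfq^{-1}$ via Move~3. A secondary technical point is making the reduction from an arbitrary pair $(\mfp,\tilde{\mfp})$ to the loop claim precise purely in terms of Moves~1--3, rather than tacitly appealing to the free groupoid on edges.
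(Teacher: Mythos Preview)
Your approach is correct in outline but takes a genuinely different route from the paper. The paper argues by induction on the number $l$ of layers: for $l=1$ it deforms any path to a canonical ``horizontal then vertical'' path, and for $l+1$ it slices each of $\mfp,\tilde\mfp$ into alternating segments lying in the first $l$ layers and in the top layer, conjugates each segment by a horizontal path back to a fixed base point on the interface, and applies $P_l$ and $P_1$ to the respective pieces to bring both paths to a common normal form. In other words, the paper exploits the specific layered structure of $\mathfrak S$ rather than any general simple-connectivity principle.

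Your argument instead treats $\mathfrak S$ as a simply connected square $2$-complex (condition~S being exactly what guarantees this) and runs the standard ``disk-filling'' induction on area. This is more conceptual and would generalize immediately to any simply connected union of squares, not just staircase regions. The price is the planar-combinatorial step you flag: for a loop $\gamma$ that is cyclically reduced but not simple, ``area enclosed'' must be made precise (e.g.\ via total winding number over squares, or by first decomposing $\gamma$ at a self-intersection into shorter loops), and the claim that an extremal square contributes two \emph{consecutive} sides to $\gamma$ needs a careful choice of extremal vertex (leftmost among bottommost vertices on $\gamma$, say) together with the reducedness of $\gamma$. These are routine but must be written out; the paper's layer induction sidesteps them entirely at the cost of being tied to the specific geometry. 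Your observation that the literal form $(\mfp_i)^{-1}\mfp_{i+1}=\mfp_{S_i}^{\pm1}$ pins the base corner, and that this can be repaired with a conjugating $\mfq\mfq^{-1}$ via Move~3, is correct and is a point the paper's own proof treats informally as well.
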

\begin{rem}
Let $\tilde{\mathfrak S}$ be a set of squares and  $\bbE\lmk\tilde{\mathfrak S}\rmk$
the set of all edges of the squares in $\tilde{\mathfrak S}$.
For paths $\mfp,\tilde{\mfp}$ in $\bbE\lmk\tilde{\mathfrak S}\rmk$ with common
origin  and target,
we say $\mfp$ can be deformed into $\tilde\mfp$
in $\tilde{\mathfrak S}$ if there is a sequence of paths
as in Lemma \ref{basic}.
\end{rem}
\begin{proof}
We consider the following proposition for each $l\in\bbN$.
\begin{quote}
$P_l$ : For $l$-layers $\mathfrak S$ of a sequence of squares satisfying the condition S,
for any $v_0,v_1\in \mathbb V\lmk \mathfrak S \rmk$ and paths
$\mfp,\tilde{\mfp}$ in $\bbE\lmk\mathfrak S\rmk$ with origin  $v_0$ and target $v_1$,
there exists a finite sequence of paths $\mfp_i $ with origin  $v_0$ and target $v_1$
$i=1,\ldots, n$,
such that $\mfp_1=\mfp$, $\mfp_n=\tilde\mfp$ and
$\lmk \mfp_i\rmk^{-1}\mfp_{i+1}=\mfp_{S_i}$
or $\lmk \mfp_i\rmk^{-1}\mfp_{i+1}=\mfp_{S_i^{-1}}$
for some square $S_i$ in $\mathfrak S$
or $\lmk \mfp_i\rmk^{-1}\mfp_{i+1}=\mfq_i\mfq_i^{-1}$ for a path
$\mfq_i$  in $\bbE\lmk \mathfrak S\rmk$ , $i=1,\ldots,n-1$.
\end{quote}
$P_1$ is true because any path from $v_0=(a,b)$ to $v_1=(c,d)$
 in $1$-layer of squares can be deformed into
 a path
 \begin{center}
 $(a,b)\to (c,b)$ horizontally $(c,b)\to(c,d)$ vertically.
 \end{center} 
 Suppose $P_l$ is true.
 Let $\mathfrak S$ be $l+1$-layers of a sequence of squares satisfying the condition S.
Let $v_0,v_1\in \mathbb V\lmk \mathfrak S \rmk$, and 
let $\mfp,\tilde{\mfp}$ be paths in $\bbE\lmk\mathfrak S\rmk$
with origin  $v_0$ and target $v_1$.
We consider the case $v_0$ is in the first $l$-layers and $v_1$ is in the 
$l+1$-th layers.
The proof are the same for other cases.
The path $\mfp$ can be split into sequence of paths 
\[
\mfp_1, \hat\mfp_1, \mfp_2, \hat\mfp_2,\cdots \mfp_L, \hat\mfp_L, 
\]
where $\mfp_k$ is a path inside of the first $l$-layers
and $\hat\mfp_k$ is a path inside of the $l+1$-th layer.
The origin of $\mfp_1$ is $v_0$ and the target of $\hat{{\mfp}}_{ L}$ is $v_1$.
The target of $\mfp_k$ is origin of 
$\hat{\mfp}_k$, $k=1,\ldots,L$
and
the target of $\hat{\mfp}_k$ is the origin of $\mfp_{k+1}$,
$k=1,\ldots,L-1$.
For each $k$,  let $\mathfrak q_k$ $k=1,\ldots, L$ 
(resp $\hat{\mathfrak q}_k$, $k=1,\ldots, L-1$) be a horizontal path in $\mathfrak S$
in the line between $l$-th and $l+1$-th layer such that
\[
\text{origin of}\;  \mathfrak q_k=\text{target of}\; \mfp_k=\text{origin of}\; \hat\mfp_k\quad
\text{target of}\;\mathfrak q_k=(x_1^{(l)}, y+l-1)
\]
\[
\text{origin of}\;  \hat{\mathfrak q}_k=\text{target of}\; \hat{\mfp}_k=\text{origin of}\;\mfp_{k+1},\quad
\text{target of}\;\hat{\mathfrak q}_k=(x_1^{(l)}, y+l-1).
\]

The path
$\tilde{\mfp}$ can be split into sequence of paths 
\[
{\tilde{\mfp}}_1, \hat{\tilde{\mfp}}_1, {\tilde{\mfp}}_2, \hat{\tilde{\mfp}}_2,\cdots {\tilde{\mfp}}_{\tilde L}, \hat{\tilde{\mfp}}_{\tilde L}, 
\]
where ${\tilde{\mfp}}_k$ is a path inside of the first $l$-layers
and $\hat{\tilde{\mfp}}_k$ is a path inside of the $l+1$-th layer.
The target of $\tilde\mfp_k$ is origin of 
$\hat{\tilde\mfp}_k$, $k=1,\ldots,\tilde L$
and
the target of $\hat{\tilde\mfp}_k$ is the origin of ${\tilde\mfp}_{k+1}$,
$k=1,\ldots,\tilde L-1$.
The origin of ${\tilde{\mfp}}_1$ is $v_0$ and the
target of $\hat{\tilde{\mfp}}_{\tilde L}$ is $v_1$.
For each $k$, let $\tilde{\mathfrak q}_k$ $k=1,\ldots, \tilde L$  (resp. $\hat{\tilde{\mathfrak q}}_k$
$k=1,\ldots, \tilde L-1$) be 
a horizontal path in $\mathfrak S$
in the line between $l$-th and $l+1$-th layer such that
\[
\text{origin of}\;  {\tilde{\mathfrak q}_k}=\text{target of}\; {\tilde{\mfp}}_k
=\text{origin of}\;{ \hat{\tilde{\mfp}}}_k,\quad
\text{target of}\;{\tilde{\mathfrak q}_k}=(x_1^{(l)}, y+l-1)
\]
\[
\text{origin of}\;  \hat{\tilde{\mathfrak q}}_k=\text{target of}\; \hat{\tilde\mfp}_k
=\text{origin of}\;\tilde\mfp_{k+1},\quad
\text{target of}\;\hat{\tilde{\mathfrak q}}_k=(x_1^{(l)}, y+l-1).
\]

We fix a horizontal loop $\mathfrak l$ in $\mathfrak S$,
in the line between $l$-th and $l+1$-th layer,
with origin $(x_1^{(l)}, y+l-1)$ and target $(x_1^{(l)}, y+l-1)$.

Both of  $\mfp_1\mathfrak q_1$ and $\tilde{\mfp}_1\tilde{\mathfrak q}_1$ are paths inside of the first $l$-layers
with origin $v_0$ and target $(x_1^{(l)}, y+l-1)$.
Hence by $P_l$,  $\mfp_1\mathfrak q_1$ can be deformed into $\tilde{\mfp}_1\tilde{\mathfrak q}_1$
in the first $l$-layers. 
All of ${\mathfrak q}_j^{-1} \hat\mfp_j\hat{\mathfrak q}_j$, $j=1,\ldots, L-1$ and
$\tilde {\mathfrak q}_k^{-1} \hat{\tilde{\mfp}}_k\hat{\tilde{\mathfrak q}}_k$
$k=1,\ldots, \tilde L-1$
are paths inside of the $l+1$-th layer
 with origin $(x_1^{(l)}, y+l-1)$ and target $(x_1^{(l)}, y+l-1)$.
 Hence from $P_1$, 
 they can be deformed into $\mathfrak l$ in the $l+1$-th layer.
All of 
$\hat{\mathfrak q}_{j-1}^{-1} \mfp_j\mathfrak q_{j}$,  $j=2,\ldots, L$ and
$\hat{\tilde{\mathfrak q}}_{k-1}^{-1} \tilde\mfp_k\tilde{\mathfrak q}_{k}$ 
$k=2,\ldots, \tilde L$ are paths inside of 
 the first $l$-layers with origin $(x_1^{(l)}, y+l-1)$ and target $(x_1^{(l)}, y+l-1)$.
 Hence by $P_l$, they all
can be deformed into $\mathfrak l^{-1}$.
Both of  $\mfq_L^{-1}\hat\mfp_L$ and  $\tilde{\mfq}_{\tilde L}^{-1}\hat{\tilde\mfp}_{\tilde L}$ are paths in
$l+1$-th layer with origin $(x_1^{(l)}, y+l-1)$ target $v_1$. Hence by $P_1$, $\mfq_L^{-1}\hat\mfp_L$ 
can be deformed into $\tilde{\mfq}_{\tilde L}^{-1}\hat{\tilde\mfp}_{\tilde L}$ in the $l+1$-th layer.

 Hence $\mfp$ can be deformed into
 \begin{align}
 \begin{split}
 \tilde{\mfp}_1\tilde{\mathfrak q}_1, \mathfrak l, \mathfrak l^{-1},\cdots,
 \mathfrak l, \mathfrak l^{-1}, \tilde{\mfq}_{\tilde L}^{-1}\hat{\tilde\mfp}_{\tilde L}
 \end{split}
 \end{align}
 with $L-1$ number of $ \mathfrak l, \mathfrak l^{-1}$
 and 
$\tilde \mfp$ can be deformed into
 \begin{align}
 \begin{split}
 \tilde{\mfp}_1\tilde{\mathfrak q}_1, \mathfrak l, \mathfrak l^{-1},\cdots,
 \mathfrak l, \mathfrak l^{-1}, \tilde{\mfq}_{\tilde L}^{-1}\hat{\tilde\mfp}_{\tilde L},
 \end{split}
 \end{align}
 with $\tilde L-1$ number of $ \mathfrak l, \mathfrak l^{-1}$.
 Therefore, $\mfp$ can be deformed into $\tilde\mfp$
 in $\mathfrak S$.
\end{proof}
\begin{defn}\label{kaki}
Let $\mathfrak S$ be
$l$-layers  of a sequence of squares satisfying condition S.
Any two vertices in $v_0,v\in \bbV(\mathfrak S)$ can be connected via a path in
${\bbE(\mathfrak S)}$.
By Lemma \ref{basic}, 
for any paths $\mfp,\tilde \mfp$ in $\bbE(\mathfrak S)$ with origin
$v_0$ and target $v\in \bbV(\mathfrak S)$
can be deformed into each other in $\mathfrak S$.
Note (with notation in Lemma \ref{basic})
that for each i-th step of the deformation,
 we have $\Psi_{\mfp_i}(\bm h_{\mfp_i})=\Psi_{\mfp_{i+1}}(\bm h_{\mfp_{i+1}})$
for any $\bm h\in C_{\bbE(\mathfrak S)}$, by the admissibility condition.
Hence  for any  $\bm h\in C_{\bbE(\mathfrak S)}$ and $v_0\in \bbV(\mathfrak S)$,
any paths $\mfp,\tilde \mfp$ in $\bbE(\mathfrak S)$ with origin
 $v_0$ and target $v\in \bbV(\mathfrak S)$, we have
\begin{align}
\Psi_\mfp(\bm h_\mfp)=\Psi_{\tilde \mfp}(\bm h_{\tilde \mfp})
\end{align}
and
we may define
\begin{align}
\lmk \Psi^{v_0}(\bm h)\rmk_v:=\Psi_{ \mfp}(\bm h_{ \mfp}),\quad \bm h\in C_{\bbE(\mathfrak S)}, \; v\in {\mathbb V\lmk \mathfrak S \rmk},
\end{align}
independent of the choice of the path $\mfp$ in $\mathfrak S$ with origin $v_0$ and target $v$.

\end{defn}

We would like to count the number of admissible configurations
on $\bbE\lmk\mathfrak S\rmk$
under a given boundary condition.
We have the following Lemma.
\begin{lem}\label{alr}
Let  $\mathfrak S$ be $l$-layers of a sequence of squares.
Suppose that $\mathfrak S$ satisfies the condition S.
Let $\mfp^{(l)}$, $\mfp^{(r)}$ be the paths in Definition \ref{ringo} for this $\mathfrak S$.
For $\bm a=(a_v)_{v\in\caE^{(4)}\lmk {\mathfrak S}\rmk}\in G^{\caE^{(4)}\lmk {\mathfrak S}\rmk}$, the followings are
equivalent.
\begin{description}
\item[(i)]$\Psi_{\mfp^{(r)}}(\bm a)=\Psi_{\mfp^{(l)}}(\bm a)$.
\item[(ii)] There is a
\[
\bm g=\lmk g_e\rmk_{e\in {\caE^{(1)}\lmk {\mathfrak S}\rmk}}\in G^{\caE^{(1)}\lmk {\mathfrak S}\rmk}
\]
which allows unique $\bm b^{(2)}\in G^{\caE^{(2)}\lmk {\mathfrak S}\rmk}$, 
$\bm b^{(3)}\in G^{\caE^{(3)}\lmk {\mathfrak S}\rmk}$ such that
\[
(\bm a, \bm g, \bm b^{(2)}, \bm b^{(3)})\in C_{\bbE\lmk\mathfrak S\rmk}.
\]
\item[(iii)] For any 
\[
\bm g=\lmk g_e\rmk_{e\in {\caE^{(1)}\lmk {\mathfrak S}\rmk}}\in G^{\caE^{(1)}\lmk {\mathfrak S}\rmk}
\]
there is unique $\bm b^{(2)}\in G^{\caE^{(2)}\lmk {\mathfrak S}\rmk}$, 
$\bm b^{(3)}\in G^{\caE^{(3)}\lmk {\mathfrak S}\rmk}$ 
 such that
 \[
(\bm a, \bm g, \bm b^{(2)}, \bm b^{(3)})\in C_{\bbE\lmk\mathfrak S\rmk}.
\]
\end{description}
\end{lem}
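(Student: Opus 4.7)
The plan is to establish the cycle of implications (iii)$\Rightarrow$(ii)$\Rightarrow$(i)$\Rightarrow$(iii). The first is immediate (any $\bm g$ witnesses existence) and the second follows directly from Definition \ref{kaki}. The substance lies in (i)$\Rightarrow$(iii), which I would prove by constructing a $G$-valued ``potential'' $\Psi:\bbV(\mathfrak S)\to G$ whose coboundary recovers the edge values: every edge from $u$ to $w$ will be assigned the value $\Psi(u)^{-1}\Psi(w)$, so that the admissibility condition on each square becomes the telescoping identity $\Psi(u)^{-1}\Psi(u)=e$.

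For (ii)$\Rightarrow$(i), given an admissible extension, I would apply Definition \ref{kaki} to the paths $\mfp^{(l)}$ and $\mfp^{(r)}$, which share both endpoints and lie entirely in $\caE^{(4)}\lmk\mathfrak S\rmk$; path-independence of $\Psi_{\mfp}$ then yields $\Psi_{\mfp^{(l)}}(\bm a)=\Psi_{\mfp^{(r)}}(\bm a)$. For (i)$\Rightarrow$(iii), fix $\bm a$ obeying (i), an arbitrary $\bm g\in G^{\caE^{(1)}\lmk\mathfrak S\rmk}$, and basepoint $v_0:=(x_1^{(1)},y)$ with $\Psi(v_0):=e$, and proceed in two stages. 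First, define $\Psi$ at every boundary vertex by tracing along $\mfp^{(r)}$ or $\mfp^{(l)}$ using $\bm a$; by Definition \ref{ringo} these self-avoiding paths meet only at $v_0$ and the common target $(x_{n_l}^{(l)}+1,y+l)$, and hypothesis (i) is exactly what makes the two definitions agree there, so $\Psi$ is unambiguous on all of $\bbV(\caE^{(4)}\lmk\mathfrak S\rmk)$. Second, the remaining vertices are precisely the interior vertices $(z^{(m)}+j,y+m)$ with $1\le j\le M^{(m)}-1$ and $1\le m\le l-1$; each is linked to the already-assigned boundary vertex $(z^{(m)},y+m)$ by the chain $\bm e_1^{(m)},\ldots,\bm e_j^{(m)}\in \caE^{(1)}\lmk\mathfrak S\rmk$, so propagating along this chain using $\bm g$ defines $\Psi$ on it. With $\Psi$ specified on every vertex, the values on $\caE^{(2)}\lmk\mathfrak S\rmk\cup \caE^{(3)}\lmk\mathfrak S\rmk$ are forced (uniquely) to be the coboundaries $\Psi(u)^{-1}\Psi(w)$ of their endpoints, yielding $\bm b^{(2)}$ and $\bm b^{(3)}$, and admissibility of each square then holds by construction.

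The one genuinely delicate point is the consistency of $\Psi$ on boundary vertices: the two routes around the staircase could a priori impose conflicting values at the upper-right corner, and the resolution is exactly (i) combined with the geometric input of Definition \ref{ringo} (itself a consequence of condition S) that $\mfp^{(l)}$ and $\mfp^{(r)}$ are disjoint away from their endpoints, so no further cycle constraint appears. Uniqueness in (iii) follows because, in any admissible extension, reconstructing $\Psi$ via the same two-stage recipe uses only $\bm a$ on boundary paths and $\bm g$ on the interior horizontal chains; hence each entry of $\bm b^{(2)}$ and $\bm b^{(3)}$ is determined solely by $(\bm a,\bm g)$. Notably, the construction bypasses Lemma \ref{basic}: rather than deforming arbitrary paths, I build the admissible configuration directly from $\Psi$.
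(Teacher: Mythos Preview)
Your argument is correct, and for (iii)$\Rightarrow$(ii) and (ii)$\Rightarrow$(i) it coincides with the paper's (the paper likewise invokes the path-independence encapsulated in Lemma~\ref{basic}/Definition~\ref{kaki} for (ii)$\Rightarrow$(i)). The real divergence is in (i)$\Rightarrow$(iii). The paper proceeds algorithmically: it sweeps through the layers $\mathfrak S^{(1)},\ldots,\mathfrak S^{(l)}$, within each layer determining the vertical interior edges $\tilde{\bm f}^{(m)}_k$ one square at a time from both ends toward $w^{(m)}$, then fixing $\bm f^{(m)}$; at the very end the sole remaining admissibility constraint, on $S_{(x_{n_l}^{(l)},\,y+l-1)}$, is shown to reduce exactly to (i). Your approach is the cohomological reformulation: build the $G$-valued potential $\Psi$ on $\bbV(\mathfrak S)$ using $\bm a$ along the boundary and $\bm g$ along the interior horizontal chains, then declare the remaining edges to be coboundaries, so every plaquette relation telescopes. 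Your version makes the role of (i) more transparent (it is precisely the single cocycle condition on the one boundary loop, absorbed at the target of $\mfp^{(l)},\mfp^{(r)}$), and it avoids the layer-by-layer bookkeeping; the paper's version, on the other hand, yields uniqueness for free as a by-product of the deterministic sweep, whereas your uniqueness paragraph tacitly leans on Definition~\ref{kaki} again (to know that in \emph{any} admissible extension each edge value equals $\Psi(u)^{-1}\Psi(w)$ with $\Psi$ computed along your chosen paths). That reliance is harmless, since you already invoke Definition~\ref{kaki} for (ii)$\Rightarrow$(i), but your closing remark that the construction ``bypasses Lemma~\ref{basic}'' should be read as applying only to the existence half of (i)$\Rightarrow$(iii); the paper's (i)$\Rightarrow$(iii) does not use Lemma~\ref{basic} either.
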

\begin{proof}
We use the notation in Definition \ref{ds} to describe $\mathfrak S$.
(iii)$\Rightarrow$ (ii) is trivial.
(ii)$\Rightarrow$ (i) holds from the admissibility condition
of $(\bm a, \bm g, \bm b^{(2)}, \bm b^{(3)})\in C_{\bbE\lmk\mathfrak S\rmk}$,
because from Lemma \ref{basic}
we can deform the path $\mfp^{(l)}$ into $\mfp^{(r)}$
via paths inside of $\bbE\lmk\mathfrak S\rmk$ connecting $(x_1^{(1)},y)$ to {$(x_{n_{l_\Gamma}}^{({l_\Gamma})}+1, y+{l_\Gamma})$}.
\\
Now we prove (i)$\Rightarrow$ (iii).
Fix any $\bm g=\lmk g_e\rmk_{e\in {\caE^{(1)}\lmk {\mathfrak S}\rmk}}\in G^{{\caE^{(1)}\lmk {\mathfrak S}\rmk}}$.
We show that the label of rest of the edges $\caE^{(2)}\lmk {\mathfrak S}\rmk\cup\caE^{(3)}\lmk {\mathfrak S}\rmk$
are determined uniquely by the admissibility condition.
For each $m=1,\ldots, l$, we denote by $\mathfrak S^{(m)}$
the $m$-th layer of the squares of $\mathfrak S$.

(1)
We start from $\mathfrak S^{(1)}$.
The  condition
$\bm g=\lmk g_e\rmk_{e\in {\caE^{(1)}\lmk {\mathfrak S}\rmk}}\in G^{{\caE^{(1)}\lmk {\mathfrak S}\rmk}}$ and 
$\bm a=(a_v)_{v\in\caE^{(4)}\lmk {\mathfrak S}\rmk}\in G^{\caE^{(4)}\lmk {\mathfrak S}\rmk}$ set all
the labels of edges in $\mathfrak S^{(1)}$ : 
The edges whose labels are not determined yet are
\begin{align}
\begin{split}
\bm f^{(1)}:=(w^{(1)}-1, y+{1})-(w^{(1)}, y+{1}).
\end{split}
\end{align}
and
\begin{align}
\begin{split}
&\bm {\tilde f}^{(1)}_{k}:=(x_{k}^{(1)}+1, y)-(x_{k}^{(1)}+1, y+1),\quad
k=1,\ldots, n_1-1.
%&\bm{\tilde f}^{(m)}_{2}:=(x_{1}^{(m)}+2, y+m-1)-(x_{1}^{(m)}+2, y+m),\\
%&\ldots,\\
%&\bm {\tilde f}^{(m)}_{n_m}:=(x_{1}^{(m)}+n_m, y+m-1)-(x_{1}^{(m)}+n_m, y+m)\\
\end{split}
\end{align}
If $n_1\ge 2$ and $x_1^{(1)}+1\le w^{(1)}-1$, $ \bm {\tilde f}^{(1)}_{1}$ belongs to a square 
$S_{(x_1^{(1)}, y)}$.
All other edges in $S_{(x_1^{(1)}, y)}$
are in $\caE^{(4)}\lmk {\mathfrak S}\rmk\cup \caE^{(1)}\lmk {\mathfrak S}\rmk$.
The labels of these edges are determined from $\bm g$ and $\bm a$.
Therefore, by the admissibility condition,
the label of $ \bm {\tilde f}^{(1)}_{1}$ is determined automatically.
Next if $x_{2}^{(1)}+1\le w^{(1)}-1$ then
 $ \bm {\tilde f}^{(1)}_{1}$ and $ \bm {\tilde f}^{(2)}_{1}$
belong to the same square $S_{(x_2^{(1)}, y)}$.
Two other edges from this square $S_{(x_2^{(1)}, y)}$ 
belongs to $\caE^{(4)}\lmk {\mathfrak S}\rmk\cup \caE^{(1)}\lmk {\mathfrak S}\rmk$.
Hence three edges in this square is already determined.
By the admissibility condition
the label of the last one $ \bm {\tilde f}^{(2)}_{1}$ is uniquely determined.
We can continue this $w^{(1)}-x_1^{(1)}-1$ times
and determine labels of $ \bm {\tilde f}^{(1)}_{1},\cdot,\bm {\tilde f}^{(1)}_{w^{(1)}-x_1^{(1)}-1}$
uniquely.
If $w^{(1)}\le x_{n_1}^{(1)}$,  then
$x_{n_2}^{(1)}\le x_{n_1}^{(1)}$ and 
$\bm {\tilde f}^{(1)}_{n_1-1}$ belongs to a square $S_{(x_{n_1}^{(1)}, y)}$.
All other edges of $S_{(x_{n_1}^{(1)}, y)}$
are in $\caE^{(4)}\lmk {\mathfrak S}\rmk$.
The labels of these edges are determined from $\bm a$.
Therefore, by the admissibility condition,
the label of $ \bm {\tilde f}^{(1)}_{n_1}$ is determined automatically.
By the same procedure as before, 
all the labels of $\bm {\tilde f}^{(1)}_{k}$, $k=n_1-1, n_1-2,\ldots, w^{(1)}-x_1^{(1)}$
can be decided uniquely
so that the admissibility condition on
\begin{align}
\begin{split}
S_{(x_1^{(1)}, y)}, S_{(x_2^{(1)}, y)},\ldots, S_{(w^{(1)}-2, y)}, S_{(w^{(1)}, y)},\ldots, S_{(x_{n_1}^{(1)}, y)}
\end{split}
\end{align}
(i.e., all the squares in $\mathfrak S^{(1)}$ but $S_{(w^{(1)}-1, y)}$
are satisfied.
Hence all the edges in $\mathfrak S^{(1)}$ but $\bm f^{(1)}$ is labeled.
But $\bm f^{(1)}$ belongs to the square
$S_{(w^{(1)}, y)}$.Three other edges of $S_{(w^{(1)}, y)}$
are already labeled.
Hence from the admissibility condition of $S_{(w^{(1)}, y)}$,
this label $\bm f^{(1)}$  is also determined uniquely.

(2) This procedure continues.
By the $m$-th step, all the labels of edges in $\mathfrak S^{(1)},\ldots, \mathfrak S^{(m-1)}$
are determined uniquely, so that the admissibility conditions of all the squares in $\mathfrak S^{(1)},\ldots, \mathfrak S^{(m-1)}$
are satisfied.
In particular, all the edges in $\mathfrak S^{(m)}$
 except for $\bm f^{(m)}$, $\bm {\tilde f}^{(m)}_{k}:=(x_{k}^{(m)}+1, y+m-1)-(x_{k}^{(m)}+1, y+m)$,
$k=1,\ldots, n_m-1$ are already labeled.
By the same procedure as in (1), we can decide the label
of $\bm {\tilde f}^{(m)}_{k}$, $k=1,\ldots, n_m-1$ uniquely
so that the admissibility condition of
\begin{align}
\begin{split}
S_{(x_1^{(m)}, y+m-1)}, S_{(x_2^{(m)}, y+m-1)},
\ldots, S_{(w^{(m)}-2, y+m-1)}, S_{(w^{(m)}, y+m-1)},\ldots, S_{(x_{n_m}^{(m)}, y+m-1)}
\end{split}
\end{align}
(i.e., all the squares in $\mathfrak S^{(m)}$ but $S_{(w^{(m)}-1, y+m-1)}$)
are satisfied.
The label of $\bm {f}^{(m)}$ is the determined by
the admissibility condition on $S_{(w^{(m)}-1, y+m-1)}$ uniquely,
from the labels of edges already determined.

(3) Hence we label all the edges in $\mathfrak S^{(1)},\ldots, \mathfrak S^{(l-1)}$,
uniquely so that the admissibility conditions hold for all squares in $\mathfrak S^{(1)},\ldots, \mathfrak S^{(l-1)}$.
In $\mathfrak S^{(l)}$, the only edges without labels are
$\bm {\tilde f}^{(l)}_{k}$, $k=1,\ldots, n_l-1$.
We proceed in the order
\begin{align}
\begin{split}
S_{(x_1^{(l)}, y+l-1)}, S_{(x_2^{(l)}, y+l-1)},\ldots, S_{(x_{n_l}^{(l)}-1, y+l-1)}
\end{split}
\end{align}
 to define the
labels $\bm {\tilde f}^{(l)}_{k}$, $k=1,\ldots, n_l-1$
uniquely via the admissibility condition of these squares.
Hence we obtain the unique  label $\bm b^{(2)}\in \caE^{(2)}\lmk {\mathfrak S}\rmk$, 
$\bm b^{(3)}\in \caE^{(3)}\lmk {\mathfrak S}\rmk$
which satisfies the admissibility conditions of all the squares in $\mathfrak S$ but $S_{(x_{n_l}^{(l)}, y+l-1)}$.
The necessary and sufficient condition
for the admissibility condition of $S_{(x_{n_l}^{(l)}, y+l-1)}$
to hold for the obtained label $\bm k:=(\bm a, \bm g, \bm b^{(2)}, \bm b^{(3)})\in G^{\bbE(\mathfrak S)}$
is
\begin{align}
\begin{split}
&a_{(x_{n_{l}}^{({l})}+1,y+l-1)-(x_{n_{l}}^{({l})}+1,y+l )}\lmk \Psi_{\mfp^{(r)}}(\bm a)\rmk^{-1}
\Psi_{\mfp^{(l)}}(\bm a)
\lmk a_{(x_{n_{l}}^{({l})},y+l )-(x_{n_{l}}^{({l})}+1,y+l )}\rmk^{-1}\\
&=
\lmk k_{(x_{n_l}, y+l-1)-(x_{n_l}+1, y+l-1)}\rmk^{-1}k_{(x_{n_l}, y+l-1)-(x_{n_l}, y+l)
}\\
&=
a_{(x_{n_{l}}^{({l})}+1,y+l-1)-(x_{n_{l}}^{({l})}+1,y+l )}
\lmk a_{(x_{n_{l}}^{({l})},y+l )-(x_{n_{l}}^{({l})}+1,y+l )}\rmk^{-1}.
\end{split}
\end{align}
For the first equality, we used the fact that
 $\bm k:=(\bm a, \bm g, \bm b^{(2)}, \bm b^{(3)})\in G^{\bbE(\mathfrak S)}$
 satisfies the admissibility conditions of all the squares but $S_{(x_{n_l}^{(l)}, y+l-1)}$.
This condition is equivalent to (i).
Hence all the admissibility conditions are satisfied.

\end{proof}

\section{Admissible configurations in areas surrounded by a certain type of loops}\label{kuma}
In this section, we count the number of admissible configurations
in the area surrounded by a certain type of loops.
We use the notation from section \ref{intro}, \ref{coffe}, \ref{acls}
but we do not assume
$G$ to be abelian.

First we specify the kind of loops we consider.
By a path with origin $(x,y)\in\bbZ^2$ going up-right direction, we mean
a path of edges $\mathfrak p$ proceeding as follows
\begin{align}\label{rpt}
\begin{split}
&(x, y)-(x, y+1)-\cdots-(x, y+m_1),\quad\text{move vertically up},\\
&(x, y+m_1)-(x+1, y+m_1)-\cdots- (x+l_1, y+m_1)\quad\text{move horizontally right},\\
&(x+l_1, y+m_1)-(x+l_1, y+m_1+1)-\cdots-(x+l_1, y+m_1+m_2),
\quad\text{move vertically up},\\
&\cdots\\
&(x+l_1+\cdots l_k, y+m_1+\cdots m_k)
-(x+l_1+\cdots l_k, y+m_1+\cdots m_k+1)-\\
&-\cdots-(x+l_1+\cdots l_k, y+m_1+\cdots m_k+m_{k+1})
\quad\text{move vertically up},\\
&(x+l_1+\cdots l_k, y+m_1+\cdots m_k+m_{k+1})-
(x+l_1+\cdots l_k+1, y+m_1+\cdots m_k+m_{k+1})
-\\&-\cdots-
(x+l_1+\cdots l_k+l_{k+1}, y+m_1+\cdots m_k+m_{k+1})
\quad\text{move horizontally right},\\
&\cdots
\end{split}
\end{align}
with 
$m_1\in\bbN\cup\{0\}$, $l_1\in \bbN\cup\{0\}$
$m_k\in \bbN$, $l_k\in \bbN$ $k\ge 2$.
When $m_1=0$, we understand that the path first moves horizontally right
before going vertically up.
If $l_1=0$, the path never goes up horizontally and parallel to $y$-axis.
Note that all the edges have $+1$ directions along the path.
We denote by $\mathfrak P_{u,r}(x,y)$ the set of all
infinitely long paths with origin $(x,y)\in\bbZ^2$ going up-right direction.
We define the set of all
paths with origin $(x,y)\in\bbZ^2$ going up-left direction  $\mathfrak P_{u,l}(x,y)$, 
down-right direction $\mathfrak P_{d,r}(x,y)$, down-left direction $\mathfrak P_{d,l}(x,y)$ analogously.
Set $\mathfrak P(x,y):=\mathfrak P_{u,r}(x,y)\cup \mathfrak P_{u,l}(x,y)\cup \mathfrak P_{d,r}(x,y)\cup \mathfrak P_{d,l}(x,y)$.
For each $\mfp\in \mathfrak P(x,y)$, we denote the
corresponding parameters $m_k$, $l_k$s above
by $m_k^{\mfp}$, $l_k^{\mfp}$.

Let $\mathfrak p\in \mathfrak P_{u,r}(x,y)$ with description (\ref{rpt}).
We attach it a sequence of squares from below and above.
The sequence from below is defined as follows.
For each portion of $\mfp$ in (\ref{rpt})
\begin{align}
\begin{split}
&(x+l_1+\cdots l_k, y+m_1+\cdots m_k)
-(x+l_1+\cdots l_k, y+m_1+\cdots m_{k}+1)-\\
&-\cdots-(x+l_1+\cdots l_k, y+m_1+\cdots m_k+m_{k+1})
\quad\text{move vertically},\\
\end{split}
\end{align}
we attach squares
\begin{align}
\begin{split}
&S_{(x+l_1+\cdots l_k, y+m_1+\cdots m_k)}-
S_{(x+l_1+\cdots l_k, y+m_1+\cdots m_{k}+1)}-\\
&-\cdots-S_{(x+l_1+\cdots l_k, y+m_1+\cdots m_k+m_{k+1}-1)}.
\end{split}
\end{align}
For each  portion of $\mfp$ in (\ref{rpt})
\begin{align}
\begin{split}
&(x+l_1+\cdots l_k, y+m_1+\cdots m_k+m_{k+1})-
(x+l_1+\cdots l_k+1, y+m_1+\cdots m_k+m_{k+1})
-\\&-\cdots-
(x+l_1+\cdots l_k+l_{k+1}, y+m_1+\cdots m_k+m_{k+1})
\quad\text{move horizontally},
\end{split}
\end{align}
we attach squares
\begin{align}
\begin{split}
&S_{(x+l_1+\cdots l_k+1, y+m_1+\cdots m_k+m_{k+1}-1)}
-S_{(x+l_1+\cdots l_k+2, y+m_1+\cdots m_k+m_{k+1}-1)}-\\
&-\cdots-
S_{(x+l_1+\cdots l_k+l_{k+1}-1, y+m_1+\cdots m_k+m_{k+1}-1)}-
S_{(x+l_1+\cdots l_k+l_{k+1}, y+m_1+\cdots m_k+m_{k+1}-1)}.
\end{split}
\end{align}
We denote by $S_i^{b, \mfp}$, $i\in\bbN$ the 
sequence obtained in this way from $\mfp\in \mathfrak P_{(u,r)}(x,y)$.
%We extend it to $S_i^{b, \mfp}$, $i\in\bbN\cup\{0\}$
%by adding $S_0^{b,\mfp}$ which is defined by
%\begin{align}
%\begin{split}
%S_0^{b,\mfp}
%:=\left\{
%\begin{gathered}
%S_{(x,y-1)},\quad \text{if}\; m_1\neq 0,\\
%S_{(x-1,y-1)},\quad \text{if}\; m_1=0
%\end{gathered}
%\right..
%\end{split}
%\end{align}
 
For the sequence from above, we set as follows.
For each portion of $\mfp$ in (\ref{rpt})
\begin{align}
\begin{split}
&(x+l_1+\cdots l_k, y+m_1+\cdots m_k)
-(x+l_1+\cdots l_k, y+m_1+\cdots m_{k}+1)-\\
&-\cdots-(x+l_1+\cdots l_k, y+m_1+\cdots m_k+m_{k+1})
\quad\text{move vertically},\\
\end{split}
\end{align}
we attach squares
\begin{align}
\begin{split}
&S_{(x+l_1+\cdots l_k-1, y+m_1+\cdots m_k)}-
S_{(x+l_1+\cdots l_k-1, y+m_1+\cdots m_{k}+1)}-\\
&-\cdots-S_{(x+l_1+\cdots l_k-1, y+m_1+\cdots m_k+m_{k+1}-1)}.
\end{split}
\end{align}
For each  portion of $\mfp$ in (\ref{rpt})
\begin{align}
\begin{split}
&(x+l_1+\cdots l_k, y+m_1+\cdots m_k+m_{k+1})-
(x+l_1+\cdots l_k+1, y+m_1+\cdots m_k+m_{k+1})
-\\&-\cdots-
(x+l_1+\cdots l_k+l_{k+1}, y+m_1+\cdots m_k+m_{k+1})
\quad\text{move horizontally},
\end{split}
\end{align}
we attach squares
\begin{align}
\begin{split}
&S_{(x+l_1+\cdots l_k-1, y+m_1+\cdots m_k+m_{k+1})}
-S_{(x+l_1+\cdots l_k, y+m_1+\cdots m_k+m_{k+1})}-\\
&-\cdots-
S_{(x+l_1+\cdots l_k+l_{k+1}-2, y+m_1+\cdots m_k+m_{k+1})}.
\end{split}
\end{align}
We denote by $S_i^{a, \mfp}$, $i\in\bbN$ the 
sequence obtained in this way from $\mfp\in \mathfrak P_{(u,r)}(x,y)$.
%We extend it to $S_i^{b, \mfp}$, $i\in\bbN\cup\{0\}$
%by adding $S_0^{a,\mfp}$ which is defined by
%\begin{align}
%\begin{split}
%S_0^{a,\mfp}
%:=\left\{
%\begin{gathered}
%S_{(x-1,y-1)},\quad \text{if}\; m_1\neq 0,\\
%S_{(x-1,y-1)},\quad \text{if}\; m_1=0
%\end{gathered}
%\right..
%\end{split}
%\end{align}

Analogously, for each $\mfp\in \mathfrak P_{u,l}(x,y)\cup \mathfrak P_{d,r}(x,y)\cup \mathfrak P_{d,l}(x,y)$,
we obtain sequences of squares
from below and above $S_i^{b, \mfp}$, $i\in\bbN$ $S_i^{a, \mfp}$, $i\in\bbN$.

Now we fix $v_0=(x,y)\in\bbZ^2$ and take
$\mfp_1,\mfp_2\in \mathfrak P(x,y)$.
Each of $\mfp_1,\mfp_2$ are associated with a sequence of squares
$\{S_i^{b, \mfp}\}_i$, $\{S_i^{a, \mfp}\}_i$ as above.
Let $n_0,m_0\in\bbN$ and $N\in \bbN$ be large enough so that $v\in \Lambda_{N-3}^{(n_0,m_0)}$.
Then each of $\mfp_1,\mfp_2$ 
intersects with $\partial\hat\Lambda_N^{(n_0,m_0)}$.
Let $w_{\mfp_1}^{(N)}$ (resp. $w_{\mfp_2}^{(N)}$)
 be the first vertex that $\mfp_1$ (resp.$\mfp_2$) 
intersects with $\partial\hat{\Lambda}_N^{(n_0,m_0)}$ when we proceed from $v_0=(x,y)$.
We denote by $\mfp_1^{(N)}$ (resp. $\mfp_2^{(N)}$) the portion of $\mfp_1$ (resp. $\mfp_2$)
from $v_0$ to $w_{\mfp_1}^{(N)}$ (resp $w_{\mfp_2}^{(N)}$).
We denote by $\mathfrak l_{\mfp_1,\mfp_2,+1}^{(N)}$
(resp. $\mathfrak l_{\mfp_1,\mfp_2,-1}^{(N)}$)
the path of edges on $\partial{{\hln N}^{(n_0,m_0)}}$
with origin $w_{\mfp_1}^{(N)}$
and the terminal $w_{\mfp_2}$, given by
proceeding along $\partial{{\hln N}^{(n_0,m_0)}}$ from $w_{\mfp_1}^{(N)}$ to $w_{\mfp_2}^{(N)}$
counter-clockwise (resp. clock-wise).
We obtain a closed loop of edges $\mathfrak c_{\mfp_1,\mfp_2,\sigma}^{(N)}$, $\sigma=\pm 1$
as follows:
Start from $v_0$. Proceed along $\mfp_1$ from $v_0$ to $w_{\mfp_1}^{(N)}$.
Proceed along $\mathfrak l_{\mfp_1,\mfp_2,\sigma}^{(N)}$
from $w_{\mfp_1}^{(N)}$ to $w_{\mfp_2}^{(N)}$.
Proceed  along  ${{\mfp_2}^{(N)}}^{-1}$,
from $w_{\mfp_2}^{(N)}$ to $v_0$.

\begin{defn}\label{ws}
Let $v_0=(x,y)\in \bbZ^2$ and 
$N_0, n_0,m_0\in\bbN$.
We say two paths $\mfp_1, \mfp_2\in \mathfrak P(x,y)$ of edges are well-separated
for with respect to $N_0, n_0,m_0$
if for any $N_0\le N\in\bbN$, 
$(x,y)\in {\Lambda_N^{(n_0,m_0)}}$ holds and
%\begin{description}
%\item[(a)] there is no square in $\hat\bbZ^2$ three of whose edges
%belong to $\mfp_1\cup \mfp_2$, and
%\item[(b)]
the loop $\mathfrak c_{\mfp_1,\mfp_2,\sigma}^{(N)}=
\mfp_1^{(N)}\mathfrak l_{\mfp_1,\mfp_2,\sigma}^{(N)}\lmk \mfp_2^{(N)}\rmk^{-1}$
is a simple closed loop and the area inside of
it consists of a set of squares satisfying the condition S.
We denote by 
$\mathfrak S_{\mfp_1,\mfp_2,\sigma}^{(N)}$
the set of all squares inside of this closed loop $\mathfrak c_{\mfp_1,\mfp_2,\sigma}^{(N)}=\mfp_1^{(N)}\mathfrak l_{\mfp_1,\mfp_2,\sigma}^{(N)}\lmk \mfp_2^{(N)}\rmk^{-1}$.
%\end{description}
\end{defn}

\begin{set}\label{kame}
Let $v_0=(x,y)\in \bbZ^2$ and 
$N_0, n_0,m_0\in\bbN$.
We consider three cases:
\begin{description}
\item[(1)] Well-separated $\mfp_1,\mfp_2\in  \mathfrak P_{(u,r)}(x,y)$ with respect to $N_0, n_0,m_0$.
The path $\mfp_2$ is above $\mfp_1$, i.e.,
for all vertex $(s,t_1)$ on $\mfp_1$,
any vertex on $\mfp_2$ of the form $(s,t_2)$ satisfies $t_1<t_2$.

Note that because $\mfp_1$,$\mfp_2$ are well-separated, $m_{\mfp_1}=0$ and $m_{\mfp_2}\neq0$.
Because $(x, y-1)\notin \bbV \lmk\mathfrak S_{\mfp_1,\mfp_2,+1}^{(N)}\rmk$,
we have 
$(x,y)\in \partial\bbV \lmk\mathfrak S_{\mfp_1,\mfp_2,+1}^{(N)}\rmk$.
The set of squares
\begin{align}
\begin{split}
\lmk \{S_i^{b,\mfp_1}\}_{i}\cup \{S_{(x,y-1)}\}\cup \{S_{(x-1,y-1)}\}\cup \{S_i^{a, \mfp_2}\}_{i}\rmk\cap \caP_{{\Lambda_N^{(n_0,m_0)}}}
\end{split}
\end{align}
forms a sequence of squares $\mathfrak T_{\mfp_1,\mfp_2}^{(N)}:=(S_i^{\mfp_1,\mfp_2, (N)})$ along the path $\lmk\mfp_1^{(N)}\rmk^{-1}\mfp_2^{(N)}$.

\item[(2)]Well-separated $\mfp_1\in  \mathfrak P_{(u,r)}(x,y)$, $\mfp_2\in  \mathfrak P_{(u,l)}(x,y)$, with respect to $N_0, n_0,m_0$.
Note that  $m_{\mfp_1}=0$ or   $m_{\mfp_2}=0$.
Because $(x, y-1)\notin \bbV \lmk\mathfrak S_{\mfp_1,\mfp_2,+1}^{(N)}\rmk$,
we have 
$(x,y)\in \partial\bbV \lmk\mathfrak S_{\mfp_1,\mfp_2,+1}^{(N)}\rmk$.
The set of squares
\begin{align}
\begin{split}
%\left\{
%\begin{gathered}
\lmk \{S_i^{b,\mfp_1}\}_{i}\cup \{S_{(x,y-1)}\}\cup \{S_{(x-1,y-1)}\}\cup \{S_i^{b, \mfp_2}\}_{i}\rmk\cap \caP_{{\Lambda_N^{(n_0,m_0)}}},
%\lmk \{S_i^{b,\mfp_1}\}_{i}\cup \{S_{(x,y-1)}\}\cup \{S_i^{a, \mfp_2}\}_{i}\rmk\cap \caP_{{\Lambda_N^{(n_0,m_0)}}},
%\quad\text{if}\quad m_{\mfp_1}\neq 0, m_{\mfp_2}=0,\\
%\lmk \{S_i^{b,\mfp_1}\}_{i}\cup \{S_{(x-1,y-1)}\}\cup \{S_i^{a, \mfp_2}\}_{i}\rmk\cap \caP_{{\Lambda_N^{(n_0,m_0)}}},
%\quad\text{if}\quad m_{\mfp_1}=0, m_{\mfp_2}\neq 0
%\end{gathered}
%\right..
\end{split}
\end{align}
forms a sequence of squares $\mathfrak T_{\mfp_1,\mfp_2}^{(N)}:=(S_i^{\mfp_1,\mfp_2, (N)})$ along the path $\lmk\mfp_1^{(N)}\rmk^{-1}\mfp_2^{(N)}$.
\item[(3)]Well-separated $\mfp_1\in  \mathfrak P_{(u,r)}(x,y)$, $\mfp_2\in  \mathfrak P_{(d,l)}(x,y)$
with respect to $N_0, n_0,m_0$.
We assume $m_{\mfp_2}=0$.
Because of $m_{\mfp_2}=0$, 
$(x, y-1)\notin \bbV \lmk\mathfrak S_{\mfp_1,\mfp_2,+1}^{(N)}\rmk$, and
we have 
$(x,y)\in \partial\bbV \lmk\mathfrak S_{\mfp_1,\mfp_2,+1}^{(N)}\rmk$.
The set of squares
\begin{align}
\begin{split}
\lmk \{S_i^{b,\mfp_1}\}_{i}\cup \{S_{(x,y-1)}\}\cup \{S_{(x-1,y-1)}\}\cup \{S_i^{b, \mfp_2}\}_{i}\rmk\cap \caP_{{\Lambda_N^{(n_0,m_0)}}}
%\left\{
%\begin{gathered}
%\lmk \{S_i^{b,\mfp_1}\}_{i}\cup \{S_i^{a, \mfp_2}\}_{i}\rmk\cap \caP_{{\Lambda_N^{(n_0,m_0)}}},
%\quad\text{if}\quad m_{\mfp_1}=m_{\mfp_2}=0,\\
%\lmk \{S_i^{b,\mfp_1}\}_{i}\cup \{S_{(x,y-1)}\}\cup \{S_i^{a, \mfp_2}\}_{i}\rmk\cap \caP_{{\Lambda_N^{(n_0,m_0)}}},
%\quad\text{if}\quad m_{\mfp_1}\neq 0, m_{\mfp_2}=0,\\
%\lmk \{S_i^{b,\mfp_1}\}_{i}\cup \{S_{(x-1,y)}\}\cup \{S_i^{a, \mfp_2}\}_{i}\rmk\cap \caP_{{\Lambda_N^{(n_0,m_0)}}},
%\quad\text{if}\quad m_{\mfp_1}=0, m_{\mfp_2}\neq 0
%\end{gathered}
%\right..
\end{split}
\end{align}
forms a sequence of squares $\mathfrak T_{\mfp_1,\mfp_2}^{(N)}:=(S_i^{\mfp_1,\mfp_2, (N)})_{i=1}^{M_{\mfp_1,\mfp_2, (N)}}$ along the path $\lmk\mfp_1^{(N)}\rmk^{-1}\mfp_2^{(N)}$.
\end{description}
\end{set}
%We also set 
%%\mathfrak U_{\mfp_1,\mfp_2}^{(N)}:=\caP_{{\Lambda_N^{(n_0,m_0)}}}\setminus 
%%\lmk \mathfrak T_{\mfp_1,\mfp_2}^{(N)}\cup \mathfrak S_{\mfp_1,\mfp_2,+1}^{(N)} \rmk$
%%and
%\begin{align}
%\begin{split}
%&\mathbb E\lmk \mathfrak S_{\mfp_1,\mfp_2,+1}^{(N)} \rmk
%:=\cup_{p\in \mathfrak S_{\mfp_1,\mfp_2}^{(N)}}\{\text{edge of}\; p\},\\
%%&\mathbb E\lmk \mathfrak T_{\mfp_1,\mfp_2}^{(N)} \rmk
%%:=\cup_{p\in \mathfrak T_{\mfp_1,\mfp_2}^{(N)}}\{\text{edge of}\; p\},\\
%&\mathbb E\lmk \mathfrak U_{\mfp_1,\mfp_2}^{(N)}\rmk
%:=\cup_{p\in \mathfrak U_{\mfp_1,\mfp_2}^{(N)}}\{\text{edge of}\; p\},
%\end{split}
%\end{align}
%\begin{align}
%\begin{split}
%&\mathbb V\lmk \mathfrak S_{\mfp_1,\mfp_2,+1}^{(N)} \rmk
%:=\cup_{e\in \mathbb E\lmk \mathfrak S_{\mfp_1,\mfp_2,+1}^{(N)}\rmk}\{ v_{e,+1}, v_{e,-1}\},\\
%%&\mathbb V\lmk \mathfrak T_{\mfp_1,\mfp_2}^{(N)} \rmk
%%:=\cup_{e\in \mathbb E\lmk \mathfrak T_{\mfp_1,\mfp_2}^{(N)}\rmk}\{ v_{e,+1}, v_{e,-1}\},\\
%&\mathbb V\lmk \mathfrak U_{\mfp_1,\mfp_2}^{(N)} \rmk
%:=\cup_{e\in \mathbb E\lmk \mathfrak U_{\mfp_1,\mfp_2}^{(N)}\rmk}\{ v_{e,+1}, v_{e,-1}\}.
%\end{split}
%\end{align}
We  introduce
\begin{align}
\begin{split}
&\mathfrak {BE}_{\mfp_1,\mfp_2}^{(N)}:=
\left\{
e\in \hat{\Lambda}_N^{(n_0,m_0)}\mid
\lmk v_{e,-1}\in \mathbb V\lmk \mathfrak S_{\mfp_1,\mfp_2,+1}^{(N)} \rmk, v_{e,+1}\notin
\mathbb V\lmk \mathfrak S_{\mfp_1,\mfp_2,+1}^{(N)} \rmk 
\rmk
\text{or}
\lmk  v_{e,+1}\in \mathbb V\lmk \mathfrak S_{\mfp_1,\mfp_2,+1}^{(N)} \rmk,
 v_{e.-1}\notin
\mathbb V\lmk \mathfrak S_{\mfp_1,\mfp_2,+1}^{(N)} \rmk \rmk
\right\},\\
%&\mathfrak {OE}_{\mfp_1,\mfp_2}^{(N)}:=
%\left\{
%e\in \hat{\Lambda_N^{(n_0,m_0)}}\mid
% v_{e,-1}, v_{e,+1}\notin \mathbb V\lmk \mathfrak S_{\mfp_1,\mfp_2,+1}^{(N)} \rmk
%\right\},\\
\end{split}
\end{align}
and 
\begin{align}
\begin{split}
&\mathfrak {OE}_{\mfp_1,\mfp_2}^{(N)}:=
\hat{\Lambda}_N^{(n_0,m_0)}\setminus\lmk
 \mathfrak {BE}_{\mfp_1,\mfp_2}^{(N)}\cup \mathbb E\lmk \mathfrak S_{\mfp_1,\mfp_2,+1}^{(N)} \rmk\rmk
\end{split}
\end{align}
If all the vertices of a square $S$ in 
$\Lambda_N^{(n_0, m_0)}$  belong to 
$\lmk \bbV\lmk \mathfrak S_{\mfp_1,\mfp_2,+1}^{(N)} \rmk\rmk^c$,
then 
all the edges of $S$ belong to $\mathfrak {OE}_{\mfp_1,\mfp_2}^{(N)}$.
\begin{rem}\label{obs}
Under Setting \ref{kame},
 the following can be checked from the construction.
\begin{description}
\item[(A)]
For each $i=1,\ldots, M_{\mfp_1,\mfp_2, (N)}-1$,
$S_i^{\mfp_1,\mfp_2, (N)}$ and $S_{i+1}^{\mfp_1,\mfp_2, (N)}$ shares exactly one edge $ e_{i}^{\mfp_1,\mfp_2, (N)}$. This edge $ e_{i}^{\mfp_1,\mfp_2, (N)}$ belongs to $\mathfrak {BE}_{\mfp_1,\mfp_2}^{(N)}$ and 
$e_{i}^{\mfp_1,\mfp_2, (N)}\notin \partial{{\hln N}^{(n_0,m_0)}}$.
\item[(B)]
There exists exactly one edge $e_{ M_{\mfp_1,\mfp_2, (N)}}^{\mfp_1,\mfp_2, (N)}$ in 
$S_{M_{\mfp_1,\mfp_2, (N)}}^{\mfp_1,\mfp_2, (N)}\cap\partial {{\hln N}^{(n_0,m_0)}}$
 satisfying the condition $e_{ M_{\mfp_1,\mfp_2, (N)}}^{\mfp_1,\mfp_2, (N)}\in \mathfrak {BE}_{\mfp_1,\mfp_2}^{(N)}$.
 There exists exactly one edge $e_0^{\mfp_1,\mfp_2, (N)}$ in 
$S_{1}^{\mfp_1,\mfp_2, (N)}\cap\partial {{\hln N}^{(n_0,m_0)}}$
 satisfying the condition $e_{ 0}^{\mfp_1,\mfp_2, (N)}\in \mathfrak {BE}_{\mfp_1,\mfp_2}^{(N)}$.
\item[(C)]
$S_i^{\mfp_1,\mfp_2, (N)}\cap S_{j}^{\mfp_1,\mfp_2, (N)}=\emptyset$
for $i+1<j$.
In particular, 
combining with (A), (B), $e_{i}^{\mfp_1,\mfp_2, (N)}\neq e_{i-1}^{\mfp_1,\mfp_2, (N)}$
for $1\le i\le M_{\mfp_1,\mfp_2, (N)}$.
\item[(D)]
Any square $S_i^{\mfp_1,\mfp_2, (N)}$ in $\mathfrak T_{\mfp_1,\mfp_2}^{(N)}$
has exactly two distinct edges $e_{i-1}^{\mfp_1,\mfp_2, (N)}, e_{i}^{\mfp_1,\mfp_2, (N)}$
in $\mathfrak {BE}_{\mfp_1,\mfp_2}^{(N)}$
and two other edges of $S_i^{\mfp_1,\mfp_2, (N)}$ are in 
$\mathbb E\lmk \mathfrak S_{\mfp_1,\mfp_2,+1}^{(N)} \rmk
\cup \mathfrak {OE}_{\mfp_1,\mfp_2}^{(N)} $.
Furthermore, we have $\mathfrak {BE}_{\mfp_1,\mfp_2}^{(N)}=\{ e_{i}^{\mfp_1,\mfp_2, (N)}\}_{i=0}^{M_{\mfp_1,\mfp_2, (N)}}$.
\item[(E)] For any $v\in \partial \mathbb V\lmk \mathfrak S_{\mfp_1,\mfp_2,+1}^{(N)} \rmk\setminus\partial{\Lambda_N^{(n_0,m_0)}}$,
there exists a path $\mfp_v$ in ${{\hln N}^{(n_0,m_0)}}\setminus \mathbb E\lmk \mathfrak S_{\mfp_1,\mfp_2,+1}^{(N)} \rmk$
with origin $v$ and target $v_0=(x,y)$.
\item[(F)]For $v\in\partial{\Lambda_N^{(n_0,m_0)}}$, there is a path $\mfp_v^1$ along $\partial{{\hln N}^{(n_0,m_0)}}$
with origin $v$ and target $(Nn_0,-Nm_0)$.
There is a path $\mfp_0$ in ${{\hln N}^{(n_0,m_0)}}\setminus \mathbb E\lmk \mathfrak S_{\mfp_1,\mfp_2,+1}^{(N)} \rmk$
with origin $(Nn_0,-Nm_0)$ target $v_0$.
\item[(G)]
Any square which is not in $\mathbb E\lmk \mathfrak S_{\mfp_1,\mfp_2,+1}^{(N)} \rmk$
nor $(S_i^{\mfp_1,\mfp_2, (N)})_{i=1}^{M_{\mfp_1,\mfp_2, (N)}}$
does not have vertices from $\bbV\lmk \mathfrak S_{\mfp_1,\mfp_2,+1}^{(N)} \rmk$.
\end{description}
\end{rem}

\begin{lem}\label{line}
Consider Setting \ref{kame}.
%Let  $v_0=(x,y)\in\bbZ^2$, and 
%$\mfp_1,\mfp_2\in \mathfrak P(x,y)$
%be a pair of paths well-separated
%for $N\ge N_0$, satisfying one of the conditions
%(1)-(3) above.
Given $\bm m\in C_{\mathfrak {OE}_{\mfp_1,\mfp_2}^{(N)}
}$, 
$ \bm k\in C_{\mathbb E\lmk \mathfrak S_{\mfp_1,\mfp_2,+1}^{(N)} \rmk}$ and $g\in G$,
there exists a unique $\bm o:=\bm o\lmk\bm m, \bm k, g\rmk\in C_{{{\hln N}^{(n_0,m_0)}}}$ such that
\begin{align}
\begin{split}
\bm o\vert_{\mathfrak {OE}_{\mfp_1,\mfp_2}^{(N)}}=\bm m,\quad
\bm o\vert_{{\sppe}}=\bm k,\quad
\bm o\vert_{e_0^{\mfp_1,\mfp_2, (N)}}=g.
\end{split}
\end{align}
Furthermore, if 
\begin{align}\label{usagi}
\left.\Psi^{(x,y)}(\bm k)\right\vert_{\partial {\mathbb V\lmk \mathfrak S_{\mfp_1,\mfp_2,+1}^{(N)} \rmk}}
=\left.\Psi^{(x,y)}(\bm h)\right\vert_{\partial {\mathbb V\lmk \mathfrak S_{\mfp_1,\mfp_2,+1}^{(N)} \rmk}}
\end{align}
for 
$ \bm k, \bm h\in C_{\mathbb E\lmk \mathfrak S_{\mfp_1,\mfp_2,+1}^{(N)} \rmk}$,
then 
\begin{align}\label{oic}
\left. \bm o\lmk\bm m, \bm k, g\rmk\right\vert_{\mathfrak {BE}_{\mfp_1,\mfp_2}^{(N)}}
=\left. \bm o\lmk\bm m, \bm h, g\rmk\right\vert_{\mathfrak {BE}_{\mfp_1,\mfp_2}^{(N)}}
\end{align}
for any $\bm m
\in C_{\mathfrak {OE}_{\mfp_1,\mfp_2}^{(N)}
}
$.
\end{lem}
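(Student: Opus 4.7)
The plan is to construct $\bm o$ by labeling the $\mathfrak{BE}$-edges $e_0^{\mfp_1,\mfp_2,(N)},\ldots,e_{M_{\mfp_1,\mfp_2,(N)}}^{\mfp_1,\mfp_2,(N)}$ inductively along the sequence of squares $\mathfrak T_{\mfp_1,\mfp_2}^{(N)}=(S_i^{\mfp_1,\mfp_2,(N)})_i$ supplied by Setting \ref{kame}. First I set $\bm o\vert_{\mathfrak{OE}_{\mfp_1,\mfp_2}^{(N)}}:=\bm m$, $\bm o\vert_{\mathbb E\lmk\mathfrak S_{\mfp_1,\mfp_2,+1}^{(N)}\rmk}:=\bm k$, and $\bm o_{e_0^{\mfp_1,\mfp_2,(N)}}:=g$. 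Then, for $i=1,\ldots,M_{\mfp_1,\mfp_2,(N)}$, Remark \ref{obs} (D) identifies the four edges of $S_i^{\mfp_1,\mfp_2,(N)}$ as $e_{i-1}^{\mfp_1,\mfp_2,(N)}$, $e_i^{\mfp_1,\mfp_2,(N)}$ and two others lying in $\mathbb E\lmk\mathfrak S_{\mfp_1,\mfp_2,+1}^{(N)}\rmk\cup\mathfrak{OE}_{\mfp_1,\mfp_2}^{(N)}$; since three of them are already labeled at step $i$, the admissibility equation of $S_i^{\mfp_1,\mfp_2,(N)}$ uniquely determines $\bm o_{e_i^{\mfp_1,\mfp_2,(N)}}\in G$.

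To verify $\bm o\in C_{\hln N^{(n_0,m_0)}}$, I split the squares of $\Lambda_N^{(n_0,m_0)}$ into three classes. Squares inside $\mathfrak S_{\mfp_1,\mfp_2,+1}^{(N)}$ are admissible since $\bm k\in C_{\mathbb E\lmk\mathfrak S_{\mfp_1,\mfp_2,+1}^{(N)}\rmk}$. Squares in $\mathfrak T_{\mfp_1,\mfp_2}^{(N)}$ are admissible by construction. Every other square has, by Remark \ref{obs} (G), all four vertices outside $\bbV\lmk\mathfrak S_{\mfp_1,\mfp_2,+1}^{(N)}\rmk$, hence all four edges in $\mathfrak{OE}_{\mfp_1,\mfp_2}^{(N)}$, so admissibility follows from $\bm m\in C_{\mathfrak{OE}_{\mfp_1,\mfp_2}^{(N)}}$. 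Uniqueness of $\bm o$ is immediate since $\mathfrak{BE}_{\mfp_1,\mfp_2}^{(N)}=\{e_i^{\mfp_1,\mfp_2,(N)}\}_{i=0}^{M_{\mfp_1,\mfp_2,(N)}}$ by Remark \ref{obs} (D), and each $\mathfrak{BE}$-label is forced by the inductive step.

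For the "furthermore" part, the key claim is that for every $S_i^{\mfp_1,\mfp_2,(N)}\in\mathfrak T_{\mfp_1,\mfp_2}^{(N)}$ and every edge $e$ of $S_i^{\mfp_1,\mfp_2,(N)}$ lying in $\mathbb E\lmk\mathfrak S_{\mfp_1,\mfp_2,+1}^{(N)}\rmk$, both endpoints of $e$ belong to $\partial\bbV\lmk\mathfrak S_{\mfp_1,\mfp_2,+1}^{(N)}\rmk$. Granting this, Definition \ref{kaki} (using $(x,y)\in\partial\bbV\lmk\mathfrak S_{\mfp_1,\mfp_2,+1}^{(N)}\rmk$ as the basepoint) gives $k_e=\lmk\Psi^{(x,y)}(\bm k)\rmk_{v_{e,-1}}^{-1}\lmk\Psi^{(x,y)}(\bm k)\rmk_{v_{e,+1}}$ and likewise for $\bm h$, so hypothesis (\ref{usagi}) yields $k_e=h_e$. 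Consequently, at every step of the inductive construction, the non-$\mathfrak{BE}$ inputs to the admissibility equation of $S_i^{\mfp_1,\mfp_2,(N)}$ coincide under $\bm k$ and under $\bm h$, forcing $\bm o(\bm m,\bm k,g)\vert_{\mathfrak{BE}_{\mfp_1,\mfp_2}^{(N)}}=\bm o(\bm m,\bm h,g)\vert_{\mathfrak{BE}_{\mfp_1,\mfp_2}^{(N)}}$ for each $\bm m$.

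The main obstacle is the geometric claim above. Because $S_i^{\mfp_1,\mfp_2,(N)}\notin\mathfrak S_{\mfp_1,\mfp_2,+1}^{(N)}$ yet contains two distinct $\mathfrak{BE}$-edges (Remark \ref{obs} (D)), at least one vertex of $S_i^{\mfp_1,\mfp_2,(N)}$ lies outside $\bbV\lmk\mathfrak S_{\mfp_1,\mfp_2,+1}^{(N)}\rmk$. Inspecting the explicit geometry of the three cases of Setting \ref{kame}, I expect to verify that in fact both vertices of $S_i^{\mfp_1,\mfp_2,(N)}$ not incident to $e$ lie outside $\bbV\lmk\mathfrak S_{\mfp_1,\mfp_2,+1}^{(N)}\rmk$ (since $\mathfrak S_{\mfp_1,\mfp_2,+1}^{(N)}$ sits on a single side of $e$), whence each endpoint of $e$ has a neighbor in $S_i^{\mfp_1,\mfp_2,(N)}$ outside $\bbV\lmk\mathfrak S_{\mfp_1,\mfp_2,+1}^{(N)}\rmk$ and therefore lies in $\partial\bbV\lmk\mathfrak S_{\mfp_1,\mfp_2,+1}^{(N)}\rmk$.
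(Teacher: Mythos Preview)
Your existence and uniqueness argument is correct and essentially identical to the paper's.

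The gap is in the ``furthermore'' part. Your key geometric claim---that for every $S_i\in\mathfrak T_{\mfp_1,\mfp_2}^{(N)}$ and every edge $e$ of $S_i$ lying in $\mathbb E\bigl(\mathfrak S_{\mfp_1,\mfp_2,+1}^{(N)}\bigr)$, both endpoints of $e$ lie in $\partial\mathbb V\bigl(\mathfrak S_{\mfp_1,\mfp_2,+1}^{(N)}\bigr)$---is false. Consider a square $S_i$ at a concave corner of the boundary path (e.g.\ where $\mfp_1$ switches from moving up to moving right). There the two $\mathfrak{BE}$-edges $e_{i-1},e_i$ are perpendicular and share a vertex $w\notin\mathbb V(\mathfrak S)$, while the two remaining edges $e,\tilde e$ both lie in $\mathbb E(\mathfrak S)$ and share the vertex $v''$ diagonally opposite to $w$. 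This $v''$ is the concave corner of the path; all four of its lattice neighbours lie in $\mathbb V(\mathfrak S)$, so $v''\notin\partial\mathbb V(\mathfrak S)$. (These are precisely the vertices the paper collects into $J_{\mfp_1,\mfp_2,N}$.) Hence you cannot conclude $k_e=h_e$ edge-by-edge, and your inductive step ``the non-$\mathfrak{BE}$ inputs coincide under $\bm k$ and $\bm h$'' breaks down.

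The paper fixes this by treating the two $\mathbb E(\mathfrak S)$-edges of such a square \emph{together}: writing $v,v'$ for the other endpoints of $e,\tilde e$ (these \emph{are} in $\partial\mathbb V(\mathfrak S)$, being endpoints of $\mathfrak{BE}$-edges), one has via Definition~\ref{kaki} that the product of the $\bm k$-labels along $e$ then $\tilde e$ equals $\bigl(\Psi^{(x,y)}(\bm k)\bigr)_v^{-1}\bigl(\Psi^{(x,y)}(\bm k)\bigr)_{v'}$, which by hypothesis~(\ref{usagi}) equals the corresponding product for $\bm h$. Admissibility of $S_i$ then equates this with the product of $\mathfrak{BE}$-labels along $e_{i-1}$ then $e_i$, and the induction proceeds. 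Your argument is easily repaired along these lines; the point is to compare the admissibility \emph{relation} of $S_i$ for $\bm k$ versus $\bm h$, rather than each edge label separately.
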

\begin{rem}
Recall Definition \ref{kame} for (\ref{usagi}).
\end{rem}
\begin{proof}
\underline{Existence}\\
Fix $\bm m\in C_{\mathfrak {OE}_{\mfp_1,\mfp_2}^{(N)}}
$, 
$ \bm k\in C_{\mathbb E\lmk \mathfrak S_{\mfp_1,\mfp_2,+1}^{(N)} \rmk
}$ and $g\in G$
labelling edges in $\mathfrak {OE}_{\mfp_1,\mfp_2}^{(N)}
$, $\mathbb E\lmk \mathfrak S_{\mfp_1,\mfp_2,+1}^{(N)} \rmk
$,
and $e_0^{\mfp_1,\mfp_2, (N)}$, respectively.
With this confguration, the admissibility condition holds on any squares in 
$\mathfrak {OE}_{\mfp_1,\mfp_2}^{(N)}
$ or  $\mathbb E\lmk \mathfrak S_{\mfp_1,\mfp_2,+1}^{(N)} \rmk
$.
Note that the squares in ${{\hln N}^{(n_0,m_0)}}$ which are not in 
$\mathfrak {OE}_{\mfp_1,\mfp_2}^{(N)}$ nor in  $\mathbb E\lmk \mathfrak S_{\mfp_1,\mfp_2,+1}^{(N)} \rmk$
are exactly $S_i^{\mfp_1,\mfp_2, (N)}$, $i=1,\ldots, {M_{\mfp_1,\mfp_2, (N)}}$.
The only edges in ${{\hln N}^{(n_0,m_0)}}$ which are not yet labeled are
$\mathfrak {BE}_{\mfp_1,\mfp_2}^{(N)}\setminus\{e_0^{\mfp_1,\mfp_2, (N)}\}=\{ e_i^{\mfp_1,\mfp_2, (N)}\}_{i=1}^{{M_{\mfp_1,\mfp_2, (N)}}}$.
We have to decide labels on these edges in a way that the admissibility condition holds on any of
$S_i^{\mfp_1,\mfp_2, (N)}$, $i=1,\ldots, {M_{\mfp_1,\mfp_2, (N)}}$.
But the admissibility conditions on $S_i^{\mfp_1,\mfp_2, (N)}$, $i=1,\ldots, {M_{\mfp_1,\mfp_2, (N)}}$
determines the labels on these edges
$\mathfrak {BE}_{\mfp_1,\mfp_2}^{(N)}\setminus\{e_0^{\mfp_1,\mfp_2, (N)}\}=\{ e_i^{\mfp_1,\mfp_2, (N)}\}_{i=1}^{{M_{\mfp_1,\mfp_2, (N)}}}$ uniquely : on the square
 $S_1^{\mfp_1,\mfp_2, (N)}$ two 
edges of $S_1^{\mfp_1,\mfp_2, (N)}$ are in 
$ {\mathbb E\lmk \mathfrak S_{\mfp_1,\mfp_2,+1}^{(N)} \rmk}\cup  \mathfrak {OE}_{\mfp_1,\mfp_2}^{(N)}
$.
Therefore, their label is already fixed by $\bm m$ and $\bm k$.
One of the rest edge is $e_0^{\mfp_1,\mfp_2, (N)}$, whose label is already fixed as $g$.
The label of the last edge $e_1^{\mfp_1,\mfp_2, (N)}$ is then determined uniquely
by the admissibility condition.
Next proceed to $S_2^{\mfp_1,\mfp_2, (N)}$.
Two 
edges of $S_2^{\mfp_1,\mfp_2, (N)}$ are in 
$ {\mathbb E\lmk \mathfrak S_{\mfp_1,\mfp_2,+1}^{(N)} \rmk}\cup  \mathfrak {OE}_{\mfp_1,\mfp_2}^{(N)}
$,
the other edges are $e_1^{\mfp_1,\mfp_2, (N)},e_2^{\mfp_1,\mfp_2, (N)}$.
Hence the first three of them have already fixed labels.
The admissibility condition determines the label of
$e_2^{\mfp_1,\mfp_2, (N)}$ uniquely.
Continuing this, we obtain the labels on 
$\mathfrak {BE}_{\mfp_1,\mfp_2}^{(N)}\setminus\{e_0^{\mfp_1,\mfp_2, (N)}\}=\{ e_i^{\mfp_1,\mfp_2, (N)}\}_{i=1}^{{M_{\mfp_1,\mfp_2, (N)}}}$
satisfying the admissibility condition on $S_i^{\mfp_1,\mfp_2, (N)}$, $i=1,\ldots, {M_{\mfp_1,\mfp_2, (N)}}$.
\\
\underline{Uniqueness and the last statement}\\
 Suppose that $ \bm k, \bm h\in C_{\mathbb E\lmk \mathfrak S_{\mfp_1,\mfp_2,+1}^{(N)} \rmk
}$ satisfy
\begin{align}\label{hkv}
\left.\Psi^{(x,y)}(\bm k)\right\vert_{\partial {\mathbb V\lmk \mathfrak S_{\mfp_1,\mfp_2,+1}^{(N)} \rmk}}
=\left.\Psi^{(x,y)}(\bm h)\right\vert_{\partial {\mathbb V\lmk \mathfrak S_{\mfp_1,\mfp_2,+1}^{(N)} \rmk}}
\end{align}
and $\bm o, \tilde{\bm o}\in C_{{{\hln N}^{(n_0,m_0)}}}$ satisfy
\begin{align}
\begin{split}
\bm o\vert_{\mathfrak {OE}_{\mfp_1,\mfp_2}^{(N)}
}=\bm m,\quad
\bm o\vert_{{\sppe}}=\bm k,\quad
\bm o\vert_{e_0^{\mfp_1,\mfp_2, (N)}}=g,\\
\tilde{\bm o}\vert_{\mathfrak {OE}_{\mfp_1,\mfp_2}^{(N)}
}=\bm m,\quad
\tilde{\bm o}\vert_{{\sppe}}=\bm h,\quad
\tilde{\bm o}\vert_{e_0^{\mfp_1,\mfp_2, (N)}}=g.
\end{split}
\end{align}
By the above argument, from the admissibility, the label of $\mathfrak {BE}_{\mfp_1,\mfp_2}^{(N)}\setminus\{e_0^{\mfp_1,\mfp_2, (N)}\}=\{ e_i^{\mfp_1,\mfp_2, (N)}\}_{i=1}^{{M_{\mfp_1,\mfp_2, (N)}}}$ for $\bm o$, $\tilde{\bm o}$
are determined by this condition uniquely.
We claim 
\begin{align}\label{eimc}
\bm o \vert_{e_i^{\mfp_1,\mfp_2, (N)}}=\tilde{\bm o} \vert_{e_i^{\mfp_1,\mfp_2, (N)}},\quad i=0,\ldots, {M_{\mfp_1,\mfp_2, (N)}}
\end{align}
corresponding to (\ref{oic}).
From the construction of $S_i^{\mfp_1,\mfp_2, (N)}$, $e_i^{\mfp_1,\mfp_2, (N)}$, we note the following.
\begin{description}
\item[(i)] Let $v, v'\in \partial{\mathbb V\lmk \mathfrak S_{\mfp_1,\mfp_2,+1}^{(N)} \rmk}$
and let $e\in \hat\bbZ^2$ be the edge $e=v-v'$.
Because of (\ref{hkv}),  the label of $e$ in $\bm o$, $\tilde{\bm o}$ coincides:
\begin{align}
\bm o\vert_{e}=
\lmk \left.\Psi^{(x,y)}(\bm k)\right\vert_{v}\rmk^{-1}\left.\Psi^{(x,y)}(\bm k)\right\vert_{v'}
=\lmk \left.\Psi^{(x,y)}(\bm h)\right\vert_{v}\rmk^{-1}\left.\Psi^{(x,y)}(\bm h)\right\vert_{v'}
 =\tilde{\bm o}\vert_e.
\end{align}
\item[(ii)]$\bm o\vert_{\mathfrak {OE}_{\mfp_1,\mfp_2}^{(N)}
}=\bm m=\tilde{\bm o}\vert_{\mathfrak {OE}_{\mfp_1,\mfp_2}^{(N)}
}$.
\item[(iii)] If $e_i^{\mfp_1,\mfp_2, (N)}$ and $e_{i-1}^{\mfp_1,\mfp_2, (N)}$ are parallel,
then the edges of  the square  $S_i^{\mfp_1,\mfp_2, (N)}$ are
$e_i^{\mfp_1,\mfp_2, (N)}$, $e_{i-1}^{\mfp_1,\mfp_2, (N)}$, and
one edge in ${\mathbb E\lmk \mathfrak S_{\mfp_1,\mfp_2,+1}^{(N)} \rmk}$, 
(with vertices in $\partial\bbV\lmk \mathfrak S_{\mfp_1,\mfp_2,+1}^{(N)} \rmk$) one edge in $\mathfrak {OE}_{\mfp_1,\mfp_2}^{(N)}
$.
From (i) and (ii), it means all the edges in $S_i^{\mfp_1,\mfp_2, (N)}$ except for $e_i^{\mfp_1,\mfp_2, (N)}$, $e_{i-1}^{\mfp_1,\mfp_2, (N)}$
have the same label for $\tilde {\bm o}$ and $\bm o$.
By the admissibility condition on $S_i^{\mfp_1,\mfp_2, (N)}$, it means that if 
$\bm o\vert_{e_{i-1}^{\mfp_1,\mfp_2, (N)}}=\tilde {\bm o}\vert_{e_{i-1}^{\mfp_1,\mfp_2, (N)}}$
then we have $\bm o\vert_{e_i^{\mfp_1,\mfp_2, (N)}}=\tilde {\bm o}\vert_{e_i^{\mfp_1,\mfp_2, (N)}}$.
\item[(iv)]
If $e_i^{\mfp_1,\mfp_2, (N)}$ and $e_{i-1}^{\mfp_1,\mfp_2, (N)}$ shares a point in $\partial{\mathbb V\lmk \mathfrak S_{\mfp_1,\mfp_2,+1}^{(N)} \rmk}$,
the edges in $S_i^{\mfp_1,\mfp_2, (N)}$ other than $e_i^{\mfp_1,\mfp_2, (N)}$, $e_{i-1}^{\mfp_1,\mfp_2, (N)}$ 
belong to $\mathfrak {OE}_{\mfp_1,\mfp_2}^{(N)}
$.
From (ii), it means all the edges in $S_i^{\mfp_1,\mfp_2, (N)}$ except for $e_i^{\mfp_1,\mfp_2, (N)}$, $e_{i-1}^{\mfp_1,\mfp_2, (N)}$
have the same label for $\tilde {\bm o}$ and $\bm o$.
By the admissibility condition, it means that if 
$\bm o\vert_{e_{i-1}^{\mfp_1,\mfp_2, (N)}}=\tilde {\bm o}\vert_{e_{i-1}^{\mfp_1,\mfp_2, (N)}}$
then we have $\bm o\vert_{e_i^{\mfp_1,\mfp_2, (N)}}=\tilde {\bm o}\vert_{e_i^{\mfp_1,\mfp_2, (N)}}$.
\item[(v)]
If $e_i^{\mfp_1,\mfp_2, (N)}$ and $e_{i-1}^{\mfp_1,\mfp_2, (N)}$ shares a point in 
${\Lambda_N^{(n_0,m_0)}}\setminus {\mathbb V\lmk \mathfrak S_{\mfp_1,\mfp_2,+1}^{(N)} \rmk}
$,
the edges $e,\tilde e$ in $S_i^{\mfp_1,\mfp_2, (N)}$ other than $e_i^{\mfp_1,\mfp_2, (N)}$, $e_{i-1}^{\mfp_1,\mfp_2, (N)}$ 
belong to ${{\mathbb E\lmk \mathfrak S_{\mfp_1,\mfp_2,+1}^{(N)} \rmk}}$.
We may assume $e$ and $e_{i-1}^{\mfp_1,\mfp_2, (N)}$ share a point 
$v\in \partial \mathbb V\lmk \mathfrak S_{\mfp_1,\mfp_2,+1}^{(N)} \rmk$,
$\tilde e$ and $e_{i}^{\mfp_1,\mfp_2, (N)}$ share a point 
$v'\in \partial \mathbb V\lmk \mathfrak S_{\mfp_1,\mfp_2,+1}^{(N)} \rmk$.
Because of (\ref{hkv}),
we have
\begin{align}
\begin{split}
&\lmk \lmk \bm o\rmk_{{e_{i-1}^{\mfp_1,\mfp_2, (N)}}} \rmk^{\sigma_{i-1}}
\lmk \lmk \bm o\rmk_{{e_{i}^{\mfp_1,\mfp_2, (N)}}} \rmk^{\sigma_{i}}
=\lmk \left.\Psi^{(x,y)}(\bm k)\right\vert_{v}\rmk^{-1}
\left.\Psi^{(x,y)}(\bm k)\right\vert_{v''}\\
&=\lmk \left.\Psi^{(x,y)}(\bm h)\right\vert_{v}\rmk^{-1}
\left.\Psi^{(x,y)}(\bm h)\right\vert_{v''}
=\lmk \lmk \tilde{\bm o}\rmk_{{e_{i-1}^{\mfp_1,\mfp_2, (N)}}} \rmk^{\sigma_{i-1}}
\lmk \lmk \tilde{\bm o}\rmk_{{e_{i}^{\mfp_1,\mfp_2, (N)}}} \rmk^{\sigma_{i}},
\end{split}
\end{align}
with some $\sigma_{i-1},\sigma_i=\pm 1$.
By the admissibility condition, it means that if 
$\bm o\vert_{e_{i-1}^{\mfp_1,\mfp_2, (N)}}=\tilde {\bm o}\vert_{e_{i-1}^{\mfp_1,\mfp_2, (N)}}$
then we have $\bm o\vert_{e_i^{\mfp_1,\mfp_2, (N)}}=\tilde {\bm o}\vert_{e_i^{\mfp_1,\mfp_2, (N)}}$.
\end{description}
We have $\bm o \vert_{e_0^{\mfp_1,\mfp_2, (N)}}=g=\tilde{\bm o} \vert_{e_0^{\mfp_1,\mfp_2, (N)}}$.
Then with the above observation, we have
$\bm o\vert_{e_i^{\mfp_1,\mfp_2, (N)}}=\tilde {\bm o}\vert_{e_i^{\mfp_1,\mfp_2, (N)}}$
for all $i=0,\ldots,{M_{\mfp_1,\mfp_2, (N)}}$ inductively, proving the claim (\ref{eimc}).
\end{proof}

\begin{lem} \label{gorira}
Consider Setting \ref{kame}.
For $\bm h, \bm k\in G^{\mathbb E\lmk \mathfrak S_{\mfp_1,\mfp_2,+1}^{(N)} \rmk}$
the followings are equivalent.
\begin{description}
\item[(i)]
For any 
$\bm m\in C_{\mathfrak {OE}_{\mfp_1,\mfp_2}^{(N)}
}$, 
and $g_0\in G^{e_0^{\mfp_1,\mfp_2, (N)}}$, 
there exists a unique 
$\lmk g_{i}\rmk_{i=1,\ldots,{M_{\mfp_1,\mfp_2, (N)}}}\in G^{\{ e_i^{\mfp_1,\mfp_2, (N)}\}_{i=1}^{{M_{\mfp_1,\mfp_2, (N)}}}}$, 
such that 
\begin{align}\label{chrip}
\begin{split}
\lmk \bm k,\bm m,  \bm g\rmk,\quad 
\lmk \bm h,\bm m,  \bm g\rmk
\in C_{{{\hln N}^{(n_0,m_0)}}},
\end{split}
\end{align}
and
\begin{align}\label{bara}
\begin{split}
\lmk
\bm k,\bm m,  \bm g
\rmk_{\partial{{{\hln N}^{(n_0,m_0)}}} }
=\lmk \bm h,\bm m, \bm g\rmk_{\partial{{{\hln N}^{(n_0,m_0)}}}}.
\end{split}
\end{align}
Here, we set $\bm g:=(g_{i})_{i=0}^{{M_{\mfp_1,\mfp_2, (N)}}}\in  G^{\mathfrak {BE}_{\mfp_1,\mfp_2}^{(N)}}$.
\item[(ii)] There exist $\bm m\in C_{\mathfrak {OE}_{\mfp_1,\mfp_2}^{(N)}
}$, and $g_0\in G^{e_0^{\mfp_1,\mfp_2, (N)}}$, 
which allow unique 
$\lmk g_{i}\rmk_{i=1,\ldots,{M_{\mfp_1,\mfp_2, (N)}}}\in G^{\{ e_i^{\mfp_1,\mfp_2, (N)}\}_{i=1}^{{M_{\mfp_1,\mfp_2, (N)}}}}$, 
such that 
\begin{align}\label{rose}
\begin{split}
\lmk \bm k,\bm m,  \bm g\rmk,\quad 
\lmk \bm h,\bm m,  \bm g\rmk
\in C_{{{\hln N}^{(n_0,m_0)}}},
\end{split}
\end{align}
and
\begin{align}\label{hamu}
\begin{split}
\lmk
\bm k,\bm m,  \bm g
\rmk_{\partial{{{\hln N}^{(n_0,m_0)}}} }
=\lmk \bm h,\bm m, \bm g\rmk_{\partial{{{\hln N}^{(n_0,m_0)}}}}.
\end{split}
\end{align}
Here, we set $\bm g:=(g_{i})_{i=0}^{{M_{\mfp_1,\mfp_2, (N)}}}\in  
G^{\mathfrak {BE}_{\mfp_1,\mfp_2}^{(N)}}$.
\item[(iii)]
$ \bm k, 
\bm h\in C_{\mathbb E\lmk \mathfrak S_{\mfp_1,\mfp_2,+1}^{(N)} \rmk}$
and
$
\Psi^{(x,y)}(\bm h)\vert_{{\partial {\mathbb V\lmk \mathfrak S_{\mfp_1,\mfp_2,+1}^{(N)} \rmk}}}=
\Psi^{(x,y)}(\bm k)\vert_{{\partial {\mathbb V\lmk \mathfrak S_{\mfp_1,\mfp_2,+1}^{(N)} \rmk}}}
$.
\end{description}
\end{lem}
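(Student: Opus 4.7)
The strategy is to prove the cycle $(\text{i})\Rightarrow(\text{ii})\Rightarrow(\text{iii})\Rightarrow(\text{i})$. The implication $(\text{i})\Rightarrow(\text{ii})$ is immediate by specialization: the universal statement in (i) yields the existential statement in (ii) for any single choice of $\bm m$ and $g_0$.

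For $(\text{ii})\Rightarrow(\text{iii})$, restricting the admissible configurations $(\bm k,\bm m,\bm g),(\bm h,\bm m,\bm g)\in C_{{\hln N}^{(n_0,m_0)}}$ to the squares lying in $\mathfrak S_{\mfp_1,\mfp_2,+1}^{(N)}$ immediately gives $\bm k,\bm h\in C_{\mathbb E(\mathfrak S_{\mfp_1,\mfp_2,+1}^{(N)})}$. To verify the $\Psi$-equality on $\partial\mathbb V(\mathfrak S_{\mfp_1,\mfp_2,+1}^{(N)})$, I apply path-independence (Lemma \ref{basic}) on the full rectangle $\hln N^{(n_0,m_0)}$, which is itself layers of squares satisfying condition S. For each $v\in\partial\mathbb V(\mathfrak S_{\mfp_1,\mfp_2,+1}^{(N)})$, I select a path from $(x,y)$ to $v$ along which the labels of $(\bm k,\bm m,\bm g)$ and $(\bm h,\bm m,\bm g)$ coincide. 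For interior $v$ this is provided by Remark \ref{obs}(E), whose path lies in $\hln N\setminus\mathbb E(\mathfrak S_{\mfp_1,\mfp_2,+1}^{(N)})$ and thus uses only the shared labels $\bm m,\bm g$. For $v\in\partial\hln N^{(n_0,m_0)}$, Remark \ref{obs}(F) gives a concatenation of a path along $\partial\hln N$ with the interior path $\mfp_0$; labels on the first segment coincide by the boundary equality (\ref{hamu}), and on the second they lie in $\bm m\cup\bm g$.

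For $(\text{iii})\Rightarrow(\text{i})$, fix any $\bm m\in C_{\mathfrak{OE}_{\mfp_1,\mfp_2}^{(N)}}$ and $g_0\in G$. Lemma \ref{line} supplies unique extensions $\bm o(\bm m,\bm k,g_0)$ and $\bm o(\bm m,\bm h,g_0)$ in $C_{\hln N^{(n_0,m_0)}}$, and its "furthermore" clause, whose hypothesis (\ref{usagi}) is exactly (iii), gives $\bm o(\bm m,\bm k,g_0)|_{\mathfrak{BE}_{\mfp_1,\mfp_2}^{(N)}}=\bm o(\bm m,\bm h,g_0)|_{\mathfrak{BE}_{\mfp_1,\mfp_2}^{(N)}}$. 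Setting $\bm g$ to this common restriction produces $(\bm k,\bm m,\bm g),(\bm h,\bm m,\bm g)\in C_{\hln N^{(n_0,m_0)}}$ with uniqueness of $\bm g$ inherited from Lemma \ref{line}. It remains to verify (\ref{bara}): on $\partial\hln N\cap\mathfrak{OE}$ both configurations equal $\bm m$, on $\partial\hln N\cap\mathfrak{BE}$ both equal $\bm g$, and on $\partial\hln N\cap\mathbb E(\mathfrak S_{\mfp_1,\mfp_2,+1}^{(N)})$ — which is exactly the edges of $\mathfrak l_{\mfp_1,\mfp_2,+1}^{(N)}$ — any such edge $e=v-v'$ has $v,v'\in\partial\mathbb V(\mathfrak S_{\mfp_1,\mfp_2,+1}^{(N)})$, so from the admissibility of $\bm k,\bm h$ together with the $\Psi$-equality in (iii), one obtains $k_e=\bigl(\Psi^{(x,y)}(\bm k)|_v\bigr)^{-1}\Psi^{(x,y)}(\bm k)|_{v'}=\bigl(\Psi^{(x,y)}(\bm h)|_v\bigr)^{-1}\Psi^{(x,y)}(\bm h)|_{v'}=h_e$.

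The main obstacle is the boundary verification in the last step: one has to identify precisely which edges of $\mathbb E(\mathfrak S_{\mfp_1,\mfp_2,+1}^{(N)})$ sit on $\partial\hln N^{(n_0,m_0)}$ and argue from (iii) that $\bm k$ and $\bm h$ agree pointwise on them. The remainder of the argument is essentially a packaging of Lemma \ref{line} together with the path-independence supplied by Lemma \ref{basic}.
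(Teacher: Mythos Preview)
Your proof is correct and follows essentially the same route as the paper's: the same cycle $(\text{i})\Rightarrow(\text{ii})\Rightarrow(\text{iii})\Rightarrow(\text{i})$, the same use of Remark~\ref{obs}(E),(F) for the path arguments in $(\text{ii})\Rightarrow(\text{iii})$, and the same invocation of Lemma~\ref{line} together with the three-way partition of $\partial\hln N^{(n_0,m_0)}$ for $(\text{iii})\Rightarrow(\text{i})$. The only cosmetic difference is that the paper makes explicit that $C_{\mathfrak{OE}_{\mfp_1,\mfp_2}^{(N)}}\neq\emptyset$ (taking all labels equal to the identity) before specializing in $(\text{i})\Rightarrow(\text{ii})$, whereas you leave this implicit.
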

\begin{rem}
In (iii), $\Psi^{(x,y)}(\bm h)$,
$
\Psi^{(x,y)}(\bm k)$
are well-defined because $ \bm k, 
\bm h\in C_{\mathbb E\lmk \mathfrak S_{\mfp_1,\mfp_2,+1}^{(N)} \rmk}$.
\end{rem}
\begin{proof}
(i)$\Rightarrow $(ii) : Note that $C_{\mathfrak {OE}_{\mfp_1,\mfp_2}^{(N)}}$ is non-empty.
For example, we may set all the labels in $\mathfrak {OE}_{\mfp_1,\mfp_2}^{(N)}$ to be the unit of
the group $G$.
Hence (i) implies (ii).\\
(ii)$\Rightarrow $(iii) : The first part  
$ \bm k, 
\bm h\in C_{{\mathbb E\lmk \mathfrak S_{\mfp_1,\mfp_2,+1}^{(N)} \rmk}}$ is trivial.
We would like to show 
$
\Psi^{(x,y)}(\bm h)\vert_{{{\partial {\mathbb V\lmk \mathfrak S_{\mfp_1,\mfp_2,+1}^{(N)} \rmk}}}}=
\Psi^{(x,y)}(\bm k)\vert_{{{\partial {\mathbb V\lmk \mathfrak S_{\mfp_1,\mfp_2,+1}^{(N)} \rmk}}}}
$.
%With $\bm m$, $\bm g$ in (ii),
%we set
%\begin{align}
%\bm m, \bm g:=\lmk \bm m, \bm g\rmk
%\in C_{{{\hln N}^{(n_0,m_0)}}\setminus {{{{\mathbb E\lmk \mathfrak S_{\mfp_1,\mfp_2,+1}^{(N)} \rmk}}}}}.
%\end{align}
For any $v\in \partial \mathbb V\lmk \mathfrak S_{\mfp_1,\mfp_2,+1}^{(N)} \rmk\setminus\partial{\Lambda_N^{(n_0,m_0)}}$, as in (E) Remark \ref{obs}
there exists a path $\mfp_v$ in ${{\hln N}^{(n_0,m_0)}}\setminus \mathbb E\lmk \mathfrak S_{\mfp_1,\mfp_2,+1}^{(N)} \rmk$
with origin $v$ and target $v_0=(x,y)$.
Note because of $
{\lmk \bm k,{\bm m, \bm g}\rmk, 
\lmk  \bm h,{\bm m, \bm g}\rmk\in C_{\hln {N}^{(n_0,m_0)}}}
$
that 
\begin{align}\label{chk}
\begin{split}
\Psi^{(x,y)}(\bm h)_{v}=\Psi_{\mfp_v^{-1}}\lmk  \bm h,{\bm m, \bm g}\rmk,\\
\Psi^{(x,y)}(\bm k)_{v}=\Psi_{\mfp_v^{-1}}\lmk  \bm k,{\bm m, \bm g}\rmk.\\
\end{split}
\end{align}
Because the path $\mfp_v$ is in ${{\hln N}^{(n_0,m_0)}}\setminus \mathbb E\lmk \mathfrak S_{\mfp_1,\mfp_2,+1}^{(N)} \rmk$,
we have
\begin{align}
\lmk  \bm h,{\bm m, \bm g}\rmk_{\mfp_v}=\lmk \bm m, \bm g\rmk_{\mfp_v}=\lmk  \bm k,{\bm m, \bm g}\rmk_{\mfp_v}.
\end{align}
Hence from (\ref{chk}), we obtain
\begin{align}
\begin{split}
\Psi^{(x,y)}(\bm h)_{v}=\Psi^{(x,y)}(\bm k)_{v},\quad v\in \partial \mathbb V\lmk \mathfrak S_{\mfp_1,\mfp_2,+1}^{(N)} \rmk\setminus\partial{\Lambda_N^{(n_0,m_0)}},\quad .
\end{split}
\end{align}
For $v\in\partial{\Lambda_N^{(n_0,m_0)}}$, as in (F) Remark \ref{obs}, there is a path $\mfp_v^1$ along $\partial{{\hln N}^{(n_0,m_0)}}$, 
with origin $v$ and target $(Nn_0,-Nm_0)$.
There is a path $\mfp_0$ in ${{\hln N}^{(n_0,m_0)}}\setminus \mathbb E\lmk \mathfrak S_{\mfp_1,\mfp_2,+1}^{(N)} \rmk$
with origin $(Nn_0,-Nm_0)$ and target $v_0$.
On the path $\mfp_v^1$, 
%$
%\lmk
%\bm k,\bm m,  \bm g
%\rmk_{\partial{{{\hln N}^{(n_0,m_0)}}} }
%=\lmk \bm h,\bm m, \bm g\rmk_{\partial{{{\hln N}^{(n_0,m_0)}}}}
%$,
the value of $(\bm k,{\bm m, \bm g})$ and $ (\bm h,{\bm m, \bm g})$ are the same because of
(\ref{hamu}).
Because the path $\mfp_0$ is in ${{\hln N}^{(n_0,m_0)}}\setminus \mathbb E\lmk \mathfrak S_{\mfp_1,\mfp_2,+1}^{(N)} \rmk$
we again have
\begin{align}
\lmk  \bm h,{\bm m, \bm g}\rmk_{\mfp_0}=\lmk \bm m, \bm g\rmk_{\mfp_0}=\lmk  \bm k,{\bm m, \bm g}\rmk_{\mfp_0}.
\end{align}
Hence from (\ref{rose}) we get
\begin{align}
\begin{split}
\Psi^{(x,y)}(\bm h)_{v}
=\Psi_{\lmk \mfp_v^1\mfp_0\rmk^{-1}}\lmk  \bm h,{\bm m, \bm g}\rmk
=\Psi_{\lmk \mfp_v^1\mfp_0\rmk^{-1}}\lmk  \bm k,{\bm m, \bm g}\rmk
=\Psi^{(x,y)}(\bm k)_{v},\quad v\in\partial{\Lambda_N^{(n_0,m_0)}}.
\end{split}
\end{align}
This completes the proof.

\noindent
(iii) $\Rightarrow$ (i) : Let  $\bm o\lmk\bm m, \bm k, g_0\rmk, \bm o\lmk\bm m, \bm h, g_0\rmk\in C_{{{\hln N}^{(n_0,m_0)}}}$
be the configurations obtained in  Lemma \ref{line} from 
$\bm m\in C_{\mathfrak {OE}_{\mfp_1,\mfp_2}^{(N)}}$, 
 $g_0\in G^{e_0^{\mfp_1,\mfp_2,(N)}}$, $ \bm k,\bm h\in C_{\mathbb E\lmk \mathfrak S_{\mfp_1,\mfp_2,+1}^{(N)} \rmk}$
 respectively.
By Lemma \ref{line}, (iii) implies
\begin{align}\label{oici}
\bm g:=\left. \bm o\lmk\bm m, \bm k, g_0\rmk\right\vert_{\mathfrak {BE}_{\mfp_1,\mfp_2}^{(N)}}
=\left. \bm o\lmk\bm m, \bm h, g_0\rmk\right\vert_{\mathfrak {BE}_{\mfp_1,\mfp_2}^{(N)}}
\in G^{\mathfrak {BE}_{\mfp_1,\mfp_2}^{(N)}},
\end{align}
and 
\begin{align}
\begin{split}
\lmk \bm k,\bm m,  \bm g\rmk=\bm o\lmk\bm m, \bm k, g_0\rmk,\quad
\lmk \bm h,\bm m,  \bm g\rmk=\bm o\lmk\bm m, \bm h, g_0\rmk\in C_{{{\hln N}^{(n_0,m_0)}}}.
\end{split}
\end{align}
By Lemma \ref{line},  this 
$\lmk g_{i}\rmk_{i=1,\ldots,{M_{\mfp_1,\mfp_2, (N)}}}\in G^{\{ e_i^{\mfp_1,\mfp_2, (N)}\}_{i=1}^{{M_{\mfp_1,\mfp_2, (N)}}}}$
is the unique configuration 
satisfying (\ref{chrip}).
Furthermore for any $e\in \partial{{{\hln N}^{(n_0,m_0)}}}\cap { {\mathbb E\lmk \mathfrak S_{\mfp_1,\mfp_2,+1}^{(N)} \rmk}}$,
$e$ is of the form $e=v-v'$ with $v, v'\in {\partial {\mathbb V\lmk \mathfrak S_{\mfp_1,\mfp_2,+1}^{(N)} \rmk}}$.
Therefore, by (iii), we have
\begin{align}
\begin{split}
\lmk \bm k,\bm m,  \bm g\rmk_{e}
=\lmk \Psi^{(x,y)}(\bm k)\rmk_{v}^{-1}
\Psi^{(x,y)}(\bm k)_{v'}
=\lmk \Psi^{(x,y)}(\bm h)\rmk_{v}^{-1}
\Psi^{(x,y)}(\bm h)_{v'}
=\lmk \bm h,\bm m,  \bm g\rmk_{e}.
\end{split}
\end{align}
For  any $e\in \partial{{{\hln N}^{(n_0,m_0)}}}\setminus { {\mathbb E\lmk \mathfrak S_{\mfp_1,\mfp_2,+1}^{(N)} \rmk}}$,
we have $e\in \mathfrak {OE}_{\mfp_1,\mfp_2}^{(N)}\cup \mathfrak {BE}_{\mfp_1,\mfp_2}^{(N)}$.
But we have 
\begin{align}
\begin{split}
\lmk \bm k,\bm m,  \bm g\rmk\vert_{\mathfrak {BE}_{\mfp_1,\mfp_2}^{(N)}}=\bm o\lmk\bm m, \bm k, g_0\rmk\vert_{\mathfrak {BE}_{\mfp_1,\mfp_2}^{(N)}}
=\bm o\lmk\bm m, \bm h, g_0\rmk\vert_{\mathfrak {BE}_{\mfp_1,\mfp_2}^{(N)}}=\lmk \bm h,\bm m,  \bm g\rmk\vert_{\mathfrak {BE}_{\mfp_1,\mfp_2}^{(N)}},
\end{split}
\end{align}
from (\ref{oici}), and
\begin{align}
\begin{split}
\lmk \bm k,\bm m,  \bm g\rmk\vert_{\mathfrak {OE}_{\mfp_1,\mfp_2}^{(N)}}=\bm o\lmk\bm m, \bm k, g_0\rmk\vert_{\mathfrak {OE}_{\mfp_1,\mfp_2}^{(N)}}
=\bm m\vert_{\mathfrak {OE}_{\mfp_1,\mfp_2}^{(N)}}
=\bm o\lmk\bm m, \bm h, g_0\rmk\vert_{\mathfrak {OE}_{\mfp_1,\mfp_2}^{(N)}}=\lmk \bm h,\bm m,  \bm g\rmk\vert_{\mathfrak {OE}_{\mfp_1,\mfp_2}^{(N)}}.
\end{split}
\end{align}
Hence we have
\begin{align}
\begin{split}
\lmk
\bm k,\bm m,  \bm g
\rmk_e
=\lmk \bm h,\bm m, \bm g\rmk_e,\quad e
\in \partial{{{\hln N}^{(n_0,m_0)}}}\setminus { {\mathbb E\lmk \mathfrak S_{\mfp_1,\mfp_2,+1}^{(N)} \rmk}}.
\end{split}
\end{align}
Hence we obtain
 (\ref{bara}) 
$
\lmk
\bm k,\bm m,  \bm g
\rmk_{\partial{{{\hln N}^{(n_0,m_0)}}} }
=\lmk \bm h,\bm m, \bm g\rmk_{\partial{{{\hln N}^{(n_0,m_0)}}}}
$.

\end{proof}
Consider Setting \ref{kame}.
Recall the loop $\mathfrak c_{\mfp_1,\mfp_2,+1}^{(N)}=\mfp_1^{(N)}\mathfrak l_{\mfp_1,\mfp_2,+1}^{(N)}\lmk \mfp_2^{(N)}\rmk^{-1}$
from Definition \ref{ws}.
Because $\mfp_1$ and $\mfp_2$ are well-separated,
the $\mathfrak S_{\mfp_1,\mfp_2,+1}^{(N)}$, the squares inside of
the loop $\mathfrak c_{\mfp_1,\mfp_2,+1}^{(N)}$
satisfy the condition S.
Therefore, there are two distinct self-avoiding paths $\mfp_{{\mfp_1,\mfp_2,N}}^{(l)}$, 
$\mfp_{{\mfp_1,\mfp_2,N}}^{(r)}$ Definition \ref{ringo} corresponding to
$\mathfrak S_{\mfp_1,\mfp_2,+1}^{(N)} $.
Note that $\mathfrak c_{\mfp_1,\mfp_2,+1}^{(N)}=\lmk\mfp_{{\mfp_1,\mfp_2,N}}^{(l)} \rmk^{-1}\mfp_{{\mfp_1,\mfp_2,N}}^{(r)}$.
We recall from section \ref{acls} that
$\caE^{(4)}\lmk\mathfrak S_{\mfp_1,\mfp_2,+1}^{(N)}\rmk$ denotes 
the set of all edges in $\mathfrak c_{\mfp_1,\mfp_2,+1}^{(N)}$.
We denote by $\bbV\lmk\mathfrak c_{\mfp_1,\mfp_2,+1}^{(N)}\rmk$
the set of all vertices in $\mathfrak c_{\mfp_1,\mfp_2,+1}^{(N)}$.
We also set
\begin{align}
\begin{split}
&{J}_{{\mfp_1,\mfp_2},{N} }:=\bbV\lmk\mathfrak c_{\mfp_1,\mfp_2,+1}^{(N)}\rmk\setminus {{\partial {\mathbb V\lmk \mathfrak S_{\mfp_1,\mfp_2,+1}^{(N)} \rmk}}}.
\end{split}
\end{align}
Because $v_0\in  {{\partial {\mathbb V\lmk \mathfrak S_{\mfp_1,\mfp_2,+1}^{(N)} \rmk}}}$, we have
$v_0\notin {J}_{{\mfp_1,\mfp_2},{N} }$.
We consider the following set of configurations
\begin{align}
\begin{split}
&\mathcal{FB}_{{\mfp_1,\mfp_2,N}}
:=\left\{
\bm a\in G^{\caE^{(4)}\lmk\mathfrak S_{\mfp_1,\mfp_2,+1}^{(N)}\rmk}\mid
\Psi_{\mfp_{{\mfp_1,\mfp_2,N}}^{(l)}}(\bm a)=\Psi_{\mfp_{{\mfp_1,\mfp_2,N}}^{(r)}}(\bm a)
\right\},\\
&\mathcal{PT}_{{\mfp_1,\mfp_2,N}}:=
\left\{
\bm t\in G^{\partial \mathbb V\lmk \mathfrak S_{\mfp_1,\mfp_2,+1}^{(N)} \rmk}\mid
t_{v_0}=e
\right\}.
\end{split}
\end{align}
Note for any $\bm a\in \mathcal{FB}_{{\mfp_1,\mfp_2,N}}$ and  
any $w\in \bbV\lmk\mathfrak c_{\mfp_1,\mfp_2,+1}^{(N)}\rmk$,
the value $\Psi_{\tilde{\mfq}_w}(\bm a)$ is independent of the choice of
the path $\tilde{\mfq}_w$ in $\caE^{(4)}\lmk\mathfrak S_{\mfp_1,\mfp_2,+1}^{(N)}\rmk$  with origin
 $v_0$ and target $w$.
 We denote this value  by 
 $\lmk \Psi_{\caE^{(4)}\lmk\mathfrak S_{\mfp_1,\mfp_2,+1}^{(N)}\rmk}^{(x, y)}(\bm a)\rmk_w$
 and set
 \begin{align}\label{wednesday}
 \begin{split}
  \Psi_{\caE^{(4)}\lmk\mathfrak S_{\mfp_1,\mfp_2,+1}^{(N)}\rmk}^{(x, y)}(\bm a):=
 \lmk \lmk \Psi_{\caE^{(4)}\lmk\mathfrak S_{\mfp_1,\mfp_2,+1}^{(N)}\rmk}^{(x, y)}(\bm a)\rmk_w\rmk_{w\in \bbV\lmk\mathfrak c_{\mfp_1,\mfp_2,+1}^{(N)}\rmk
 }.
 \end{split}
 \end{align}
The restriction of this to ${{\partial {\mathbb V\lmk \mathfrak S_{\mfp_1,\mfp_2,+1}^{(N)} \rmk}}}$
is denoted by
\begin{align}
 \Psi_{{{\partial {\mathbb V\lmk \mathfrak S_{\mfp_1,\mfp_2,+1}^{(N)} \rmk}}}}^{(x, y)}(\bm a)
:= \left.
 \Psi_{\caE^{(4)}\lmk\mathfrak S_{\mfp_1,\mfp_2,+1}^{(N)}\rmk}^{(x, y)}(\bm a)
 \right\vert_{{{\partial {\mathbb V\lmk \mathfrak S_{\mfp_1,\mfp_2,+1}^{(N)} \rmk}}}}
\end{align}

\begin{lem}\label{risu}
Consider Setting \ref{kame}.
Then there is a bijection 
\begin{align}\label{amap}
\begin{split}
\Xi_{{\mfp_1,\mfp_2,N}}:  \mathcal{PT}_{{\mfp_1,\mfp_2,N}}
\times G^{\lv{J}_{{\mfp_1,\mfp_2,N}}\rv }\to 
 \mathcal{FB}_{{\mfp_1,\mfp_2,N}}
\end{split}
\end{align}
such that
\begin{align}\label{bdeq}
 \begin{split}
  \Psi_{{{\partial {\mathbb V\lmk \mathfrak S_{\mfp_1,\mfp_2,+1}^{(N)} \rmk}}}}^{(x, y)}\lmk\Xi_{{\mfp_1,\mfp_2,N}}(\bm t, \bm u)\rmk
=  \bm t ,\quad \lmk\bm t,\bm u\rmk\in \mathcal{PT}_{{\mfp_1,\mfp_2,N}}
\times G^{\lv{J}_{{\mfp_1,\mfp_2,N}}\rv }.
 \end{split}
 \end{align}

\end{lem}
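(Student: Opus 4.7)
My plan is to construct $\Xi_{\mfp_1,\mfp_2,N}$ and its inverse explicitly using a ``discrete gradient'' / potential-function correspondence: an admissible-boundary configuration corresponds to a vertex potential (normalized to be the identity at $v_0$), with the edge labels being the discrete differences of the potential.

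First, I will define the map. Given $(\bm t,\bm u)\in\mathcal{PT}_{\mfp_1,\mfp_2,N}\times G^{\lv J_{\mfp_1,\mfp_2,N}\rv}$, assemble a function $T:\bbV\lmk\mathfrak c_{\mfp_1,\mfp_2,+1}^{(N)}\rmk\to G$ by setting $T|_{\partial\bbV\lmk\mathfrak S_{\mfp_1,\mfp_2,+1}^{(N)}\rmk}=\bm t$ and $T|_{J_{\mfp_1,\mfp_2,N}}=\bm u$; note $T(v_0)=t_{v_0}=e$. Then for each edge $e\in\caE^{(4)}\lmk\mathfrak S_{\mfp_1,\mfp_2,+1}^{(N)}\rmk$ define $\Xi_{\mfp_1,\mfp_2,N}(\bm t,\bm u)_e := T(v_{e,-1})^{-1}T(v_{e,+1})\in G$.

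Next I will verify $\Xi_{\mfp_1,\mfp_2,N}(\bm t,\bm u)\in\mathcal{FB}_{\mfp_1,\mfp_2,N}$. For any path $\mfp=(e_1,\sigma_1)\cdots(e_m,\sigma_m)$ in $\caE^{(4)}$ with origin $v_0$ and target $w$, each factor $(\Xi(\bm t,\bm u)_{e_i})^{\sigma_i}$ equals $T(\text{start of }(e_i,\sigma_i))^{-1}T(\text{end of }(e_i,\sigma_i))$ regardless of $\sigma_i=\pm 1$, so the product telescopes to $T(v_0)^{-1}T(w)=T(w)$. Applying this to $\mfp_{\mfp_1,\mfp_2,N}^{(l)}$ and $\mfp_{\mfp_1,\mfp_2,N}^{(r)}$, both of which have target $(x_{n_l}^{(l)}+1,y+l)$, shows $\Psi_{\mfp^{(l)}}=\Psi_{\mfp^{(r)}}$, hence $\Xi(\bm t,\bm u)\in\mathcal{FB}_{\mfp_1,\mfp_2,N}$. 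The same telescoping computation proves (\ref{bdeq}): for $w\in\partial\bbV\lmk\mathfrak S_{\mfp_1,\mfp_2,+1}^{(N)}\rmk$, $\lmk\Psi^{(x,y)}_{\caE^{(4)}}(\Xi(\bm t,\bm u))\rmk_w=T(w)=t_w$.

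Finally I will exhibit the inverse. Given $\bm a\in\mathcal{FB}_{\mfp_1,\mfp_2,N}$, the loop condition is exactly what makes $\Psi^{(x,y)}_{\caE^{(4)}\lmk\mathfrak S_{\mfp_1,\mfp_2,+1}^{(N)}\rmk}(\bm a)$ in (\ref{wednesday}) well-defined (independent of path choice in $\caE^{(4)}$), by Lemma \ref{basic} applied to $\mathfrak S_{\mfp_1,\mfp_2,+1}^{(N)}$. Set $T_{\bm a}(w):=\lmk\Psi^{(x,y)}_{\caE^{(4)}}(\bm a)\rmk_w$, with $T_{\bm a}(v_0)=e$ (empty path), and define $\Phi(\bm a):=(T_{\bm a}|_{\partial\bbV},T_{\bm a}|_{J})\in\mathcal{PT}_{\mfp_1,\mfp_2,N}\times G^{\lv J_{\mfp_1,\mfp_2,N}\rv}$. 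That $\Phi\circ\Xi=\id$ is immediate from the telescoping identity above; that $\Xi\circ\Phi=\id$ follows because for any edge $e\in\caE^{(4)}$ one can connect $v_0$ to $v_{e,+1}$ via $v_{e,-1}$ through $e$, giving $T_{\bm a}(v_{e,+1})=T_{\bm a}(v_{e,-1})a_e$, i.e.\ $a_e=T_{\bm a}(v_{e,-1})^{-1}T_{\bm a}(v_{e,+1})$, which is exactly $\Xi(\Phi(\bm a))_e$.

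The only mildly subtle point is verifying the path-independence of $\Psi^{(x,y)}_{\caE^{(4)}}$ in the definition of $\Phi$; this is where the loop condition $\Psi_{\mfp^{(l)}}(\bm a)=\Psi_{\mfp^{(r)}}(\bm a)$ is essential, combined with the deformation principle of Lemma \ref{basic} inside $\mathfrak S_{\mfp_1,\mfp_2,+1}^{(N)}$ to reduce any two $\caE^{(4)}$-paths with common endpoints to $\mfp^{(l)}$ and $\mfp^{(r)}$. Once this is in place the rest is a direct computation, and the $\sigma_i=\pm1$ bookkeeping for $\Psi_\mfp$ is the only place one must be careful.
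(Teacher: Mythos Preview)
Your argument is correct and follows the same underlying idea as the paper---the ``vertex potential / discrete gradient'' correspondence---but your parametrization is cleaner. The paper singles out a vertex $\check v\in J_{\mfp_1,\mfp_2,N}$ (when $J\neq\emptyset$), removes the two adjacent edges $\check f_1,\check f_2$ from the loop, assigns edge labels on the remaining arc $\check{\mfq}$ by differences of a potential $\bm z\in\mathcal{PT}\times G^{J\setminus\{\check v\}}$, sets the label on $\check f_1$ freely to $u\in G$, and then solves for the label on $\check f_2$ from the loop condition; it then identifies $G^{J\setminus\{\check v\}}\times G^{\{\check f_1\}}$ with $G^{|J|}$. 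Your version instead assigns a potential value at \emph{every} loop vertex (including those in $J$), so all edge labels are pure differences and the loop condition is automatic. This buys you a symmetric construction with no case split on $J=\emptyset$ and no distinguished $\check v$; the paper's version makes surjectivity slightly more visible edge-by-edge but at the cost of that asymmetry.

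One small correction: your appeal to Lemma~\ref{basic} for the path-independence of $\Psi^{(x,y)}_{\caE^{(4)}}$ is misplaced. Lemma~\ref{basic} deforms paths through squares in $\bbE(\mathfrak S)$, but here you are restricted to the boundary loop $\caE^{(4)}$, where no such square deformations are available. The correct (and simpler) justification is the one the paper records just before the lemma: since $\mathfrak c_{\mfp_1,\mfp_2,+1}^{(N)}$ is a simple closed loop, any two paths in $\caE^{(4)}$ from $v_0$ to $w$ differ by copies of the full loop, and the $\mathcal{FB}$ condition says precisely that the full loop has trivial $\Psi$. With that fix, your inverse $\Phi$ is well-defined and the rest of your argument goes through verbatim.
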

\begin{rem}
If ${J}_{{\mfp_1,\mfp_2},{N} }=\emptyset$,
$\mathcal{PT}_{{\mfp_1,\mfp_2,N}}
\times G^{\lv{J}_{{\mfp_1,\mfp_2,N}}\rv }$
should be understood as $\mathcal{PT}_{{\mfp_1,\mfp_2,N}}$.

\end{rem}
\begin{proof}
If ${J}_{{\mfp_1,\mfp_2},{N} }\neq\emptyset$, then we fix some
$\check v\in {J}_{{\mfp_1,\mfp_2},{N} }$.
Let $\check f_1$, $\check f_2$ be the only two edges in 
$\caE^{(4)}\lmk\mathfrak S_{\mfp_1,\mfp_2,+1}^{(N)}\rmk$
 which have $\check v$ as their vertices.
 Let $\check v_1$ (resp. $\check v_2$ ) be the vertex of $\check f_1$
 (resp.  $\check f_2$) which is not $\check v$.
We may assume that as we proceed with respect to the direction given by the loop
$\mathfrak c_{\mfp_1,\mfp_2,+1}^{(N)}$, we encounter $\check v_1, \check v, \check v_2$ in succession. 
%Set $\sigma_{\check f_1}:=+1$ (resp. $\sigma_{\check f_1}:=-1$) if $\mathfrak c_{\mfp_1,\mfp_2,+1}^{(N)}$ goes through $\check f_1$
%in the positive (resp. negative) direction.
For $e\in \caE^{(4)}\lmk\mathfrak S_{\mfp_1,\mfp_2,+1}^{(N)}\rmk$, set $\sigma_{e}:=+1$ (resp. $\sigma_{e}:=-1$) if $\mathfrak c_{\mfp_1,\mfp_2,+1}^{(N)}$ goes through $e$
in the positive (resp. negative) direction.
We also denote by $\check \mfq$ the path of edges obtained from the loop
$\mathfrak c_{\mfp_1,\mfp_2,+1}^{(N)}$ by removing edges $\check f_1$, $\check f_2$.
The direction of $\check \mfq$ is inherited from that of $\mathfrak c_{\mfp_1,\mfp_2,+1}^{(N)}$.

For each $\bm z\in \mathcal{PT}_{{\mfp_1,\mfp_2,N}}\times G^{{J}_{{\mfp_1,\mfp_2},{N} }\setminus \{\check v\}}$
  define
\[
\bm s(\bm z):=(s_e(\bm z))_{e\in \check \mfq }\;\in
G^{\check \mfq}
\]
by
\begin{align}
\begin{split}
s_e(\bm z):=\lmk \lmk z_{v_{e,-\sigma_e}}\rmk^{-1}z_{v_{e,\sigma_e}}\rmk^{\sigma_e},\quad e\in \check \mfq.
\end{split}
\end{align}
Furthermore, for each $ \bm z\in  \mathcal{PT}_{{\mfp_1,\mfp_2,N}}\times G^{{J}_{{\mfp_1,\mfp_2},{N} }\setminus \{\check v\}}$ and
$u\in G^{\check f_1}$,
we set
 $\bm a(\bm z,u):=(a_e(\bm z, u))_{e\in \caE^{(4)}\lmk\mathfrak S_{\mfp_1,\mfp_2,+1}^{(N)}\rmk}\in G^{\caE^{(4)}\lmk\mathfrak S_{\mfp_1,\mfp_2,+1}^{(N)}\rmk}$ by
\begin{align}
\begin{split}
a_{e}(\bm z, u)
:=\left\{
\begin{gathered}
s_e(\bm z),\quad\text{if}\;\;e\in\check \mfq
,\\
u,\quad\text{if}\;\;e=\check f_1,\\
 \lmk
 \Psi_{\check \mfq}\lmk \bm s(\bm z)\rmk u^{\sigma_{\check f_1}}
 \rmk^{-\sigma_{\check f_2}}
,\quad\text{if}\;\;e=\check f_2
\end{gathered}
\right.,\quad\quad
e\in \caE^{(4)}\lmk\mathfrak S_{\mfp_1,\mfp_2,+1}^{(N)}\rmk.
\end{split}
\end{align}
(If ${J}_{{\mfp_1,\mfp_2},{N} }=\emptyset$, then  regard it as
$a_{e}(\bm z)=s_e(\bm z)$.)
We claim  $\bm a(\bm z, u)\in \mathcal{FB}_{{\mfp_1,\mfp_2,N}}$.
If ${J}_{{\mfp_1,\mfp_2},{N} }\neq\emptyset$, by the choice of 
$a_{\check f_2}(\bm z, u)$, we have 
\begin{align}
\begin{split}
\Psi_{\check \mfq}\lmk \bm s(\bm z)\rmk u^{\sigma_{\check f_1}}\cdot 
\lmk 
\lmk
 \Psi_{\check \mfq}\lmk \bm s(\bm z)\rmk u^{\sigma_{\check f_1}}
 \rmk^{-\sigma_{\check f_2}}
 \rmk^{\sigma_{\check f_2}}
=e.
\end{split}
\end{align}
Hence $\Psi_{\mfp_{{\mfp_1,\mfp_2,N}}^{(l)}}(\bm a(\bm z, u))=\Psi_{\mfp_{{\mfp_1,\mfp_2,N}}^{(r)}}(\bm a(\bm z, u))$.
If ${J}_{{\mfp_1,\mfp_2},{N} }=\emptyset$, by the definition of $s_e(\bm z)$,
we have
\begin{align}
\begin{split}
&\lmk \Psi_{\mfp_{{\mfp_1,\mfp_2,N}}^{(l)}}(\bm a(\bm z, u))\rmk^{-1}
\Psi_{\mfp_{{\mfp_1,\mfp_2,N}}^{(r)}}(\bm a(\bm z, u))
=\lmk s_{e_1}(\bm z)\rmk^{\sigma_{e_1}}
\lmk s_{e_2}(\bm z)\rmk^{\sigma_{e_2}}\cdots\\
&=\lmk \lmk \lmk z_{v_{e_1,-\sigma_{e_1}}}\rmk^{-1}z_{v_{e_1,\sigma_{e_1}}}\rmk^{\sigma_{e_1}}\rmk^{\sigma_{e_1}}
\lmk \lmk \lmk z_{v_{e_2,-\sigma_{e_2}}}\rmk^{-1}z_{v_{e_2,\sigma_{e_2}}}\rmk^{\sigma_{e_2}}\rmk^{\sigma_{e_2}}
\cdots \lmk \lmk \lmk z_{v_{e_L,-\sigma_{e_L}}}\rmk^{-1}z_{v_{e_L,\sigma_{e_L}}}\rmk^{\sigma_{e_L}}\rmk^{\sigma_{e_L}}\\
&= \lmk z_{v_{e_1,-\sigma_{e_1}}}\rmk^{-1}z_{v_{e_1,\sigma_{e_1}}}
 \lmk z_{v_{e_2,-\sigma_{e_2}}}\rmk^{-1}z_{v_{e_2,\sigma_{e_2}}}
 \cdots \lmk z_{v_{e_L,-\sigma_{e_L}}}\rmk^{-1}z_{v_{e_L,\sigma_{e_L}}}\\
& =\lmk z_{v_{e_1,-\sigma_{e_1}}}\rmk^{-1}z_{v_{e_L,\sigma_{e_L}}}
=e.
\end{split}
\end{align}
Here we labeled the edges in $\lmk\mfp_{{\mfp_1,\mfp_2,N}}^{(l)} \rmk^{-1}\mfp_{{\mfp_1,\mfp_2,N}}^{(r)}$ in order
as $\{e_i\}_{i=1}^L$.
Note that $v_{e_{i}, \sigma_i}=v_{e_{i+1}, -\sigma_{i+1}}$, $i=1,\ldots, L$
and $v_{e_{L}, \sigma_L}=v_{e_{1}, -\sigma_{1}}$ because  $\mathfrak c_{\mfp_1,\mfp_2,+1}^{(N)}$ is a loop.
Hence we prove the claim $\bm a(\bm z, u)\in \mathcal{FB}_{{\mfp_1,\mfp_2,N}}$
and obtain a map 
\begin{align}\label{amap}
\begin{split}
 \Xi_{{\mfp_1,\mfp_2,N}}:
 \mathcal{PT}_{{\mfp_1,\mfp_2,N}}\times G^{{J}_{{\mfp_1,\mfp_2},{N} }\setminus \{\check v\}}\times G^{\check f_1}\ni
  \lmk \bm t,\bm r, u\rmk\mapsto
\bm a(\bm t,\bm r, u)\in \mathcal{FB}_{{\mfp_1,\mfp_2,N}}.
\end{split}
\end{align}
We regard $G^{{J}_{{\mfp_1,\mfp_2},{N} }\setminus \{\check v\}}\times G^{\check f_1}$ as
 $ G^{\lv{J}_{{\mfp_1,\mfp_2,N}}\rv }$ and obtain the map (\ref{amap}).

Because 
$\Xi_{{\mfp_1,\mfp_2,N}}(\bm t,\bm r, u)=\bm a(\bm t,\bm r, u)\in \mathcal{FB}_{{\mfp_1,\mfp_2,N}} $,
from (\ref{wednesday}),
we can define
$ \Psi_{\caE^{(4)}\lmk\mathfrak S_{\mfp_1,\mfp_2,+1}^{(N)}\rmk}^{(x, y)}\lmk\Xi_{{\mfp_1,\mfp_2,N}}(\bm t,\bm r, u) \rmk$.
In particular, for any $w\in \bbV\lmk\mathfrak c_{\mfp_1,\mfp_2,+1}^{(N)}\rmk\setminus\{\check v\}$,
let $\mfq_w$ be the portion of $\mathfrak c_{\mfp_1,\mfp_2,+1}^{(N)}$ from $(x,y)$ to $w$
which does not go through $\check v$.
Then we have
\begin{align}\label{tokei}
\begin{split}
 \Psi_{\caE^{(4)}\lmk\mathfrak S_{\mfp_1,\mfp_2,+1}^{(N)}\rmk}^{(x, y)}\lmk\Xi_{{\mfp_1,\mfp_2,N}}(\bm t,\bm r, u) \rmk_w
 =\Psi_{\mfq_w}\lmk\Xi_{{\mfp_1,\mfp_2,N}}(\bm t,\bm r, u) \rmk_w
 =\lmk \lmk\bm t,\bm r\rmk_{v_0}\rmk^{-1}\lmk\bm t,\bm r\rmk_w=\lmk\bm t,\bm r\rmk_w,
\end{split}
\end{align}
by the definition of $\bm s(\lmk\bm t,\bm r\rmk)$ and 
the fact that $\lmk\bm t,\bm r\rmk_{v_0}=e$. In particular, (\ref{bdeq}) holds.

The map $ \Xi_{{\mfp_1,\mfp_2,N}}$
is an injection.
%We consider a map
%\begin{align}
%\begin{split}
%\Phi: G^{\caE^{(4)}\lmk\mathfrak S_{\mfp_1,\mfp_2,+1}^{(N)}\rmk^{(3)}\setminus\{ \check f\}
%\end{split}
%\end{align}
%
In fact, if $\bm a(\bm t^{(1)},\bm r^{(1)}, u^{(1)})=\bm a(\bm t^{(2)},\bm r^{(2)}, u^{(2)})$, then 
clearly $u^{(1)}=u^{(2)}$ and (\ref{tokei}) implies $\lmk \bm t^{(1)},\bm r^{(1)}\rmk
=\lmk \bm t^{(2)},\bm r^{(2)}\rmk$.

It is also a surjection.
In fact, for any $\bm a=(a_e)_{e\in \caE^{(4)}\lmk\mathfrak S_{\mfp_1,\mfp_2,+1}^{(N)}\rmk}\in \mathcal{FB}_{{\mfp_1,\mfp_2,N}}$, set 
\begin{align}\label{tus}
\begin{split}
&z_w:=\Psi_{\mfq_w}(\bm a),\quad w\in  \bbV\lmk\mathfrak c_{\mfp_1,\mfp_2,+1}^{(N)}\rmk\setminus\{\check v\}
,\\
&u:=a_{\check f_1}
\end{split}
\end{align}
with $\mfq_w$ above.
Then $z_{(x,y)}=e$ and we have $\bm z:=(z_w)_{ w\in  \bbV\lmk\mathfrak c_{\mfp_1,\mfp_2,+1}^{(N)}\rmk\setminus\{\check v\}}\in \mathcal{PT}_{{\mfp_1,\mfp_2,N}}\times G^{{J}_{{\mfp_1,\mfp_2},{N} }\setminus \{\check v\}}$.
For this $\bm z$,
for any $e\in\caE^{(4)}\lmk\mathfrak S_{\mfp_1,\mfp_2,+1}^{(N)}\rmk\setminus\{\check f_1,\check f_2\}$,
we have
\begin{align}\label{sora}
\begin{split}
s_e(\bm z)=\lmk \lmk z_{v_{e,-\sigma_e}}\rmk^{-1}z_{v_{e,\sigma_e}}\rmk^{\sigma_e}
=\lmk \lmk \Psi_{\mfq_{v_{e,-\sigma_e}}}(\bm a)\rmk^{-1} \Psi_{\mfq_{v_{e,\sigma_e}}}(\bm a)\rmk^{\sigma_e}
=a_e.
\end{split}
\end{align}
Furthermore, from (\ref{sora}), we have
\begin{align}
\begin{split}
&\lmk
 \Psi_{\check \mfq}\lmk \bm s(\bm z)\rmk u^{\sigma_{\check f_1}}
 \rmk^{-\sigma_{\check f_2}}
=
\lmk
\Psi_{\check \mfq}\lmk \bm a\rmk 
\lmk a_{\check f_1}\rmk^{\sigma_{\check f_1}}
 \rmk^{-\sigma_{\check f_2}}
=\lmk a_{\check f_2}^{-\sigma_{\check f_2}}\rmk^{-\sigma_{\check f_2}}=
a_{\check f_2}.
\end{split}
\end{align}
Here, we used $\Psi_{\mathfrak c_{\mfp_1,\mfp_2,+1}^{(N)}}\lmk\bm a\rmk=e$, from $\bm a\in \mathcal{FB}_{{\mfp_1,\mfp_2,N}}$.
Hence we obtain
$\Xi_{{\mfp_1,\mfp_2,N}}
(\bm z, u)
=\bm a
$,
proving the surjectivity.
\end{proof}

\begin{lem}\label{gcnt}
Consider Setting \ref{kame}.
For any $\bm t\in G^{\partial \mathbb V\lmk \mathfrak S_{\mfp_1,\mfp_2,+1}^{(N)} \rmk}$,
we have
\begin{align}
\begin{split}
\lv\left\{
\bm k\in C_{{\mathbb E\lmk \mathfrak S_{\mfp_1,\mfp_2,+1}^{(N)} \rmk}}\mid
\Psi^{(x,y)}(\bm k)\vert_{{{\partial {\mathbb V\lmk \mathfrak S_{\mfp_1,\mfp_2,+1}^{(N)} \rmk}}}}=\bm t
\right\}\rv=\lv G\rv^{\lv {J}_{{\mfp_1,\mfp_2},{N} }\rv}
|G|^{\lv \caE_{\mathfrak S_{\mfp_1,\mfp_2,+1}^{(N)}}^{(1)}\rv}
\end{split}
\end{align}
In particular, it
does not depend on the choice of
$\bm t\in G^{\partial \mathbb V\lmk \mathfrak S_{\mfp_1,\mfp_2,+1}^{(N)} \rmk}$.
\end{lem}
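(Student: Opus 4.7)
The plan is to peel off the contributions of the different parts of $\mathbb E\bigl(\mathfrak S_{\mfp_1,\mfp_2,+1}^{(N)}\bigr)$ in two stages: first the interior edges $\caE^{(1)},\caE^{(2)},\caE^{(3)}$ via Lemma \ref{alr}, and then the loop boundary $\caE^{(4)}$ via Lemma \ref{risu}. The key observation tying the two steps together is that the fixed boundary value $\bm t$ of $\Psi^{(x,y)}(\bm k)$ depends only on the restriction $\bm a := \bm k|_{\caE^{(4)}}$, so the count factors cleanly.

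Concretely, well-separatedness of $\mfp_1,\mfp_2$ ensures $\mathfrak S_{\mfp_1,\mfp_2,+1}^{(N)}$ satisfies condition S, so Lemma \ref{alr} applies and identifies $C_{\mathbb E(\mathfrak S_{\mfp_1,\mfp_2,+1}^{(N)})}$ with the set of pairs $(\bm a, \bm g) \in \mathcal{FB}_{\mfp_1,\mfp_2,N} \times G^{\caE^{(1)}}$ via the unique admissible extension by $\bm b^{(2)},\bm b^{(3)}$ provided by the equivalence (i)$\Leftrightarrow$(iii) there. Next I verify that for any $\bm k \in C_{\mathbb E(\mathfrak S_{\mfp_1,\mfp_2,+1}^{(N)})}$ and any $w \in \partial \mathbb V\bigl(\mathfrak S_{\mfp_1,\mfp_2,+1}^{(N)}\bigr)$, the vertex $w$ sits on the loop $\mathfrak c_{\mfp_1,\mfp_2,+1}^{(N)}$, so $\Psi^{(x,y)}(\bm k)_w$ can be computed along a path lying entirely in $\caE^{(4)}$. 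By the path-independence from Definition \ref{kaki} (itself resting on Lemma \ref{basic}), this value coincides with the boundary-only quantity $\Psi^{(x,y)}_{\partial\mathbb V}(\bm a)_w$ introduced before Lemma \ref{risu}. Hence the condition $\Psi^{(x,y)}(\bm k)|_{\partial\mathbb V} = \bm t$ becomes a condition on $\bm a$ alone and is independent of $\bm g$.

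Combining, the set to count is in bijection with $\bigl\{\bm a \in \mathcal{FB}_{\mfp_1,\mfp_2,N} : \Psi^{(x,y)}_{\partial\mathbb V}(\bm a) = \bm t\bigr\} \times G^{\caE^{(1)}}$. Since $\Psi^{(x,y)}_{\partial\mathbb V}(\bm a)_{v_0} = e$ automatically, this set is empty unless $t_{v_0} = e$, i.e. unless $\bm t \in \mathcal{PT}_{\mfp_1,\mfp_2,N}$ (the implicit hypothesis of the lemma). For such $\bm t$, Lemma \ref{risu} provides the bijection $\Xi_{\mfp_1,\mfp_2,N}$ satisfying $\Psi^{(x,y)}_{\partial\mathbb V}(\Xi(\bm t,\bm u)) = \bm t$, so the fiber over a single $\bm t$ has exactly $|G|^{|J_{\mfp_1,\mfp_2,N}|}$ elements, parametrized by $\bm u$. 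Multiplying by the $|G|^{|\caE^{(1)}|}$ free choices of $\bm g$ yields the stated count, manifestly independent of the choice of $\bm t$. The only genuinely nonroutine step is the boundary-only reduction of $\Psi^{(x,y)}$, and that is where Definition \ref{kaki} carries the load; everything else is bookkeeping over the two bijections already in hand.
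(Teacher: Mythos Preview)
Your proposal is correct and follows essentially the same route as the paper: decompose $C_{\mathbb E(\mathfrak S_{\mfp_1,\mfp_2,+1}^{(N)})}$ via Lemma~\ref{alr} into pairs $(\bm a,\bm g)\in\mathcal{FB}_{\mfp_1,\mfp_2,N}\times G^{\caE^{(1)}}$, observe via Definition~\ref{kaki} that the constraint $\Psi^{(x,y)}(\bm k)\vert_{\partial\mathbb V}=\bm t$ depends only on $\bm a$, and then count the admissible $\bm a$'s using the bijection of Lemma~\ref{risu}. You are also right to flag that the statement implicitly requires $t_{v_0}=e$ (i.e.\ $\bm t\in\mathcal{PT}_{\mfp_1,\mfp_2,N}$), since otherwise the left-hand side is zero; the paper's own proof begins by restricting to $\bm t\in\mathcal{PT}_{\mfp_1,\mfp_2,N}$ for exactly this reason.
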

\begin{proof}
For any $\bm t\in \mathcal{PT}_{{\mfp_1,\mfp_2,N}}$,
set
\begin{align}
\begin{split}
\mathcal{FB}_{{\mfp_1,\mfp_2,N}}^{\bm t}
:=\left\{\bm a \in \mathcal{FB}_{{\mfp_1,\mfp_2,N}}\mid
\Psi_{{{\partial {\mathbb V\lmk \mathfrak S_{\mfp_1,\mfp_2,+1}^{(N)} \rmk}}}}^{(x, y)}(\bm a)=\bm t
\right\}.
\end{split}
\end{align}
Then we have
\begin{align}\label{alpaka}
\begin{split}
&\left\{
\bm k\in C_{{{\mathbb E\lmk \mathfrak S_{\mfp_1,\mfp_2,+1}^{(N)} \rmk}}}\mid
\Psi^{(x,y)}\lmk \bm k \rmk\vert_{\partial \mathbb V\lmk \mathfrak S_{\mfp_1,\mfp_2,+1}^{(N)} \rmk}=\bm t
\right\}\\
&=\dot{\bigcup}_{\substack{\bm a\in \mathcal{FB}_{{\mfp_1,\mfp_2,N}}
,\\
 }
 }
\left\{
\bm k\in C_{{{\mathbb E\lmk \mathfrak S_{\mfp_1,\mfp_2,+1}^{(N)} \rmk}}}\mid\;\;
\Psi^{(x,y)}\lmk \bm k \rmk\vert_{\caE^{(4)}\lmk\mathfrak S_{\mfp_1,\mfp_2,+1}^{(N)}\rmk}=\bm a,\;
\Psi^{(x,y)}\lmk \bm k \rmk\vert_{\partial \mathbb V\lmk \mathfrak S_{\mfp_1,\mfp_2,+1}^{(N)} \rmk}=\bm t
\right\}\\
&=\dot{\bigcup}_{\substack{\bm a\in \mathcal{FB}_{{\mfp_1,\mfp_2,N}}^{\bm t}
,\\
 }
 }
\left\{
\bm k\in C_{{{\mathbb E\lmk \mathfrak S_{\mfp_1,\mfp_2,+1}^{(N)} \rmk}}}\mid\;\;
\Psi^{(x,y)}\lmk \bm k \rmk\vert_{\caE^{(4)}\lmk\mathfrak S_{\mfp_1,\mfp_2,+1}^{(N)}\rmk}=\bm a
\right\}\\
&=\dot\bigcup_{\bm u\in  G^{\lv{J}_{{\mfp_1,\mfp_2,N}}\rv}}
\left\{
\bm k\in C_{{{\mathbb E\lmk \mathfrak S_{\mfp_1,\mfp_2,+1}^{(N)} \rmk}}}\mid\;\;
\Psi^{(x,y)}\lmk \bm k \rmk\vert_{\caE^{(4)}\lmk\mathfrak S_{\mfp_1,\mfp_2,+1}^{(N)}\rmk}=
\Xi_{{\mfp_1,\mfp_2,N}}(\bm t, \bm u)
\right\}.
\end{split}
\end{align}
Here we used Lemma \ref{risu}. By Lemma \ref{alr},
there exists $1-|G|^{\lv \caE_{\mathfrak S_{\mfp_1,\mfp_2,+1}^{(N)}}^{(1)}\rv}
$-correspondence
between $\mathcal{FB}_{{\mfp_1,\mfp_2,N}}$ and
$C_{{\mathbb E\lmk \mathfrak S_{\mfp_1,\mfp_2,+1}^{(N)} \rmk}}$.
Combining this fact and (\ref{alpaka}), for any $\bm t\in \mathcal{PT}_{{\mfp_1,\mfp_2,N}}$,
we have
\begin{align}
\begin{split}
&\lv
\left\{
\bm k\in C_{{{\mathbb E\lmk \mathfrak S_{\mfp_1,\mfp_2,+1}^{(N)} \rmk}}}\mid
\Psi^{(x,y)}\lmk \bm k \rmk\vert_{\partial \mathbb V\lmk \mathfrak S_{\mfp_1,\mfp_2,+1}^{(N)} \rmk}=\bm t
\right\}
\rv\\
&=\sum_{\bm u\in  G^{\lv{J}_{{\mfp_1,\mfp_2,N}}\rv}}
\lv
\left\{
\bm k\in C_{{{\mathbb E\lmk \mathfrak S_{\mfp_1,\mfp_2,+1}^{(N)} \rmk}}}\mid\;\;
\Psi^{(x,y)}\lmk \bm k \rmk\vert_{\caE^{(4)}\lmk\mathfrak S_{\mfp_1,\mfp_2,+1}^{(N)}\rmk}
=\Xi_{{\mfp_1,\mfp_2,N}}(\bm t, \bm u)
\right\}
\rv\\
&=\sum_{\bm u\in  G^{\lv{J}_{{\mfp_1,\mfp_2,N}}\rv}}
|G|^{\lv\caE_{\mathfrak S_{\mfp_1,\mfp_2,+1}^{(N)}}^{(1)}\rv}
=\lv G\rv^{\lv {J}_{{\mfp_1,\mfp_2},{N} }\rv}
|G|^{\lv\caE_{\mathfrak S_{\mfp_1,\mfp_2,+1}^{(N)}}^{(1)}\rv}.
\end{split}
\end{align}
\end{proof}

\section{Restriction of $\omega_0$ to a cone shape area}\label{res}
We use the notation from section \ref{intro}, \ref{coffe}, \ref{acls}
but we do not assume
$G$ to be abelian.

Let $\Gamma$ be a convex cone in $\bbR^2$ with apex $\bm a_\Gamma\in [0,1]\times[0,1]$.
Because of the $\bbZ^2$-translation invariance of the model, it suffices to consider this case.
The boundary of $\Gamma$ consists of two lines
$L_1^\Gamma:=\bm a_\Gamma+\bbR_{\ge 0}\bm e_{\theta_1}$,
$L_2^\Gamma:=\bm a_\Gamma+\bbR_{\ge 0}\bm e_{\theta_2}$,
with $0<\theta_2-\theta_1\le\pi$, because
$\Gamma$ is convex.
Here $\bm e_\theta:=(\cos\theta,\sin\theta)$.
We concentrate on the case $0\le \theta_1<\frac\pi 2$.
The proof of other cases are the same, just $\frac\pi 2,\pi,\frac{3\pi}{2}$-rotate
the following argument.

Note that three cases can occur:
\begin{description}
\item[(1)] $0\le \theta_1<\theta_2\le\frac\pi 2$,
\item[(2)] $0\le\theta_1<\frac\pi 2\le \theta_2\le \pi$,
\item[(3)] $0\le\theta_1<\frac\pi 2<\pi<\theta_2\le\pi+\theta_1<\frac{3\pi}2$.
\end{description}
For each case, we fix $n_0,m_0\in\bbN$ so that
\begin{description}
\item[(1)]$\frac{m_0}{n_0}< \tan \theta_2$,
\item[(2)]$n_0=m_0=1$,
\item[(3)]$\frac{m_0}{n_0}< \tan \theta_2$.
\end{description}
Then for case (1),(2) $\Gamma$ is in the upper-half plane and there exists $M_0,N_0\in\bbN$ such that
\begin{align}
\begin{split}
3 \le \lv
\left\{
x\in\bbZ\mid
S_{(x,M)}\subset \Gamma\cap \Lambda_N^{(n_0,m_0)}
\right\}
\rv
\end{split}
\end{align}
for all $M,N\in\bbN$ with
$M_0\le M\le N$, $N_0\le N$.
For case (3), 
there exists $N_0\in\bbN$ such that
\begin{align}
\begin{split}
3 \le \lv
\left\{
x\in\bbZ\mid
S_{(x,M)}\subset \Gamma\cap \Lambda_N^{(n_0,m_0)}
\right\}
\rv
\end{split}
\end{align}
for all $M,N\in\bbN$ with
$-Nm_0\le M\le Nm_0$, $N_0\le N$.
For (3) case, we set $M_0:=-\infty$.
For $M_0\in\bbZ\cup\{-\infty\}$, set
\begin{align}
H^U_{M_0}:=
\left\{(x,y)\in\bbZ^2\mid
y\ge M_0
\right\}.
\end{align}
Consider the set of squares
inside of
\begin{align}
\begin{split}
\Gamma\cap\bbZ^2\cap H^U_{M_0}.
\end{split}
\end{align}
By considering the boundary of the set given by these squares,
for each case (1)-(3),
we obtain a point 
$v_0=(x,y)\in\bbZ^2$, and 
$\mfp_1,\mfp_2\in \mathfrak P(x,y)$
a pair of paths well-separated with respect to $N_0,n_0, m_0$ satisfying
(1)-(3) in Setting \ref{kame}, respectively.
We use the notation from section \ref{kuma} for this 
$v_0=(x,y)\in\bbZ^2$, and 
$\mfp_1,\mfp_2\in \mathfrak P(x,y)$ freely.

Recall that $\caP_{\lmk \Gamma\cap\bbZ^2\cap H^U_{M_0}\rmk}$
denotes all the squares in $\lmk \Gamma\cap\bbZ^2\cap H^U_{M_0}\rmk$
and $\widetilde{\lmk \Gamma\cap\bbZ^2\cap H^U_{M_0}\rmk}$
denotes the edges inside of these squares.
We set 
\begin{align}
\begin{split}
&\hat\Gamma:=\widetilde{\lmk \Gamma\cap\bbZ^2\cap H^U_{M_0}\rmk},\\
&{\hat\Gamma_N}:=\widetilde{\lmk \Gamma\cap\Lambda_N^{(n_0,m_0)}\cap H^U_{M_0}\rmk}=\mathbb E\lmk \mathfrak S_{\mfp_1,\mfp_2,+1}^{(N)} \rmk.
\end{split}
\end{align}

In this section, we obtain an explicit formula for the restriction of $\omega_0=\varphi_0$ to 
 $\caB_{\hat\Gamma_N}$.
\begin{lem}\label{gred}
With the notation above, for any $N\ge N_0$,
 there is some number $c_{\Gamma,N}>0$ such that
\begin{align}
\begin{split}
&\varphi_0\lmk
\ket{\bm h}_{{\hat\Gamma_N}}\bra{\bm k}
\rmk\\
&=c_{\Gamma,N}\cdot 
 \chi\lmk
\bm k_{}, 
\bm h_{}\in C_{\sppe}
\rmk\chi\lmk
\Psi^{v_0}(\bm h_{})\vert_{\partial {\mathbb V\lmk \mathfrak S_{\mfp_1,\mfp_2,+1}^{(N)} \rmk}}=
\Psi^{v_0}(\bm k_{})\vert_{\partial {\mathbb V\lmk \mathfrak S_{\mfp_1,\mfp_2,+1}^{(N)} \rmk}}
\rmk\\
\end{split}
\end{align}
for any $\bm h, \bm k\in G^{\hat\Gamma_N}=G^{ {\sppe}}$.
\end{lem}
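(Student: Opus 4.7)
The plan is to compute the restriction of $\varphi_0$ to $\caB_{\hat\Gamma_N}$ by starting from the explicit formula for $\varphi_0$ on $\caB_{\hln N^{(n_0,m_0)}}$ derived in Section \ref{coffe} and integrating out the configuration on the complementary edges. For any $\bm h, \bm k \in G^{\hat\Gamma_N}$, I first write
\[
\varphi_0\lmk \ket{\bm h}_{\hat\Gamma_N}\bra{\bm k}\rmk
= \sum_{\bm m \in G^{\hln N^{(n_0,m_0)} \setminus \hat\Gamma_N}} \varphi_0\lmk \ket{\bm h,\bm m}_{\hln N^{(n_0,m_0)}}\bra{\bm k,\bm m}\rmk
\]
and apply formula (\ref{vpdef}) termwise.

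The key structural fact is that $\hln N^{(n_0,m_0)} \setminus \hat\Gamma_N$ decomposes disjointly as $\mathfrak{OE}_{\mfp_1,\mfp_2}^{(N)} \cup \mathfrak{BE}_{\mfp_1,\mfp_2}^{(N)}$, so I parametrize $\bm m = (\bm o, \bm g)$ with $\bm o \in G^{\mathfrak{OE}_{\mfp_1,\mfp_2}^{(N)}}$ and $\bm g = (g_i)_{i=0}^{M_{\mfp_1,\mfp_2,(N)}} \in G^{\mathfrak{BE}_{\mfp_1,\mfp_2}^{(N)}}$. Lemma \ref{gorira} is designed exactly for this setup: its equivalence (i)$\Leftrightarrow$(iii) says that the joint admissibility of $(\bm h, \bm m)$ and $(\bm k, \bm m)$ together with outer-boundary agreement is equivalent to the two indicator conditions on $(\bm h, \bm k)$ appearing in the target formula, combined with $\bm o \in C_{\mathfrak{OE}_{\mfp_1,\mfp_2}^{(N)}}$ and $(g_i)_{i \ge 1}$ being uniquely determined by $(\bm o, g_0, \bm k)$ or equivalently by $(\bm o, g_0, \bm h)$. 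When the target conditions on $(\bm h, \bm k)$ fail, every summand vanishes, since a non-zero summand would witness (ii) of Lemma \ref{gorira} and thus force (iii); when they hold, $g_0 \in G$ and $\bm o \in C_{\mathfrak{OE}_{\mfp_1,\mfp_2}^{(N)}}$ range freely while $(g_i)_{i \ge 1}$ is pinned down.

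Collecting contributions gives
\[
c_{\Gamma,N} = \frac{|G| \cdot |C_{\mathfrak{OE}_{\mfp_1,\mfp_2}^{(N)}}|}{|C_{\hln N^{(n_0,m_0)}}|} > 0,
\]
manifestly independent of $\bm h, \bm k$. The only real obstacle is the bookkeeping underpinning the application of Lemma \ref{gorira}: every square in $\hln N^{(n_0,m_0)}$ must fall into one of the three regions $\mathfrak S_{\mfp_1,\mfp_2,+1}^{(N)}$, $\mathfrak T_{\mfp_1,\mfp_2}^{(N)}$, or a square with all edges in $\mathfrak{OE}_{\mfp_1,\mfp_2}^{(N)}$, and this exhaustive partition is furnished by properties (A)--(G) of Remark \ref{obs}. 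Modulo this, the proof reduces to a direct application of Lemma \ref{gorira} followed by the counting above.
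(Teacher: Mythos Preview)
Your proof is correct and follows essentially the same approach as the paper: both expand $\varphi_0\lmk \ket{\bm h}_{\hat\Gamma_N}\bra{\bm k}\rmk$ as a sum over $\bm m\in G^{\mathfrak{OE}_{\mfp_1,\mfp_2}^{(N)}}$ and $\bm g\in G^{\mathfrak{BE}_{\mfp_1,\mfp_2}^{(N)}}$, apply the explicit formula for $\varphi_0$ on $\hln N^{(n_0,m_0)}$, and then invoke Lemma \ref{gorira} to reduce the joint admissibility and boundary conditions to the two indicator functions on $(\bm h,\bm k)$, with the remaining free choices of $g_0\in G$ and $\bm o\in C_{\mathfrak{OE}_{\mfp_1,\mfp_2}^{(N)}}$ producing the constant $c_{\Gamma,N}=\lv G\rv\cdot\lv C_{\mathfrak{OE}_{\mfp_1,\mfp_2}^{(N)}}\rv/\lv C_{\hln N^{(n_0,m_0)}}\rv$. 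The only cosmetic difference is that the paper phrases the key step via the equivalence (ii)$\Leftrightarrow$(iii) followed by (iii)$\Rightarrow$(i), whereas you package it as (i)$\Leftrightarrow$(iii) with (ii) handling the vanishing case.
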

\begin{rem}
Note because $\mathfrak S_{\mfp_1,\mfp_2,+1}^{(N)} $ satisfies condition S,
$\Psi^{v_0}(\bm h_{})$ is well-defined for
any $\bm h_{}\in C_{\sppe}$.
\end{rem}
\begin{proof}
By the definition of $\varphi_0$, for any
$\bm h, \bm k\in G^{{\hat\Gamma_N}}$, 
\begin{align}\label{tamago}
\begin{split}
&\varphi_{0}\lmk
\ket{\bm h}_{{\hat\Gamma_N}}\bra{\bm k}
\rmk\\
&=
\sum_{\bm m\in G^{\mathfrak {OE}_{\mfp_1,\mfp_2}^{(N)}}}
\sum_{\bm g\in G^{\mathfrak {BE}_{\mfp_1,\mfp_2}^{(N)}}}
\varphi_{0}\lmk
\ket{\bm h,\bm m,\bm g}_{\hat\Lambda_N^{(n_0,m_0)}}\bra{\bm k,\bm m,\bm g}
\rmk\\
&=\sum_{\bm m\in G^{\mathfrak {OE}_{\mfp_1,\mfp_2}^{(N)}}}
\sum_{\bm g\in G^{\mathfrak {BE}_{\mfp_1,\mfp_2}^{(N)}}}
\frac{1}{|C_{{\hat\Lambda_N^{(n_0,m_0)}}}|}
\chi\lmk {\lmk \bm k,\bm m,\bm g\rmk, 
\lmk  \bm h,\bm m,\bm g\rmk\in C_{{\hat\Lambda_N^{(n_0,m_0)}}}}\rmk
\chi\lmk\lmk \bm k,\bm m,\bm g\rmk_{\partial{{\hat\Lambda_N^{(n_0,m_0)}}} }
= \lmk \bm h,\bm m,\bm g\rmk_{\partial{{\hat\Lambda_N^{(n_0,m_0)}}}}\rmk\\
&=\sum_{\bm m\in C_{\mathfrak {OE}_{\mfp_1,\mfp_2}^{(N)}}}
\sum_{g\in G^{{e_0^{\mfp_1,\mfp_2, (N)}}}}
\sum_{\tilde {\bm g}\in G^{\mathfrak {BE}_{\mfp_1,\mfp_2}^{(N)}\setminus \{{e_0^{\mfp_1,\mfp_2, (N)}}\}}}
\frac{1}{|C_{{\hat\Lambda_N^{(n_0,m_0)}}}|}
\chi\lmk {\lmk \bm k,\bm m,g, \tilde{\bm g}, \rmk, 
\lmk  \bm h,\bm m,g, \tilde{\bm g}\rmk\in C_{{\hat\Lambda_N^{(n_0,m_0)}}}}\rmk\\
&\quad\quad\quad\quad\quad\quad\quad\quad\quad\quad\quad\quad\chi\lmk
 \lmk \bm k,\bm m,g,\tilde{\bm g}\rmk_{\partial{{\hat\Lambda_N^{(n_0,m_0)}}}}
=\lmk \bm h,\bm m,g, \tilde{\bm g}\rmk_{\partial{{\hat\Lambda_N^{(n_0,m_0)}}} }\rmk
 \chi\lmk
\bm k, 
\bm h\in C_{{\mathbb E\lmk \mathfrak S_{\mfp_1,\mfp_2,+1}^{(N)} \rmk}}
\rmk\\
&
\quad \quad\quad\quad\quad\quad\quad\quad\quad\quad\quad\quad\chi\lmk
\Psi^{(x,y)}(\bm h)\vert_{{{\partial {\mathbb V\lmk \mathfrak S_{\mfp_1,\mfp_2,+1}^{(N)} \rmk}}}}=
\Psi^{(x,y)}(\bm k)\vert_{{{\partial {\mathbb V\lmk \mathfrak S_{\mfp_1,\mfp_2,+1}^{(N)} \rmk}}}}
\rmk.
\end{split}
\end{align}
Here we used Lemma \ref{gorira} equivalence of (ii),(iii) for the last line.
Recall from Lemma \ref{gorira} that if $\bm k, 
\bm h\in C_{{\mathbb E\lmk \mathfrak S_{\mfp_1,\mfp_2,+1}^{(N)} \rmk}}$,
$\Psi^{(x,y)}(\bm h)\vert_{{{\partial {\mathbb V\lmk \mathfrak S_{\mfp_1,\mfp_2,+1}^{(N)} \rmk}}}}=
\Psi^{(x,y)}(\bm k)\vert_{{{\partial {\mathbb V\lmk \mathfrak S_{\mfp_1,\mfp_2,+1}^{(N)} \rmk}}}}$,
for any
$\bm m\in C_{{\mathfrak {OE}_{\mfp_1,\mfp_2}^{(N)}}}$ and 
$g\in G^{{e_0^{\mfp_1,\mfp_2, (N)}}}$,
there exists a unique $\tilde {\bm g}\in G^{\mathfrak {BE}_{\mfp_1,\mfp_2}^{(N)}\setminus \{{e_0^{\mfp_1,\mfp_2, (N)}}\}}$
satisfying 
\begin{align}
\begin{split}
\lmk \bm k,\bm m,g, \tilde{\bm g}, \rmk, 
\lmk  \bm h,\bm m,g, \tilde{\bm g}\rmk
\in C_{{\hat \Lambda_N^{(n_0,m_0)}}}
\end{split}
\end{align}
and
\begin{align}
\begin{split}
 \lmk \bm k,\bm m,g,\tilde{\bm g}\rmk_{\partial{{\hat \Lambda_N^{(n_0,m_0)}}}}
= \lmk \bm h,\bm m,g, \tilde{\bm g}\rmk_{\partial{{\hat \Lambda_N^{(n_0,m_0)}}}}.
\end{split}
\end{align}
We have
\begin{align}
\begin{split}
(\ref{tamago})=&
\sum_{\bm m\in C_{\mathfrak {OE}_{\mfp_1,\mfp_2}^{(N)}}}
\sum_{g\in G^{{e_0^{\mfp_1,\mfp_2, (N)}}}}
\frac{1}{|C_{{\hat\Lambda_N^{(n_0,m_0)}}}|}
 \chi\lmk
\bm k, 
\bm h\in C_{{\mathbb E\lmk \mathfrak S_{\mfp_1,\mfp_2,+1}^{(N)} \rmk}}
\rmk\\
&\chi\lmk
\Psi^{(x,y)}(\bm h)\vert_{{{\partial {\mathbb V\lmk \mathfrak S_{\mfp_1,\mfp_2,+1}^{(N)} \rmk}}}}=
\Psi^{(x,y)}(\bm k)\vert_{{{\partial {\mathbb V\lmk \mathfrak S_{\mfp_1,\mfp_2,+1}^{(N)} \rmk}}}}
\rmk\\
=&\frac{\lv C_{\mathfrak {OE}_{\mfp_1,\mfp_2}^{(N)}}\rv\lv G\rv}{{|C_{{\hat\Lambda_N^{(n_0,m_0)}}}|}}
 \chi\lmk
\bm k, 
\bm h\in C_{{\mathbb E\lmk \mathfrak S_{\mfp_1,\mfp_2,+1}^{(N)} \rmk}}
\rmk\\
&\chi\lmk
\Psi^{(x,y)}(\bm h)\vert_{{{\partial {\mathbb V\lmk \mathfrak S_{\mfp_1,\mfp_2,+1}^{(N)} \rmk}}}}=
\Psi^{(x,y)}(\bm k)\vert_{{{\partial {\mathbb V\lmk \mathfrak S_{\mfp_1,\mfp_2,+1}^{(N)} \rmk}}}}
\rmk\\
\end{split}
\end{align}
\end{proof}

%We name the edges on $l^{(r)}$ in order
%as $e_1^{(r)}, e_2^{(r)}, \cdots, e_{k_r}^{(r)}$, with
%$e_1^{(r)}=(x_1^{(1)},y)$, $e_{k_r}^{(r)}=(x_{n_r}^{(l)}+1, y+l)$.
%We name the edges on $l^{(l)}$ in order
%as $e_1^{(l)}, e_2^{(l)}, \cdots, e_{k_l}^{(l)}$, with
%$e_1^{(l)}=(x_1^{(1)},y)$, $e_{k_l}^{(l)}=(x_{n_l}^{(l)}+1, y+l)$.
%For $\bm a=(a_e)_{e\in\partial \hat \Gamma_N}\in G^{\partial\hat \Gamma_N}$,
%we set
%\begin{align}
%\begin{split}
%\Phi^{(r)}(\bm a):=a_{e_1^{(r)}}a_{e_2^{(r)}}\cdots a_{e_{k_r}^{(r)}},\quad
%\Phi^{(l)}(\bm a):=a_{e_1^{(l)}}a_{e_2^{(l)}}\cdots a_{e_{k_l}^{(l)}}.
%\end{split}
%\end{align}

\section{Tracial state}
We use the notation from section \ref{intro}, \ref{coffe}, \ref{acls}, \ref{kuma}
but we do not assume
$G$ to be abelian.

We first consider the same setting as in section \ref{res},
and use notations from section \ref{kuma}.
For each $\bm t\in \mathcal{PT}_{{\mfp_1,\mfp_2,N}}$, set
\begin{align}
\caJ_{\bm t}^{(\Gamma, N)}
:=\left\{
\bm k\in C_{{\mathbb E\lmk \mathfrak S_{\mfp_1,\mfp_2,+1}^{(N)} \rmk}}\mid
\Psi^{(x,y)}(\bm k)\vert_{{{\partial {\mathbb V\lmk \mathfrak S_{\mfp_1,\mfp_2,+1}^{(N)} \rmk}}}}=\bm t
\right\}.
\end{align}
By Lemma \ref{gcnt}, we have 
\begin{align}\label{jkazu}
\begin{split}
\lv \caJ_{\bm t}^{(\Gamma, N)}\rv=
\lv G\rv^{\lv {J}_{{\mfp_1,\mfp_2},{N} }\rv}
\lv \caE_{\mathfrak S_{\mfp_1,\mfp_2,+1}^{(N)}}^{(1)}\rv
\end{split}
\end{align}
for any $\bm t\in  \mathcal{PT}_{{\mfp_1,\mfp_2,N}}$.
For $\bm t\in \mathcal{PT}_{{\mfp_1,\mfp_2,N}}$,
we set
\begin{align}
&\psi_{\bm t}^{N}:= \sum_{\bm k\in \caJ_{\bm t}^{(\Gamma, N)}}\ket{\bm k}_{\hat \Gamma_N},
\end{align}
By (\ref{jkazu}),
\begin{align}
\begin{split}
&\hat\psi_{\bm t}^{N}:=
\frac{1}{\sqrt{
\lv G\rv^{\lv {J}_{{\mfp_1,\mfp_2},{N} }\rv}
\lv \caE_{\mathfrak S_{\mfp_1,\mfp_2,+1}^{(N)}}^{(1)}\rv}}\;
\psi_{\bm t}^{N},\quad\quad \bm t\in  \mathcal{PT}_{{\mfp_1,\mfp_2,N}}.
\end{split}
\end{align}
are unit vectors.
We set
\begin{align}
\begin{split}
Q_N:=\sum_{\bm t\in \mathcal{PT}_{{\mfp_1,\mfp_2,N}}}
\ket{\hat\psi_{\bm t}^{N}}\bra{\hat\psi_{\bm t}^{N}}.
\end{split}
\end{align}

\begin{lem}\label{trac}
For any $N\ge N_0$, $Q_N$ is  the support of $\varphi_0$ on $\caB_{{\hat\Gamma_N}}$ and 
$\varphi_0\vert_{Q_N\caB_{{\hat\Gamma_N}}Q_N}$ is a tracial state.
Namely, we have
\begin{align}
\begin{split}
\varphi_0\lmk Q_N A Q_N B Q_N\rmk
=\varphi_0\lmk Q_N B Q_N A Q_N\rmk,\quad A,B\in \caB_{{\hat\Gamma_N}}.
\end{split}
\end{align}
\end{lem}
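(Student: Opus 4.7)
The plan is to identify the density matrix $\rho_0$ of the finite-dimensional state $\varphi_0\vert_{\caB_{\hat\Gamma_N}}$ explicitly and show it is proportional to $Q_N$. Once that is established, both assertions follow at once.

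First I would check that $\{\hat\psi_{\bm t}^N\}_{\bm t\in \mathcal{PT}_{{\mfp_1,\mfp_2,N}}}$ is an orthonormal family, so $Q_N$ is a genuine projection in $\caB_{\hat\Gamma_N}$. Orthogonality is immediate: two distinct $\bm t,\bm t'\in \mathcal{PT}_{{\mfp_1,\mfp_2,N}}$ yield disjoint index sets $\caJ_{\bm t}^{(\Gamma,N)}$ and $\caJ_{\bm t'}^{(\Gamma,N)}$ (since $\Psi^{v_0}(\bm k)\vert_{\partial \mathbb V(\mathfrak S_{\mfp_1,\mfp_2,+1}^{(N)})}$ is uniquely determined by $\bm k$), so the corresponding sums of basis vectors $\ket{\bm k}_{\hat\Gamma_N}$ are orthogonal. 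Normalization is exactly the content of Lemma \ref{gcnt}, which says $|\caJ_{\bm t}^{(\Gamma,N)}|=|G|^{|{J}_{{\mfp_1,\mfp_2},{N}}|}|G|^{|\caE^{(1)}_{\mathfrak S_{\mfp_1,\mfp_2,+1}^{(N)}}|}$ independently of $\bm t$.

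Next I would compute $\rho_0$ by reading matrix elements from Lemma \ref{gred}. Since $\varphi_0(\ket{\bm h}\bra{\bm k})=\langle\bm k|\rho_0|\bm h\rangle$, Lemma \ref{gred} gives
\begin{align*}
\langle \bm k|\rho_0|\bm h\rangle
&=c_{\Gamma,N}\,\chi\bigl(\bm k,\bm h\in C_{\sppe}\bigr)\,
\chi\bigl(\Psi^{v_0}(\bm h)\vert_{\partial \mathbb V(\mathfrak S_{\mfp_1,\mfp_2,+1}^{(N)})}
=\Psi^{v_0}(\bm k)\vert_{\partial \mathbb V(\mathfrak S_{\mfp_1,\mfp_2,+1}^{(N)})}\bigr).
\end{align*}
Partitioning $C_{\sppe}$ into the level sets $\caJ_{\bm t}^{(\Gamma,N)}$ and using the definition of $\psi_{\bm t}^N$ turns the right-hand side into $c_{\Gamma,N}\sum_{\bm t}\ket{\psi_{\bm t}^N}\bra{\psi_{\bm t}^N}$, hence
\[
\rho_0=c_{\Gamma,N}\,|G|^{|{J}_{{\mfp_1,\mfp_2},{N}}|}\,|G|^{|\caE^{(1)}_{\mathfrak S_{\mfp_1,\mfp_2,+1}^{(N)}}|}\,Q_N.
\]
Comparing with $\varphi_0(1)=1$ pins down the scalar and confirms $\rho_0=|\mathcal{PT}_{{\mfp_1,\mfp_2,N}}|^{-1}Q_N$.

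With $\rho_0$ proportional to the projection $Q_N$, both conclusions are essentially automatic. The support of $\varphi_0\vert_{\caB_{\hat\Gamma_N}}$ is the support of $\rho_0$, which is $Q_N$ itself. For traciality, note that $\rho_0$ is proportional to $Q_N$ and hence commutes with every element of the corner $Q_N\caB_{\hat\Gamma_N}Q_N$. Therefore, for $A,B\in\caB_{\hat\Gamma_N}$,
\begin{align*}
\varphi_0(Q_N A Q_N B Q_N)
&=\Tr\bigl(\rho_0\, Q_N A Q_N B Q_N\bigr)
=\tfrac{1}{|\mathcal{PT}_{{\mfp_1,\mfp_2,N}}|}\Tr\bigl(Q_N A Q_N\cdot Q_N B Q_N\bigr)\\
&=\tfrac{1}{|\mathcal{PT}_{{\mfp_1,\mfp_2,N}}|}\Tr\bigl(Q_N B Q_N\cdot Q_N A Q_N\bigr)
=\varphi_0(Q_N B Q_N A Q_N),
\end{align*}
using cyclicity of the matrix trace in the middle equality. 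The only step requiring genuine care is the matrix-element identification of $\rho_0$; everything else is bookkeeping.
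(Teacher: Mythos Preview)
Your proposal is correct and follows essentially the same route as the paper: both use Lemma \ref{gred} to recognize that $\varphi_0\vert_{\caB_{\hat\Gamma_N}}$ has density matrix $c_{\Gamma,N}\sum_{\bm t}\ket{\psi_{\bm t}^N}\bra{\psi_{\bm t}^N}$, which is a scalar multiple of $Q_N$, and then read off the support and traciality from cyclicity of the finite-dimensional trace. Your write-up is in fact slightly more careful, making explicit the orthonormality of the $\hat\psi_{\bm t}^N$ and the resulting normalization $\rho_0=|\mathcal{PT}_{{\mfp_1,\mfp_2,N}}|^{-1}Q_N$, which the paper leaves implicit.
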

\begin{proof}
From Lemma \ref{gred}, for any $\bm k,\bm h\in G^{\hat\Gamma_N}$, we have
\begin{align}
\begin{split}
&\varphi_0\lmk
\ket{\bm h}_{{\hat\Gamma_N}}\bra{\bm k}
\rmk\\
&=c_{\Gamma,N}\cdot
 \chi\lmk
\bm k, 
\bm h\in C_{{\mathbb E\lmk \mathfrak S_{\mfp_1,\mfp_2,+1}^{(N)} \rmk}}
\rmk\chi\lmk
\Psi^{(x,y)}(\bm h)\vert_{{{\partial {\mathbb V\lmk \mathfrak S_{\mfp_1,\mfp_2,+1}^{(N)} \rmk}}}}=
\Psi^{(x,y)}(\bm k)\vert_{{{\partial {\mathbb V\lmk \mathfrak S_{\mfp_1,\mfp_2,+1}^{(N)} \rmk}}}}
\rmk\\
&=c_{\Gamma,N}
\sum_{\bm t\in \mathcal{PT}_{{\mfp_1,\mfp_2,N}}}
\braket{\psi_{\bm t}^{N}}{ 
\lmk \ket{\bm h}_{_{{\hat\Gamma_N}}}\bra{\bm k}\rmk\psi_{\bm t}^{N}}\\
%&=\frac{\lv C_{\widehat{\lmk \Gamma_N\rmk^c}}\rv\lv G\rv}{{|C_{{\Lambda_N^{(n_0,m_0)}}}|}}
%\sum_{\bm t\in \mathcal{PT}_{{\mfp_1,\mfp_2,N}}}\lv \caJ_{\bm t}^{(\Gamma, N)}\rv
%\braket{\hat\psi_{\bm t}^{N}}{
%\lmk \ket{\bm h_{{\hat\Gamma_N}}}_{{\hat\Gamma_N}}\bra{\bm k_{{\hat\Gamma_N}}}\rmk
%\hat\psi_{\bm t}^{N}}\\
&=c_{\Gamma,N}
\lv G\rv^{\lv {J}_{{\mfp_1,\mfp_2},{N} }\rv}
\lv \caE_{\mathfrak S_{\mfp_1,\mfp_2,+1}^{(N)}}^{(1)}\rv
\sum_{\bm t\in \mathcal{PT}_{{\mfp_1,\mfp_2,N}}}
\braket{\hat\psi_{\bm t}^{N}}{
\lmk \ket{\bm h}_{{\hat\Gamma_N}}\bra{\bm k}\rmk
\hat\psi_{\bm t}^{N}}\\
&=c_{\Gamma,N}
\lv G\rv^{\lv {J}_{{\mfp_1,\mfp_2},{N} }\rv}
\lv \caE_{\mathfrak S_{\mfp_1,\mfp_2,+1}^{(N)}}^{(1)}\rv
\Tr Q_N\lmk \lmk \ket{\bm h}_{{\hat\Gamma_N}}\bra{\bm k}\rmk\rmk\\
&=c_N\Tr Q_N\lmk \lmk \ket{\bm h}_{{\hat\Gamma_N}}\bra{\bm k}\rmk\rmk,
\end{split}
\end{align}
with $c_N:=
c_{\Gamma,N}
\lv G\rv^{\lv {J}_{{\mfp_1,\mfp_2},{N} }\rv}
\lv \caE_{\mathfrak S_{\mfp_1,\mfp_2,+1}^{(N)}}^{(1)}\rv$.
We then have
\begin{align}
\begin{split}
&\varphi_0(B)=c_N\Tr Q_N B Q_N,
\end{split}
\end{align}
for all $B\in\caB_{{\hat\Gamma_N}}$.
Hence we have
\begin{align}
\begin{split}
\varphi_0\lmk Q_N A Q_N B Q_N\rmk
=c_N\Tr\lmk Q_N A Q_N B Q_N\rmk
=c_N\Tr\lmk Q_N B Q_N A Q_N\rmk
=\varphi_0\lmk Q_N B Q_N A Q_N\rmk
\end{split}
\end{align}
for all $A, B\in\caB_{{\hat\Gamma_N}}$.

\end{proof}

Because ${Q_N}$ is  the support of $\varphi_0$ on $\caB_{{\hat\Gamma_N}}$
and ${Q_{N+1}}$ is  the support of $\varphi_0$ on $\caB_{\hat\Gamma_{N+1}}$,
we have
$
1-{Q_N}\le 1-{Q_{N+1}}
$
hence
\begin{align}\label{qdec}
{Q_{N+1}}\le {Q_N}.
\end{align}
Let $(\caH_0,\pi_0,\Omega_0)$ be the GNS triple of $\varphi_0$.
Because of (\ref{qdec}), there is a projection $q\in \pi_0(\caB_{\hat\Gamma})''$
such that 
\begin{align}
s-\lim\pi_0\lmk{Q_N}\rmk=q.
\end{align}
Because
\begin{align}
\pi_0\lmk{Q_N}\rmk\Omega_0=\Omega_0,
\end{align}
we have 
\begin{align}
q\Omega_0=\Omega_0.
\end{align}
In particular, $q\neq 0$.
\begin{lem}
Let $\omega$ be a state on $q\pi_0(\caB_{\hat\Gamma})''q$
such that 
\begin{align}
\omega(x):=\braket{\Omega_0}{ x\Omega_0},\quad x\in q\pi_0(\caB_{\hat\Gamma})''q.
\end{align}
Then $\omega$ is a faithful tracial state on $q\pi_0(\caB_{\hat\Gamma})''q$.
%In particular, $q\in \pi_0(\caB_{\Gamma})''$ is a nonzero finite projection in $\pi_0(\caB_{\Gamma})''$.
%Hence $\pi_0(\caB_{\Gamma})''$ is not type III.
\end{lem}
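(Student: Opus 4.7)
The state $\omega$ is a vector state on the von Neumann algebra $M:=q\pi_0(\caB_{\hat\Gamma})''q$ acting on $q\caH_0$, hence automatically normal; the substance of the lemma is traciality and faithfulness.

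For the trace property, I take as input Lemma \ref{trac}, which gives $\varphi_0(Q_NAQ_NBQ_N)=\varphi_0(Q_NBQ_NAQ_N)$ for $A,B\in\caB_{\hat\Gamma_N}$ and $N\ge N_0$. Writing $\varphi_0(\cdot)=\langle\Omega_0,\pi_0(\cdot)\Omega_0\rangle$ and using the strong-operator convergence $\pi_0(Q_N)\to q$ together with $\pi_0(Q_N)\Omega_0=\Omega_0=q\Omega_0$, the vector $\pi_0(Q_N)\pi_0(A)\pi_0(Q_N)\pi_0(B)\pi_0(Q_N)\Omega_0$ converges in norm to $q\pi_0(A)q\pi_0(B)q\Omega_0$. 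Taking the inner product with $\Omega_0$ gives
\[
\omega\bigl((q\pi_0(A)q)(q\pi_0(B)q)\bigr)=\omega\bigl((q\pi_0(B)q)(q\pi_0(A)q)\bigr)
\]
for $A,B\in\bigcup_N\caB_{\hat\Gamma_N}$. The sesquilinear form $(y,z)\mapsto\langle\Omega_0,yz\Omega_0\rangle$ is jointly continuous on bounded sets in the strong operator topology, and by Kaplansky density every element of $M$ is a bounded SOT-limit of operators of the form $q\pi_0(a_\alpha)q$ with $a_\alpha$ local; thus the trace identity extends from local observables to all of $M$.

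For faithfulness, I would identify $q$ with the support projection $s\in N$ of the normal state $y\mapsto\langle\Omega_0,y\Omega_0\rangle$ on $N:=\pi_0(\caB_{\hat\Gamma})''$, characterized as the smallest projection in $N$ fixing $\Omega_0$, or equivalently $s=[N'\Omega_0]$. Once $s=q$ is shown, faithfulness is immediate: given a projection $p\in M$ with $\omega(p)=0$, one has $p\Omega_0=0$, so $1-p\in N$ fixes $\Omega_0$ and therefore $s\le 1-p$, that is $sp=0$; combined with $p\le q=s$, which forces $sp=p$, we conclude $p=0$. The easy inclusion $s\le q$ is clear because each $\pi_0(Q_N)\in N$ satisfies $\pi_0(Q_N)\Omega_0=\Omega_0$, so $s\le\pi_0(Q_N)$ for all $N$ and hence $s\le\inf_N\pi_0(Q_N)=q$.

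The reverse inclusion $s\ge q$, equivalent to $q\caH_0\subseteq\overline{N'\Omega_0}$, is the main obstacle. My approach is to invoke Haag duality $N'=\pi_0(\caB_{\hat\Gamma^c})''$ from \cite{FN}, which rewrites $s=[\pi_0(\caB_{\hat\Gamma^c})\Omega_0]$, together with the identification $\pi_0(Q_N)=[\pi_0(\caB_{\hat\Gamma_N})'\Omega_0]$ coming from the support property of $Q_N$ in $\caB_{\hat\Gamma_N}$. The problem then reduces to showing that $\bigcap_N\overline{\pi_0(\caB_{\hat\Gamma_N})'\Omega_0}=\overline{\pi_0(\caB_{\hat\Gamma^c})\Omega_0}$, a continuity-of-support assertion at the limit $N\to\infty$ that one expects to hold for the quasi-local structure of this spin system but which demands a careful argument using the tensor product structure of the underlying Hilbert space.
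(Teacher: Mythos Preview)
Your traciality argument is essentially the paper's: Lemma~\ref{trac}, pass to the SOT-limit $N\to\infty$, extend by density. That part is fine.

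The faithfulness argument, however, is both incomplete and unnecessarily heavy. You invoke Haag duality to identify $N'=\pi_0(\caB_{\hat\Gamma^c})''$, but this section explicitly works without assuming $G$ abelian, and Haag duality in \cite{FN} is only established in the abelian case; so you are importing a hypothesis the lemma is meant to avoid. Moreover, even granting Haag duality, the crucial step $q\le s$, which you rewrite as the ``continuity-of-support'' identity
\[
\bigcap_N \overline{\pi_0(\caB_{\hat\Gamma_N})'\Omega_0}=\overline{\pi_0(\caB_{\hat\Gamma^c})\Omega_0},
\]
is left as an expectation rather than a proof. Commutants do not in general behave well under increasing unions, so this is not a formality.

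The paper's route is direct and uses only what is already in hand. Given $x\in q\pi_0(\caB_{\hat\Gamma})''q$ with $\omega(x^*x)=0$, traciality gives for every $y\in q\pi_0(\caB_{\hat\Gamma})''q$
\[
\lV xy\Omega_0\rV^2=\omega(y^*x^*xy)=\omega(xyy^*x^*)\le \lV y\rV^2\,\omega(xx^*)=\lV y\rV^2\,\omega(x^*x)=0,
\]
so $xq\pi_0(\caB_{\hat\Gamma})q\Omega_0=0$. Since $x=qxq$ and $q\Omega_0=\Omega_0$, this means $x\pi_0(A)\Omega_0=0$ for all $A\in\caB_{\hat\Gamma}$. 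Now $x\in\pi_0(\caB_{\hat\Gamma})''\subset\pi_0(\caB_{\hat\bbZ^2\setminus\hat\Gamma})'$, so for $B\in\caB_{\hat\bbZ^2\setminus\hat\Gamma}$ one has $x\pi_0(A)\pi_0(B)\Omega_0=\pi_0(B)x\pi_0(A)\Omega_0=0$. Cyclicity of $\Omega_0$ for the full algebra then forces $x=0$. No Haag duality, no identification of $q$ with a support projection, no limit exchange; just the trace property together with the elementary inclusion $\pi_0(\caB_{\hat\Gamma})''\subset\pi_0(\caB_{\hat\Gamma^c})'$.
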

\begin{proof}
We first show for any $A,B\in \caB_{\hat\Gamma,\rm{loc}}:=\cup_{M\in\bbN} \caB_{\hat\Gamma_M,\rm{loc}}$
that
\begin{align}\label{lcv}
\begin{split}
\braket{\Omega_0}{q\pi_0(A)q\pi_0(B)q\Omega_0}
=\braket{\Omega_0}{q\pi_0(B)q\pi_0(A)q\Omega_0}.
\end{split}
\end{align}
From Lemma \ref{trac}, for any $A,B\in \caB_{\hat\Gamma,\rm{loc}}$, we have
\begin{align}
\begin{split}
&\braket{\Omega_0}{\pi_0\lmk A\rmk \pi_0\lmk  {Q_N} \rmk\pi_0\lmk B \rmk\Omega_0}=
\braket{\Omega_0}{\pi_0\lmk {Q_N} A {Q_N} B {Q_N}\rmk\Omega_0}\\
&=\varphi_0\lmk Q_N A Q_N B Q_N\rmk
=\varphi_0\lmk Q_N B Q_N A Q_N\rmk\\
&=\braket{\Omega_0}{\pi_0\lmk {Q_N} B {Q_N} A {Q_N}\rmk\Omega_0}
=\braket{\Omega_0}{\pi_0\lmk  B\rmk\pi_0\lmk {Q_N} \rmk\pi_0\lmk A \rmk\Omega_0}
%\quad A,B\in \caB_{{\hat\Gamma_N}}.
\end{split}
\end{align}
for $N$ large enough.
Taking $N\to\infty$ limit, we obtain
\begin{align}
\begin{split}
\braket{\Omega_0}{\pi_0\lmk A\rmk  q\pi_0\lmk B \rmk\Omega_0}=
\braket{\Omega_0}{\pi_0\lmk  B\rmk q \pi_0\lmk A \rmk\Omega_0},
\end{split}
\end{align}
proving (\ref{lcv}).
From (\ref{lcv}), we have
\begin{align}
\begin{split}
\omega\lmk qxqyq\rmk=\braket{\Omega_0}{qxqyq\Omega_0}
=\braket{\Omega_0}{qyqxq\Omega_0}
=\omega\lmk qyqxq\rmk,
\end{split}
\end{align}
for any $x, y\in \pi_0\lmk  \caB_{\hat\Gamma}\rmk''$.
Hence $\omega$ is a tracial state on $q\pi_0( \caB_{\hat\Gamma})''q$.

In order to show that $\omega$ is faithful, suppose that
 $x\in q\pi_0( \caB_{\hat\Gamma})''q$
satisfies $\omega(x^*x)=0$.
Then for any $y\in q\pi_0( \caB_{\hat\Gamma})''q$, we have
\begin{align}
\begin{split}
\lV xy\Omega_0\rV^2=\omega\lmk y^*x^*x y\rmk
= \omega\lmk x^*x yy^*\rmk= \omega\lmk x yy^*x^*\rmk
\le \lV y \rV^2\omega\lmk x x^*\rmk
=\lV y \rV^2\omega\lmk x^* x\rmk=0.
\end{split}
\end{align}
This means
\begin{align}
xy\Omega_0=0,\quad \text{for all}\quad  y\in q\pi_0( \caB_{\hat\Gamma})''q.
\end{align}
Furthermore, because 
\begin{align}
x\in q\pi_0( \caB_{\hat\Gamma})''q\subset \pi_0( \caB_{\hat\Gamma})''
\subset  \pi_0(\caB_{\hat\bbZ^2\setminus \hat\Gamma})',\quad q\in \pi_0( \caB_{\hat\Gamma})''\subset  \pi_0(\caB_{\hat\bbZ^2\setminus \hat\Gamma})',
\end{align}
we have
\begin{align}
\begin{split}
&x\pi_0\lmk A\rmk\pi_0\lmk B\rmk\Omega_0
=\pi_0\lmk B\rmk x\pi_0\lmk A\rmk\Omega_0
%=xq\pi_0\lmk A\rmk\pi_0\lmk B\rmk q \Omega_0\\
%&=xq\pi_0\lmk A\rmk q\pi_0\lmk B\rmk \Omega_0
=\pi_0\lmk B\rmk  xq\pi_0\lmk A\rmk q\Omega_0=0,\\
&A\in  \caB_{\hat\Gamma}, \quad B\in \caB_{\hat\bbZ^2\setminus \hat\Gamma}.
\end{split}
\end{align}
Hence we obtain $x=0$, proving the faithfulness.
\end{proof}
\begin{lem}
Let $\Gamma$ be a convex cone in $\bbR^2$ with apex $\bm a_\Gamma\in [0,1]\times[0,1]$
given by 
$L_1^\Gamma:=\bm a_\Gamma+\bbR_{\ge 0}\bm e_{\theta_1}$,
$L_2^\Gamma:=\bm a_\Gamma+\bbR_{\ge 0}\bm e_{\theta_2}$,
with $0\le \theta_1<\frac\pi 2$, $0<\theta_2-\theta_1<\pi $.
Then 
$\pi_0\lmk \caB_{\widetilde{\lmk \Gamma\cap \bbZ^2\rmk}}\rmk''$ 
is not type $III$.
\end{lem}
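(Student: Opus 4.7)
The plan is to leverage the faithful normal tracial state on $q\pi_0(\caB_{\hat\Gamma})''q$ provided by the previous lemma to exhibit a nonzero finite projection inside $\pi_0(\caB_{\widetilde{\Gamma\cap\bbZ^2}})''$. Since a type $III$ von Neumann algebra has no nonzero finite projections, this will complete the proof.

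The first step is to observe that $q$ is a nonzero finite projection in $\pi_0(\caB_{\hat\Gamma})''$. It is nonzero because $q\Omega_0=\Omega_0$, and it is finite because $\omega(\cdot)=\braket{\Omega_0}{\cdot\,\Omega_0}$ restricts to a faithful normal tracial state on $q\pi_0(\caB_{\hat\Gamma})''q$, forcing this corner to be a finite von Neumann algebra.

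The second step is to pass from $\hat\Gamma=\widetilde{\Gamma\cap\bbZ^2\cap H^U_{M_0}}$ to the whole $\widetilde{\Gamma\cap\bbZ^2}$. In case (3) one has $M_0=-\infty$, so the two sets coincide and there is nothing more to do. In cases (1) and (2), the constraint $\bm a_\Gamma\in[0,1]\times[0,1]$ together with the angle restrictions $0\le\theta_1<\pi/2$ and $\theta_2\le\pi$ forces the region of $\Gamma\cap\bbZ^2$ lying below height $M_0$ to be bounded, so that the set of extra edges $F:=\widetilde{\Gamma\cap\bbZ^2}\setminus\hat\Gamma$ is finite. Then $\caB_F$ is a finite-dimensional full matrix algebra and $\pi_0(\caB_F)$ is a type $I$ subfactor of $\caB(\caH_0)$ commuting with $\pi_0(\caB_{\hat\Gamma})''$. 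The classical spatial splitting induced by the type $I$ factor $\pi_0(\caB_F)$ produces a tensor product decomposition
\begin{align*}
\pi_0\lmk\caB_{\widetilde{\Gamma\cap\bbZ^2}}\rmk''\;\cong\;\pi_0(\caB_{\hat\Gamma})''\otimes\pi_0(\caB_F),
\end{align*}
under which $q$ corresponds to $q\otimes 1$. Picking a rank-one projection $e\in\pi_0(\caB_F)$, the projection $q\otimes e$ lies in $\pi_0(\caB_{\widetilde{\Gamma\cap\bbZ^2}})''$, is nonzero, and its corner is naturally isomorphic to $q\pi_0(\caB_{\hat\Gamma})''q$, hence finite. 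This is the desired finite projection, so $\pi_0(\caB_{\widetilde{\Gamma\cap\bbZ^2}})''$ is not type $III$.

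The only subtle point is the tensor splitting: a type $I$ factor $N\subset\caB(\caH_0)$ forces $\caH_0\cong\caK\otimes\caH_N$ with $N=1\otimes\caB(\caH_N)$, and any von Neumann subalgebra of $N'$ then sits inside $\caB(\caK)\otimes 1$, giving the stated decomposition of $\pi_0(\caB_{\hat\Gamma})''\vee\pi_0(\caB_F)$. One must also check that finiteness of $q$ survives under this embedding, which is automatic because the reduced algebra by $q\otimes e$ is exactly $q\pi_0(\caB_{\hat\Gamma})''q\otimes \mathbb C$. Once these points are granted, the lemma is a direct packaging of the tracial structure established in the earlier sections.
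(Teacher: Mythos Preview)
Your proof is correct and follows essentially the same approach as the paper: use the faithful tracial state on $q\pi_0(\caB_{\hat\Gamma})''q$ to conclude that $q$ is a nonzero finite projection, then pass from $\hat\Gamma$ to $\widetilde{\Gamma\cap\bbZ^2}$ using that the difference is a finite set of edges. The paper is terser in both places---it verifies finiteness of $q$ by a direct Murray--von Neumann comparison argument rather than invoking the general fact that a faithful normal trace forces finiteness, and for the second step it simply asserts that since the edge sets differ by a finite set the conclusion transfers, whereas you spell out the tensor splitting $\pi_0(\caB_{\widetilde{\Gamma\cap\bbZ^2}})''\cong\pi_0(\caB_{\hat\Gamma})''\otimes\pi_0(\caB_F)$ and exhibit the finite projection $q\otimes e$ explicitly.
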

\begin{proof}
Let $q$ be the projection in $\pi_0( \caB_{\hat\Gamma})''$ given above.
Suppose $p\in \pi_0( \caB_{\hat\Gamma})''$ is a projection
such that $p\sim q$ in $\pi_0( \caB_{\hat\Gamma})''$
and $p\le q$. We have $p=v^*v$, $q=vv^*$ for some $v\in \pi_0( \caB_{\hat\Gamma})''$.
Because $p\le q$, $v=qvq\in q\pi_0( \caB_{\hat\Gamma})''q$.
We then have
\begin{align}
\begin{split}
\omega\lmk q\rmk=\omega(vv^*)=\omega(v^*v)=\omega(p)
\end{split}
\end{align}
and for $q-p\ge 0$ we have
$\omega(q-p)=0$.
By the faithfulness of $\omega$, we have $p=q$.
Hence $q$ is finite in $\pi_0( \caB_{\hat\Gamma})''$.
Because $\pi_0( \caB_{\hat\Gamma})''$ has a non-zero finite projection,
it is not type $III$.
Note the difference between $\hat\Gamma$ and 
$\widetilde{\lmk \Gamma\cap\bbZ^2\rmk}$
is at most finite.
Therefore, $\pi_0\lmk \caB_{\widetilde{\lmk \Gamma\cap \bbZ^2\rmk}}\rmk''$ 
is not type $III$ either.

\end{proof}
The $\frac\pi 2,\pi,\frac{3\pi}{2}$-rotation of 
the above argument
and $\bbZ^2$-translation invariance of the model
allow us to extend the result as follows.

\begin{lem}\label{nanoha}
Let $\Gamma$ be a convex cone in $\bbR^2$.
Then $\pi_0\lmk \caB_{\widetilde{\lmk \Gamma\cap \bbZ^2\rmk}}\rmk''$  
is not type $III$.
\end{lem}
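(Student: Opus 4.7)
The plan is to reduce a general convex cone $\Gamma$ to the case handled by the preceding lemma --- apex in $[0,1]\times[0,1]$ with $0 \le \theta_1 < \pi/2$ and $0 < \theta_2 - \theta_1 < \pi$ --- by exploiting two symmetries of the quantum double model: $\bbZ^2$-translation invariance and $\pi/2$-rotation invariance of the state $\omega_0$.

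First I would establish translation invariance. For each $t \in \bbZ^2$, the shift automorphism $\tau_t$ of $\caB$ satisfies $\tau_t(A_v) = A_{v+t}$ and $\tau_t(B_p) = B_{p+t}$, so $\omega_0 \circ \tau_t$ satisfies the defining conditions of Proposition \ref{fnu} and therefore equals $\omega_0$ by uniqueness. This lifts $\tau_t$ to a unitary implementer $U_t$ on $\caH_0$ intertwining $\pi_0$ with $\pi_0 \circ \tau_t$, giving a spatial $*$-isomorphism
\[
\pi_0\lmk \caB_{\widetilde{\lmk \Gamma \cap \bbZ^2\rmk}}\rmk'' \;\cong\; \pi_0\lmk \caB_{\widetilde{\lmk \lmk \Gamma + t\rmk \cap \bbZ^2\rmk}}\rmk''.
\]
Choosing $t \in \bbZ^2$ appropriately places the apex of the translated cone in $[0,1]\times[0,1]$.

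Next I would construct an automorphism $\rho$ of $\caB$ implementing a $\pi/2$-rotation of the lattice. Because the orientation convention (horizontal edges left-to-right, vertical edges down-to-up) is asymmetric, $\rho$ must be composed with the inversion unitary $V\ket{g} = \ket{g^{-1}}$ on those edges whose orientation is reversed by the rotation; $V$ interchanges the roles of $L_g$ and $R_{g^{-1}}$, so with this correction $\rho$ sends star operators to star operators and plaquette operators to plaquette operators of the rotated lattice. The same uniqueness argument then gives $\omega_0 \circ \rho = \omega_0$, and hence a spatial $*$-isomorphism between the local algebras of $\Gamma$ and of its $\pi/2$-rotate. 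Combining a suitable power of $\rho$ with a translation reduces any convex cone (up to the half-plane edge case $\theta_2-\theta_1=\pi$, which one can address by exhausting the half-plane from inside by strictly smaller cones and checking the local algebra differs only by a finite-dimensional piece) to the configuration covered by the preceding lemma.

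Since ``not of type $III$'' is preserved under $*$-isomorphism of von Neumann algebras, the conclusion follows. The main obstacle I anticipate is verifying that the $\pi/2$-rotation actually lifts to a $*$-automorphism of $\caB$ on account of the asymmetric edge-orientation convention; once this bookkeeping via the inversion unitary $V$ on reoriented edges is carried out carefully, the remainder of the argument is a direct symmetry reduction to the preceding lemma.
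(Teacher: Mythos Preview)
Your overall strategy---reduce an arbitrary convex cone to the normalized configuration of the preceding lemma via $\bbZ^2$-translation and $\pi/2$-rotation symmetry---is exactly what the paper does. The paper states this even more tersely (``the $\frac\pi 2,\pi,\frac{3\pi}{2}$-rotation of the above argument and $\bbZ^2$-translation invariance of the model allow us to extend the result''), and you have correctly identified the one technical point that needs care, namely the edge-orientation bookkeeping in building the rotation automorphism. One could equally well read the paper's phrase ``rotation of the above argument'' as simply rerunning the combinatorics of Sections~\ref{acls}--\ref{res} with coordinates rotated, which sidesteps constructing $\rho$ explicitly; your automorphism approach is cleaner once the bookkeeping is done.

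There is, however, a genuine gap in your handling of the half-plane case $\theta_2-\theta_1=\pi$. Neither of your proposed fixes works: exhausting the half-plane by strictly smaller cones does not transfer ``not type III'' to the limit (an increasing union of non-type-III von Neumann algebras can be type III), and the difference between the edge set of a half-plane and that of any strictly smaller cone is infinite, not finite-dimensional. The correct resolution is that the entire machinery of Section~\ref{res} was set up from the outset with $0<\theta_2-\theta_1\le\pi$ (see the opening paragraph there), so half-planes fall under cases (2) or (3) directly and the tracial-state argument already applies to them; the strict inequality in the hypothesis of the preceding lemma appears to be an oversight rather than a genuine restriction. You should invoke this rather than attempt an approximation argument.
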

%Now let us consider a concave $\Gamma$.
%Furthermore, we can carry out the same argument 
%to 
%\begin{align}
%\hat \bbZ^2\setminus \widetilde {\lmk \Gamma\cap \bbZ^2\rmk},
%\end{align}
%instead of $\tilde \Gamma$.
%Then we obtain 
%\begin{lem}
%Let $\Gamma$ be a concave cone in $\bbR^2$.
%Then $\pi_0\lmk \caB_{\hat \bbZ^2\setminus\tilde \Gamma}\rmk''$  
%is not type $III$.
%\end{lem}

\section{Proof of Theorem \ref{main}}
Now finally we assume 
$G$ to be abelian.

\begin{proofof}[Theorem \ref{main}]
From \cite{FN} Theorem 4.2, there is a nontrivial superselection sector for abelian quautum double model.
Furthermore, Haag duality holds for the abelian quantum double model Theorem 3.1 \cite{FN} .
Therefore, by Lemma 5.5 \cite{ogata2022derivation},  for any cone $\Gamma\subset \bbR^2$,
$\pi_0\lmk \caB_{\widetilde{\lmk \Gamma\cap \bbZ^2\rmk}}\rmk''$
is a type $II_\infty$ factor or a type $III$-factor.
For a convex $\Gamma$, Lemma \ref{nanoha} then implies
that $\pi_0\lmk \caB_{\widetilde{\lmk \Gamma\cap \bbZ^2\rmk}}\rmk''$ is a a type $II_\infty$ factor.
%For a concave $\Gamma$,
%by the Haag duality of the abelian quantum double model Theorem 3.1 \cite{FN},
%we have
%\begin{align}
%\begin{split}
%\pi_0\lmk \caB_{\widetilde{\lmk \Gamma\cap \bbZ^2\rmk}}\rmk''
%=\pi_0\lmk \caB_{\hat \bbZ^2\setminus\tilde \Gamma}\rmk'
%\end{split}
%\end{align}
%is type $II_\infty$.
\end{proofof}
{\bf Acknowledgment.}\\
{The author would like to thank Pieter Naaijkens for kind discussion.
This work was supported by JSPS KAKENHI Grant Number 19K03534 and 22H01127.
It was also supported by JST CREST Grant Number JPMJCR19T2.
}

\bibliographystyle{alpha}
%\addbibresource{MTC} 
\bibliography{type}

\end{document}